
\documentclass[generic,preprint]{imsart}
\usepackage{natbib}
\usepackage{graphicx}
\usepackage{array}
\usepackage{multirow}
\usepackage{amsthm}
\usepackage{amsmath}
\usepackage{amsfonts}
\usepackage{amssymb}
\usepackage{pgf}
\usepackage{bm}
\usepackage{paralist}
\usepackage{pdflscape}
\usepackage{rotating}
\usepackage{scalefnt}
\usepackage{xr}
\usepackage{verbatim}
\usepackage{placeins}
\usepackage{subcaption}
\usepackage{float}

\usepackage{hyperref}
\usepackage{tikz}
\usepackage{ifthen}
\usetikzlibrary{arrows,shapes}
\usepackage{xr}

\setcounter{MaxMatrixCols}{10}

\topmargin1.0cm
\textwidth14.25cm
\textheight20.0cm
\oddsidemargin1cm
\evensidemargin1cm
\thispagestyle{empty}
\parindent0.5cm
\setlength{\abovecaptionskip}{0pt}
\RequirePackage[OT1]{fontenc}
\RequirePackage{amsthm,amsmath}
\RequirePackage[colorlinks,citecolor=blue,urlcolor=blue]{hyperref}
\startlocaldefs
\numberwithin{equation}{section}
\theoremstyle{plain}
\newtheorem{theorem}{Theorem}
\newtheorem{proposition}{Proposition}
\newtheorem{lemma}{Lemma}
\newtheorem{assumption}{Assumption}
\newtheorem{corollary}{Corollary}

\newcommand{\beq}{\begin{equation}}
\newcommand{\eeq}{\end{equation}}
\newcommand{\bea}{\begin{eqnarray}}
\newcommand{\eea}{\end{eqnarray}}
\newcommand{\bit}{\begin{itemize}}
\newcommand{\eit}{\end{itemize}}
\newcommand{\ben}{\begin{enumerate}}
\newcommand{\een}{\end{enumerate}}
\newcommand{\bpm}{\begin{pmatrix}}
\newcommand{\epm}{\end{pmatrix}}
\newcommand{\bbm}{\begin{bmatrix}}
\newcommand{\ebm}{\end{bmatrix}}

\endlocaldefs

\externaldocument{TrapaniWhitehouseSupplement}
\begin{document}

\begin{frontmatter}
\title{Sequential monitoring for cointegrating regressions}
\runtitle{Monitoring cointegration}

\begin{aug}
\author{\fnms{Lorenzo} \snm{Trapani}$^{\dag}$\ead[label=e1]{lorenzo.trapani@nottingham.ac.uk}}\hskip .2cm
\author{\fnms{Emily} \snm{Whitehouse}$^{*}$\ead[label=e2]{emily.whitehouse@newcastle.ac.uk}}
\runauthor{L. Trapani and E. Whitehouse}

\affiliation{University of Nottingham\thanksmark{m1} and Newcastle University \thanksmark{m2}}

\address{$^{\dag}$University of Nottingham\\
\printead{e1}\\
}

\address{$^{*}$Newcastle University\\
\printead{e2}\\
}
\end{aug}

\begin{abstract}
We develop monitoring procedures for cointegrating regressions, testing the null of no breaks against the alternatives that there is either a change in the slope, or a change to non-cointegration. After observing the regression for a calibration sample $m$, we study a CUSUM-type statistic to detect the presence of change during a monitoring horizon $m+1,...,T$. Our procedures use a class of boundary functions which depend on a parameter, $0 \leq \eta \leq \frac{1}{2}$, whose value affects the delay in detecting the possible break. Technically, these procedures are based on almost sure limiting theorems whose derivation is not straightforward. We therefore define a monitoring function which - at every point in time - diverges to infinity under the null, and drifts to zero under alternatives. We cast this sequence in a randomised procedure to construct an \textit{i.i.d.} sequence, which we then employ to define the detector function. Our monitoring procedure rejects the null of no break (when correct) with a small probability, whilst it rejects with probability one over the monitoring horizon in the presence of breaks.
\end{abstract}

\begin{keyword}[class=MSC]
\kwd{62F03}
\kwd{62L10}
\kwd{62M10}
\end{keyword}

\begin{keyword%
}
\kwd{cointegration}
\kwd{structural change}
\kwd{sequential monitoring}
\kwd{randomized tests}
\end{keyword}

\end{frontmatter}

\newpage

\section{Introduction\label{introduction}}

In this paper, we study the following cointegrating regression 
\begin{equation}
y_{i}=\beta ^{\prime }x_{i}+\epsilon _{i}\text{, }1\leq i\leq T,
\label{model-1}
\end{equation}%
where $\left( y_{i},x_{i}^{\prime }\right) ^{\prime }$ is a $(p+1)\times 1$, 
$I\left( 1\right) $ vector and $\epsilon _{i}\ $is a stationary innovation.
In particular, we investigate the issue of monitoring (\ref{model-1}), after
a calibration period of length $m$, during which our maintained assumptions are
that \textit{(i)} (\ref{model-1}) is a cointegrating relationship and 
\textit{(ii)} the slope $\beta $ is constant. From $i=m+1$ onwards, we check
whether the relationship in (\ref{model-1}) remains constant, or whether
either the slope $\beta $ changes, or (\ref{model-1}) becomes a
non-cointegrating regression (or both).

The timely detection of structural change is arguably of great importance in
the context of any regression model: whilst there is an extensive literature
on the general topic of on-line detection of changes (see e.g. %
\citealp{csorgo1997} for a survey), in the econometrics literature this
issue has received some limited attention since the contribution by %
\citet{chu}. Recent articles that study this topic have focused on linear
regression models (\citealp{lajos04}, \citealp{aue2006}, \citealp{lajos07}, %
\citealp{kap-monitor}), large factor models (\citealp{bt1}), and also
cointegrating regressions (\citealp{steland}, \citealp{wied}, %
\citealp{wagner}). In particular, \citet{wied} consider, essentially, the
same problem as in our paper; namely, they propose several statistics for
the on-line detection of structural breaks in a model like (\ref{model-1}),
considering both the possibility of a change in the slope $\beta$ and a
change to a non-cointegrating regression.

From a methodological viewpoint, we use a residual-based detector to test
for the null hypothesis of no change over the monitoring horizon $m+1\leq
i\leq T$. Note that this corresponds to a closed-end procedure (\citealp{aue12}), since monitoring - as can be expected to happen in practice - stops after $T$.
The family of detectors
which we propose here are based on the sum of squared residuals. Simulations show that our monitoring scheme
has excellent finite sample properties, with low occurrence of false detections and very good
power versus both alternatives under consideration. Other detectors are also
possible (see, for example, the various statistics considered in %
\citealp{homm}, albeit in a different context). 

From a technical point of view, as pointed out by \citet{lajos04} and \citet{lajos07}, the detectors employed in monitoring procedures depend upon a parameter,
henceforth denoted as $\eta$, which can vary in the interval $\left[ 0,\frac{%
1}{2}\right] $. Constructing test statistics when $\eta =0$ (see e.g. \citealp{chu}) requires, as a
technical tool, weak convergence, and therefore one can employ a huge
variety of results which are well-known in the literature (see e.g. the book
by \citealp{billingsley}). On the other hand, the choice $\eta =0$ is known
to often yield inferior results, in particular resulting in a longer delay
in detection of a break (\citealp{aue2004}). In order to overcome this issue, it is usually
recommended to choose $\eta >0$ (\citealp{lajos07}). However, from a
technical viewpoint, using $\eta >0$ requires having stronger forms of
convergence than weak convergence, with fewer results available (we refer to
the textbook by \citealp{csorgo1997} for an excellent treatment of the
subject). For example, to the best of our knowledge we
are not aware of strong approximations like the ones derived in \citet{kmt1}
and \citet{kmt2} for convergence to stochastic integrals, where usually
\textquotedblleft weak\textquotedblright\ results are used instead (see %
\citealp{chanwei}; and \citealp{phillipsweak}). In light of this, we only
rely on (almost sure) rates, and we develop a family of statistics -
computed at each $m+1\leq i\leq T$ - which diverge to positive infinity
under the null of no break, whilst they drift to zero in the presence of
breaks. We then randomize such statistics at each point in time $i $: the
outcome of our randomisation is a sequence of random variables which, under
the null of no break, are \textit{i.i.d.} with finite moments up to any
order, whilst they diverge to infinity in the presence of a break. Finally,
we employ the newly generated sequence in order to construct the same
detectors as in \citet{lajos04} and \citet{lajos07}, being able to rely on
the theory spelt out in those papers. Using randomisation is helpful when
the properties of a certain statistic are not known, or depend on nuisance
parameters: in this respect, it might be envisaged that randomisation serves
a similar purpose to the bootstrap or to self-normalisations (see \citealp{dette2019likelihood} for an example of self-normalisation in the context of monitoring). In the econometric literature,
randomisation has been employed in a wide variety of contexts, including
testing for forecasting superiority (\citealp{corradi2006}), stationarity (%
\citealp{bandi2014}), finiteness of moments (\citealp{trapani16}), boundary
problems (\citealp{HT16}) and determining the number of common factors in a
large factor models (\citealp{trapani17}). In our context, however, we do
not employ randomisation to produce a randomised test, but to construct a
\textquotedblleft well-behaved\textquotedblright\ sequence which, in turn,
can be employed to define an easy-to-study test statistic. In this respect,
our contribution uses the same approach as \citet{bt1}, who study monitoring
for structural change in the context of a large, stationary factor models.
By relying solely on rates, we require quite mild assumptions; all the theory can be based on using a
standard OLS\ estimator, with no need for more specialised estimators like,
say, the FM-OLS estimator (\citealp{phillips-hansen}) or a Dynamic OLS
estimator (\citealp{saikkonen}); and, finally, we do not need to rely on the accuracy of the long-run
variance estimator.

The remainder of the paper is organised as follows. In Section \ref{theory},
we provide the relevant assumptions, and then report theoretical results on
estimation and the monitoring procedure. Extensions to e.g. the case of deterministics are in Section \ref{extensions}. In Section \ref{montecarlo} we
demonstrate the performance of our monitoring procedure through both a Monte
Carlo simulation exercise (Section \ref{simulations}) and an empirical
application to US housing market data (Section \ref{empirics}). Section \ref%
{conclusions} concludes. Proofs of the main results are in Section \ref{proofs}. All technical lemmas and some proofs are relegated to
the Supplement. 

Throughout the paper we use the notation $c_{0}$, $c_{1}$,... to denote
positive and finite constants, that do not depend on the sample size; their
value is allowed to change from line to line. We use the expression
\textquotedblleft a.s.\textquotedblright\ as short-hand for
\textquotedblleft almost surely\textquotedblright ; the ordinary limit is
denoted as \textquotedblleft $\rightarrow $\textquotedblright . Finally, for
a vector $a $ and a matrix $A$, $\left\Vert a\right\Vert $ and $\left\Vert
A\right\Vert $ represent the Euclidean norm. Other notation is introduced
later on in the paper.

\section{Theory\label{theory}}

We begin with introducing some notation and the main assumptions that should
hold under the null of no break (Section \ref{a-h0}); we then move to
discuss the two alternative hypotheses which we consider, namely a change in
the slope and/or a change to a non-cointegrating equation (Section \ref%
{monitor}). Finally, in Section \ref{sequential}, we discuss the relevant
CUSUM process, and the randomisation algorithm. 

\subsection{Main assumptions\label{a-h0}}

Recall (\ref{model-1})%
\begin{equation*}
y_{i}=\beta ^{\prime }x_{i}+\epsilon _{i},
\end{equation*}%
which we assume to be valid during the calibration period $1 \leq i \leq m$, with 
\begin{equation}
x_{i}=x_{i-1}+u_{i}.  \label{model-x}
\end{equation}%
We also define the long run variances of $u_{i}$ and $\epsilon _{i}$ as 
\begin{align}
\Sigma _{u}& =\lim_{m\rightarrow \infty }E\left( \frac{1}{\sqrt{m}}%
\sum_{i=1}^{m}u_{i}\right)\left( \frac{1}{\sqrt{m}}\sum_{i=1}^{m}u_{i}%
\right)^{\prime },  \label{sigma-u} \\
\sigma _{\epsilon }^{2}& =\lim_{m\rightarrow \infty }Var\left( \frac{1}{%
\sqrt{m}}\sum_{i=1}^{m}\epsilon _{i}\right) .  \label{sigma-e}
\end{align}
We consider the following assumption.

\begin{assumption}
\label{as-2}It holds that: (i) $\epsilon _{i}$ and $u_{i}$ have mean zero
with (a) $E\left\vert \epsilon _{i}\right\vert ^{2}<\infty $ for $1\leq
i\leq T$, and $0<\sigma _{\epsilon }^{2}<\infty $; and (b) $\Sigma _{u}$ is
positive definite with $\left\Vert \Sigma _{u}\right\Vert $; (ii) $%
E\left\Vert x_{0}\right\Vert ^{2}<\infty $ and 
\begin{equation}
\sup_{1\leq i\leq t}\left\Vert x_{i}-W_{x}\left( i\right) \right\Vert
=O_{a.s.}\left( t^{1/2-\delta ^{\prime }}\right) ,  \label{sip-x}
\end{equation}%
for some $0<\delta ^{\prime }<\frac{1}{2}$, where $W_{x}\left( i\right) $ is
a $p$-dimensional Wiener process with increments of variance $\Sigma _{u}$;
(iii) $E\left\Vert \sum_{i=1}^{t}x_{i}\epsilon _{i}\right\Vert ^{2}\leq
c_{0}t^{2}$, for all $1\leq t\leq T$; (iv) $E\left\Vert
\sum_{i=1}^{t}x_{i}x_{i}^{\prime }\right\Vert \leq c_{0}t^{2}$, for all $%
1\leq t\leq T$.
\end{assumption}

Assumption \ref{as-2}\textit{(i)} is a standard second moment condition
which is required to hold under the null of no change, and also when the
slope $\beta $ changes but (\ref{model-1}) remains a cointegrating
relationship. Note that, by part \textit{(i)}(b), we rule out cointegration among the regressors. Part \textit{(ii)} of the assumption, in essence, states that
a strong approximation exists for the partial sums process $x_{i}$. This is
a high-level assumption, which could be replaced by more primitive
requirements on the existence of moments for the innovation $u_{i}$, and
some form of weak dependence. It can be envisaged, as far as moments are
concerned, that at least $E\left\Vert u_{i}\right\Vert ^{2+\delta }<\infty $
is required for some $\delta >0$; thence, (\ref{sip-x}) would follow
immediately if $u_{i}$ is \textit{i.i.d.} (see \citealp{kmt1} and %
\citet{kmt2} for the univariate case, and \citealp{gotze} for the
multidimensional one), and also under fairly general forms of weak
dependence such as the case of stationary causal processes including linear
models, Volterra series and models with conditional heteroskedasticity (see %
\citealp{wu2005}, and \citealp{berkesliuwu}). Interestingly, in the
literature it is relatively common to assume a weak Invariance Principle to
hold in lieu of assuming weak dependence (see e.g. Assumption 2 in %
\citealp{wied}). Part \textit{(ii)} of Assumption \ref{as-2} serves exactly
the same purpose, except for the fact that in our paper we need almost sure
rates. Parts \textit{(iii)} and \textit{(iv)} could also be derived under
more primitive conditions on moments, serial dependence, and possible
correlation between $u_{i}$ and $\epsilon _{i}$. For example, the results
could be shown by standard arguments in the case of $u_{i}$ and $\epsilon
_{i}$ being \textit{i.i.d.} and independent of each other; in this case,
existence of second moments would suffice. Part \textit{(iv)} can be shown
under more general forms of dependence, e.g. in the case of linear processes
by exploiting the results in \citet{solo}. Also, in part \textit{(iii)}, the
requirement of independence between $u_{i}$ and $\epsilon _{i}$ is not
necessary: again under the assumption of linear processes, for example, it
could be shown (see, \textit{inter alia}, \citealp{durlauf}, and %
\citealp{phillips-hansen}) that this part of the assumption can hold also in
the presence of endogeneity.

\subsection{Hypotheses of interest and the construction of the monitoring
procedure\label{monitor}}

We base our on-line monitoring on the theory developed in \citet{lajos04}
and \citet{lajos07}. We assume that the data are collected for an initial
calibration period of size $m$ where no break occurs; this can be viewed as
the historic sample available to the researcher. We then define the (length
of) the monitoring horizon $T_{m}$ as $T_{m}=T-m$. Thus, if $T$ represents
the total period considered, $m$ is the amount of time elapsed until the
beginning of the monitoring period. In essence, $m$ is going to be the
sample size used by the researcher for estimation. Choosing $T_{m}$ - that
is, choosing where to stop the monitoring - is an important issue in
sequential analysis, since it can be argued that monitoring comes at a cost
(see the original paper by \citealp{wald}); in this paper, we allow for $%
T_{m}\rightarrow \infty $, under the assumption that monitoring is costless
- this assumption can be realistic when analysing economic series, although
not in other contexts (see e.g. the comments in \citealp{chu}).

\subsubsection{Alternative hypotheses of interest\label{subsub}}

Under the null hypothesis of our monitoring scheme, (\ref{model-1}) is a
cointegrating relationship for the whole monitoring horizon, and the slope $%
\beta $ does not change; rewriting (\ref{model-1}) as
\begin{equation*}
y_{i}=\beta_{i} ^{\prime }x_{i}+\epsilon _{i},
\end{equation*}
we have
\begin{equation}
H_{0}:\left\{ 
\begin{array}{l}
\beta _{i}=\beta \\ 
\epsilon _{i}\text{ is stationary}%
\end{array}%
\right. \text{ for }1\leq i\leq T_{m}.  \label{null}
\end{equation}%
Conversely, when the null does not hold, there could be at least two
interesting, non mutually exclusive alternatives. In the first case, there
could be a structural change whereby, after $i=m$, $\beta $ changes:%
\begin{equation}
H_{A,1}:\beta _{i}=\beta +\Delta _{\beta }I\left[ i>k^{\ast }\right] .
\label{break}
\end{equation}%
In (\ref{break}), $m\leq k^{\ast }<T$ is the potential breakdate. In
addition to this (or as an alternative), (\ref{model-1}) may switch to being
a non-cointegrating relationship at some point in time, viz.%
\begin{equation}
H_{A,2}:\epsilon _{i}=\epsilon _{i-1}+u_{i}^{\epsilon }\text{ for }k^{\ast
}+1\leq i\leq T.  \label{spurious}
\end{equation}%
In both cases, the case of no break is represented by having $k^{\ast }=T$.

As a general comment to our hypothesis testing framework, we point out that
the set-up in (\ref{null})-(\ref{spurious}) mirrors the analysis in %
\citet{wied} very closely. In particular, the null hypothesis is the
intersection of two (very different) requirements: \textit{(a)} the fact
that there is no time variation in the structural parameter $\beta $ in (\ref%
{model-1}) over the monitoring horizon, under the implicitly maintained
hypothesis that (\ref{model-1}) is always a cointegrating regression; and 
\textit{(b)} the fact that (\ref{model-1}) is indeed a cointegrating
relationship during the monitoring horizon. This could be the the set-up of
interest in various applications (see e.g. Section \ref{empirics});
furthermore, an \textquotedblleft omnibus\textquotedblright\ procedure which
is powerful versus a global alternative could be viewed as advantageous in
order to avoid having to test under a maintained hypothesis whose validity
may not always be assumed. On the other hand, the monitoring procedure
proposed in this paper (and in \citealp{wied}) can be argued to be
\textquotedblleft non-constructive\textquotedblright : after rejecting the
null and finding evidence of a change in the nature of (\ref{model-1}), it
is not clear which of the two alternatives the change can be ascribed to. In
the literature, there are tests for more focussed alternatives which could,
in principle, be extended into monitoring procedures. For example, under the
maintained assumption that $\epsilon _{i}$ is stationary over the whole
monitoring horizon, one could think of extending the test for breaks in
cointegrating regressions proposed by \citet{kejriwal}. Similarly, under the
maintained assumption that $\beta $ is constant for the whole interval $%
1\leq i\leq T$, a monitoring procedure could, in principle, be constructed
using the residuals $\widehat{\epsilon }_{i}$, e.g. by extending the test
for a change in persistence proposed by \citet{busetti}. Indeed, under the
same maintained hypotheses mentioned above, our procedure can also be
employed to test, separately, versus the two alternatives mentioned above.
In this respect, the monitoring scheme proposed in this paper could be
viewed as a preliminary step: upon finding evidence that a change occurred,
the researcher may decide to use a more specialised procedure to disentangle
the nature of the change in (\ref{model-1}).

\bigskip

In order to analyse the case of (\ref{spurious}), we need the following
assumption which characterizes the behaviour of $\epsilon _{i}$ under $%
H_{A,2}$.

\begin{assumption}
\label{ha-2}Under $H_{A,2}$, it holds that (i) 
\begin{equation}
\sup_{k^{\ast }+1\leq i\leq t}\left\vert \epsilon _{i}-W_{\epsilon }\left(
i\right) \right\vert =O_{a.s.}\left( t^{1/2-\delta ^{\prime \prime }}\right)
,  \label{sip-2}
\end{equation}%
for all $k^{\ast }+1\leq t\leq T$ and some $0<\delta ^{\prime \prime }<\frac{%
1}{2}$, where $W_{\epsilon }\left( i\right) $ is a Wiener process with
increments of positive variance equal to the long-run variance of $%
u_{i}^{\epsilon }$; (ii) 
\begin{equation*}
E\left\vert \sum_{i=k^{\ast }+1}^{t}\epsilon _{i}^{2}\right\vert \leq
c_{0}t^{2},
\end{equation*}%
for all $k^{\ast }+1\leq t\leq T$.
\end{assumption}

Assumption \ref{ha-2} supersedes parts \textit{(i)} and \textit{(iii)} of
Assumption \ref{as-2} in order to accommodate for the presence of a switch
to a non-cointegrating regression. According to the assumption, in essence,
after the breakdate $k^{\ast }$ the innovation $\epsilon _{i}$ becomes a
unit root process.

\subsubsection{The monitoring function\label{monitor-function}}

Our monitoring scheme is based on a non-recursive estimator of $\beta $:
estimation is carried out using the sample $1\leq i\leq m$ once and for all,
without updating the estimate as $i$ elapses. We focus only on this merely
for the sake of a concise discussion: this choice is not the only possible
one. \citet{lajos04}, \textit{inter alia}, propose a recursive monitoring
procedure (as well as a non-recursive one), where $\beta $ is estimated at
each $i$ using an expanding sample. It seems reasonable to conjecture that,
even in our context, the non-recursive scheme is probably likely to be less
affected by outliers, thus ensuring a better size control, whilst the
recursive procedure should be, by design, more sensitive to breaks.

Let%
\begin{equation}
\widehat{\beta }_{m}=\left( \sum_{i=1}^{m}x_{i}x_{i}^{\prime }\right)
^{-1}\sum_{i=1}^{m}x_{i}y_{i},  \label{beta}
\end{equation}%
where dependence on the sample size $m$ will be omitted whenever possible,
and define the residuals%
\begin{equation}
\widehat{\epsilon }_{i}=y_{i}-\widehat{\beta }_{m}^{\prime }x_{i}=\epsilon
_{i}+\left( \beta -\widehat{\beta }_{m}\right) ^{\prime }x_{i},
\label{residual}
\end{equation}%
for $m+1\leq i\leq T$ onwards. At each $k$, we define the cumulative process%
\begin{equation}
Q\left( m;k\right) =\left\vert \frac{1}{\widehat{\sigma }_{\epsilon }^{2}}%
\sum_{i=m+1}^{m+k}\widehat{\epsilon }_{i}^{2}\right\vert ,  \label{q-2}
\end{equation}%
for $1\leq k\leq T_{m}$.

\subsubsection{Estimation of $\protect\sigma _{\protect\epsilon }^{2}$\label%
{lr}}

In (\ref{q-2}), $\widehat{\sigma }_{\epsilon }^{2}$ is an estimator of $%
\sigma _{\epsilon }^{2}$. In our paper, we use a weighted-sum-of-covariance
estimator. In order to apply our theory, we need to show the almost sure
convergence of $\widehat{\sigma }_{\epsilon }^{2}$ to a positive limit;
thus, this section of our paper can be compared to \citet{berkesbartlett}.

Let $\rho _{l}^{\left( \epsilon \right) }$ denote the $l$-th order
autocovariance of $\epsilon _{i}$, i.e. $\rho _{l}^{\left( \epsilon \right)
}=E\left( \epsilon _{i}\epsilon _{i-l}\right) $. This can be estimated as%
\begin{equation}
\widehat{\rho }_{l}^{\left( \epsilon \right) }=\frac{1}{m}\sum_{i=l+1}^{m}%
\widehat{\epsilon }_{i}\widehat{\epsilon }_{i-l}.  \label{rho-e}
\end{equation}%
Based on (\ref{rho-e}), we define%
\begin{equation}
\widehat{\sigma }_{\epsilon }^{2}=\widehat{\rho }_{0}^{\left( \epsilon
\right) }+2\sum_{l=1}^{H}\left( 1-\frac{l}{H+1}\right) \widehat{\rho }%
_{l}^{\left( \epsilon \right) }.  \label{sig-e-hat}
\end{equation}%

Let $y_{i,l}^{\left( \epsilon \right) }=\epsilon _{i}\epsilon _{i-l}-\rho
_{l}^{\left( \epsilon \right) }$. We need the following regularity conditions

\begin{assumption}
\label{lrv}It holds that: (i) $\epsilon _{i}$ is covariance stationary with $%
E\left\vert \epsilon _{i}\right\vert ^{4}<\infty $ for all $i$; (ii) $%
\sum_{l=0}^{\infty }l\left\vert \rho _{l}^{\left( \epsilon \right)
}\right\vert <\infty $; (iii) $E\left\vert \sum_{i=l+1}^{m}y_{i,l}^{\left(
\epsilon \right) }\right\vert ^{2}\leq c_{0}m $.
\end{assumption}

It holds that

\begin{proposition}
\label{long-run}We assume that Assumptions \ref{as-2}-\ref{lrv} are
satisfied. As $\min \left( m,H\right) \rightarrow \infty $%
\begin{equation}
\widehat{\sigma }_{\epsilon }^{2}=\sigma _{\epsilon }^{2}+o_{a.s.}\left( 
\frac{H}{m^{1/2}}\left( \ln m\right) ^{3+\varepsilon }\left( \ln \ln
m\right) \left( \ln H\right) ^{2+\varepsilon }\right) +O\left( \frac{1}{H}%
\right) ,  \label{sig-e}
\end{equation}%
for every $\varepsilon >0$.
\end{proposition}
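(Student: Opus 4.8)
The plan is to split the error of $\widehat{\sigma}_{\epsilon}^{2}$ into a deterministic bias, a stochastic fluctuation, and an error due to using residuals in place of true innovations, and to treat each on a different footing. Let $\widetilde{\sigma}_{\epsilon}^{2}$ denote the \emph{infeasible} estimator obtained by replacing $\widehat{\epsilon}_{i}$ with $\epsilon_{i}$ throughout (\ref{rho-e})--(\ref{sig-e-hat}), and write
\[
\widehat{\sigma}_{\epsilon}^{2}-\sigma_{\epsilon}^{2}
=\big(\widehat{\sigma}_{\epsilon}^{2}-\widetilde{\sigma}_{\epsilon}^{2}\big)
+\big(\widetilde{\sigma}_{\epsilon}^{2}-E\widetilde{\sigma}_{\epsilon}^{2}\big)
+\big(E\widetilde{\sigma}_{\epsilon}^{2}-\sigma_{\epsilon}^{2}\big).
\]
I expect the bias block to supply the $O(1/H)$ in (\ref{sig-e}), the stochastic block to supply the $o_{a.s.}(\cdot)$ rate, and the residual block to be strictly dominated. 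The bias is the quickest: using $E\widetilde{\rho}_{l}^{(\epsilon)}=\frac{m-l}{m}\rho_{l}^{(\epsilon)}$ and $\sigma_{\epsilon}^{2}=\rho_{0}^{(\epsilon)}+2\sum_{l\ge1}\rho_{l}^{(\epsilon)}$, the quantity $E\widetilde{\sigma}_{\epsilon}^{2}-\sigma_{\epsilon}^{2}$ collects the Bartlett-weight defect $-\frac{2}{H+1}\sum_{l=1}^{H}l\,\rho_{l}^{(\epsilon)}$, the truncation tail $-2\sum_{l>H}\rho_{l}^{(\epsilon)}$, and an $O(1/m)$ remainder from the $\frac{m-l}{m}$ factors; Assumption \ref{lrv}(ii) makes the first $O(1/H)$ and the tail $o(1/H)$, giving the stated $O(1/H)$ exactly.

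The heart of the argument is the stochastic block. With $w_{0}=1$, $w_{l}=2(1-\frac{l}{H+1})$ and $y_{i,l}^{(\epsilon)}$ as in the statement, one has $\widetilde{\sigma}_{\epsilon}^{2}-E\widetilde{\sigma}_{\epsilon}^{2}=\frac{1}{m}V_{m,H}$ where $V_{m,H}=\sum_{l=0}^{H}w_{l}\sum_{i=l+1}^{m}y_{i,l}^{(\epsilon)}$. The crucial observation is that, up to finitely many edge terms, $V_{\cdot,H}$ is the partial-sum process of the \emph{single} stationary sequence $\xi_{i}=\sum_{l=0}^{H}w_{l}y_{i,l}^{(\epsilon)}$, so that its increments satisfy $V_{n,H}-V_{n',H}=\sum_{i=n'+1}^{n}\xi_{i}$. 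Handling all lags jointly in this way is what keeps the variance linear in $H$: by covariance stationarity (Assumption \ref{lrv}(i)), the diagonal bound of Assumption \ref{lrv}(iii), and Cauchy--Schwarz for the off-diagonal cross-lag covariances (finite thanks to the fourth-moment condition \ref{lrv}(i)), I would show $E\,|V_{n,H}-V_{n',H}|^{2}\le c_{0}H^{2}(n-n')$, hence $E\,V_{n,H}^{2}\le c_{0}H^{2}n$ --- \emph{linear}, not cubic, in $H$. A M\'oricz/Rademacher--Menshov maximal inequality then upgrades this to $E\max_{n\le m}V_{n,H}^{2}\le c_{0}H^{2}m(\ln m)^{2}$, the extra $(\ln m)^{2}$ being the price of passing to the running maximum.

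With the maximal bound in hand, the almost sure rate follows from Borel--Cantelli applied along \emph{dyadic grids in both arguments}, $m_{k}=2^{k}$ and $H_{j}=2^{j}$. Markov's inequality at thresholds $\lambda_{k,j}^{2}\asymp H_{j}^{2}m_{k}(\ln m_{k})^{3+\varepsilon}(\ln\ln m_{k})(\ln H_{j})^{2+\varepsilon}$ gives tail probabilities of order $(\ln m_{k})^{-1-\varepsilon}(\ln\ln m_{k})^{-1}(\ln H_{j})^{-2-\varepsilon}$ once the $(\ln m)^{2}$ from the maximal inequality has cancelled two powers of $\ln m_{k}$; these are summable over $(k,j)$. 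Intermediate values of $m$ are covered by the running maximum, while intermediate values of $H$ are filled in through a block estimate for $V_{m,H}-V_{m,H_{j}}$ (needed because $V_{m,H}$ is not monotone in $H$). Dividing by $m$ returns exactly $o_{a.s.}\big(\frac{H}{m^{1/2}}(\ln m)^{3+\varepsilon}(\ln\ln m)(\ln H)^{2+\varepsilon}\big)$; the logarithmic exponents in (\ref{sig-e}) are precisely the slack that makes this double sum converge.

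It remains to absorb the residual block. Expanding $\widehat{\epsilon}_{i}\widehat{\epsilon}_{i-l}-\epsilon_{i}\epsilon_{i-l}$ with $\delta:=\beta-\widehat{\beta}_{m}$ leaves a cross term $\delta'(\epsilon_{i}x_{i-l}+\epsilon_{i-l}x_{i})$ and a quadratic term $\delta'x_{i}x_{i-l}'\delta$. Super-consistency of OLS (giving $\|\delta\|$ of order $m^{-1}$ up to logarithmic factors, from the earlier estimation theory together with Assumptions \ref{as-2}(iii)--(iv)), combined with $E\|\sum_{i}x_{i}\epsilon_{i}\|^{2}\le c_{0}m^{2}$ and $E\|\sum_{i}x_{i}x_{i}'\|\le c_{0}m^{2}$, makes each lag contribute $O_{a.s.}(m^{-1})$ up to logs; summing over the $H+1$ lags yields $\widehat{\sigma}_{\epsilon}^{2}-\widetilde{\sigma}_{\epsilon}^{2}=O_{a.s.}\big(\frac{H}{m}\big)$ up to logs, which is $o(\frac{H}{m^{1/2}})$ and hence does not enlarge the rate. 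The main obstacle is the stochastic block, and within it the bookkeeping that simultaneously keeps the $H$-dependence linear --- achieved by reorganising the double lag sum as a partial sum of the single sequence $\xi_{i}$, so that one maximal inequality covers all lags at once --- and secures the almost sure statement \emph{uniformly} as $H$ grows, which forces the delicate two-parameter (dyadic-in-$m$, dyadic-in-$H$) Borel--Cantelli and the non-monotone interpolation across $H$-blocks; establishing the clean bound $E\,V_{n,H}^{2}\le c_{0}H^{2}n$ is itself nontrivial, since Assumption \ref{lrv}(iii) controls only the diagonal lags and the off-diagonal covariances must be absorbed by Cauchy--Schwarz.
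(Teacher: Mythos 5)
Your overall architecture -- bias term giving $O(1/H)$, stochastic fluctuation of the infeasible estimator controlled by a Cauchy--Schwarz bound across lags plus a M\'oricz-type maximal inequality and a two-parameter Borel--Cantelli argument, and a separate residual-correction block -- is essentially the route the paper takes (the paper's Lemma A.1 is exactly your two-index Borel--Cantelli device, and its bound $E\bigl|\sum_{l}\kappa(l)(\widetilde{\rho}_{l}^{(\epsilon)}-\rho_{l}^{(\epsilon)})\bigr|^{2}\leq c_{0}H^{2}/m$ is your $E\,V_{m,H}^{2}\leq c_{0}H^{2}m$ after normalisation). The bias and stochastic blocks are fine.

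The gap is in the residual block, specifically the cross term $(\beta-\widehat{\beta}_{m})\sum_{i}x_{i}\epsilon_{i-l}$. Your claim that each lag contributes $O_{a.s.}(m^{-1})$ up to logs, so that the whole block is $O_{a.s.}(H/m)$ and ``strictly dominated,'' requires $\sum_{i}x_{i}\epsilon_{i-l}=O_{a.s.}(m)$ up to logs \emph{uniformly in} $l\leq H$, i.e.\ a second-moment bound $E\bigl\vert\sum_{i}x_{i}\epsilon_{i-l}\bigr\vert^{2}\leq c_{0}m^{2}$ at every lag. Assumption \ref{as-2}\textit{(iii)} provides this only for $l=0$; nothing in the assumptions extends it to lagged cross-products. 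The paper instead uses the crude first-moment estimate $E\bigl\vert\sum_{l}\kappa(l)\sum_{i}x_{i}\epsilon_{i-l}\bigr\vert\leq\sum_{l}\sum_{i}(Ex_{i}^{2})^{1/2}(E\epsilon_{i-l}^{2})^{1/2}\leq c_{0}m^{3/2}H$, which after Lemma \ref{stou} and multiplication by $\frac{1}{m}\vert\beta-\widehat{\beta}_{m}\vert=o_{a.s.}\bigl(m^{-2}(\ln m)^{1+\varepsilon}\ln\ln m\bigr)$ yields $o_{a.s.}\bigl(\frac{H}{m^{1/2}}(\ln m)^{3+\varepsilon}(\ln\ln m)(\ln H)^{2+\varepsilon}\bigr)$. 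So this cross term is not a dominated remainder: under the stated assumptions it is precisely the term that fixes the rate and the logarithmic exponents in (\ref{sig-e}) (your stochastic block only needs $(\ln m)^{1+\varepsilon}(\ln H)^{1+\varepsilon}$). The proposition you are proving survives, because the cruder bound lands exactly on the stated rate, but your proof as written asserts an unjustified $O_{a.s.}(H/m)$ bound and misidentifies which block is binding; you should either add a lagged-product moment assumption or replace that step with the first-moment Cauchy--Schwarz estimate.
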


In Proposition \ref{long-run}, a crucial role is played by the bandwidth $H$%
. In order to ensure consistency, (\ref{sig-e}) requires that $H\rightarrow \infty$ and 
\begin{equation*}
\frac{H}{m^{1/2}}\left( \ln m\right)
^{3+\varepsilon }\left( \ln \ln m\right) \left( \ln H\right) ^{2+\varepsilon
}\rightarrow 0,
\end{equation*}%
as $m\rightarrow \infty $.

\subsection{The monitoring scheme\label{sequential}}

The main idea underpinning (\ref{q-2}) is that, by construction, $Q\left(
m;k\right) $ should pick up the presence of a break, which would introduce a
drift in its fluctuations. In order to check whether $Q\left( m;k\right) $
is growing \textquotedblleft naturally\textquotedblright , i.e. without
breaks, or not, we introduce the function%
\begin{equation}
g\left( m;k\right) =\left[ \left( m+k\right) +\left( \frac{m+k}{m}\right)
^{2}\right] ^{1+\gamma },  \label{g-1}
\end{equation}%
where the choice of $\gamma $ depends on the length of the monitoring
horizon. Heuristically, the function $g\left( m;k\right) $ should control
the growth rate of $Q\left( m;k\right) $: this is driven by a term
proportional to the cumulative sum of $\epsilon _{i}^{2}$ - which is
controlled by $m+k$ in (\ref{g-1}) - and one proportional to the cumulative
sum of $x_{i}^{2}$, multiplied by the (square of the) estimation error $%
\beta -\widehat{\beta }_{m}$ - which is controlled by the term $\left( \frac{%
m+k}{m}\right) ^{2}$ in (\ref{g-1}).

\begin{assumption}
\label{horizon}It holds that: (i) $T_{m}=c_{0}m^{\theta }$ for
some $\theta >1$ and $0 < c_{0} < \infty$; (ii) if $k^{\ast } < T$, $k^{\ast }=O\left( m^{\theta
^{\prime }}\right) $ with $0\leq \theta ^{\prime }<\theta $; (iii) $\lim
\inf_{m\rightarrow \infty }\frac{T_{m}}{m}>0$.
\end{assumption}

Assumption \ref{horizon} states that the monitoring horizon should go on for
a sufficiently long time (part \textit{(i)}), and obviously include the
breakdate if there is a break (part \textit{(ii)}). In particular, part 
\textit{(i)}, with its implications, is very similar to equation (1.12) in \citet{lajos07}, who also consider the case where monitoring goes on for an infinite time (unless a change is detected). 

In practice, $\theta $ is also a given parameter, which is calculated from
Assumption \ref{horizon}\textit{(i)}, once $m$ and $T_{m}$ have been set.
Hence, $\gamma $ is calculated according to the rule 
\begin{equation}
\gamma =\frac{1-\delta }{\theta -1},  \label{gamma}
\end{equation}%
where $\delta $ is chosen as $0<\delta <1$. In principle, any value of $%
\delta $ will ensure the validity of the theory below. In essence, $\gamma$ is chosen as a fraction of $\frac{1}{\theta-1}$; clearly, choosing $\delta$ close to $1$ yields a small $\gamma$, which in turn makes the divergence of $g\left( m;k\right) $ as $m \rightarrow \infty$ slower than in the case of a $\delta$ closer to zero. We discuss the practical impact of the choice of $\delta $ (and $\gamma $) on the ability of the
monitoring procedure to detect breaks in Section \ref{local-alt}. \newline
The function $g\left( m;k\right) $ has been chosen so as to distinguish the
growth rate that $Q\left( m;k\right) $ should have if there were no break,
from the rate at which it would diverge if there were a break.
Heuristically, in absence of breaks, $Q\left( m;k\right) $ should grow, but
slower than $g\left( m;k\right) $; on the other hand, if there is a break,
its presence in the residuals $\widehat{\epsilon }_{i}$ should make $Q\left(
m;k\right) $ grow at a faster pace, and faster than $g\left( m;k\right) $
itself. We point out that the term $\left( m+k\right) $ in (\ref{g-1}) is a
rather coarse estimate, and in principle it could be refined; however, this
term is anyway dominated by the second component of $g\left( m;k\right) $,
and (\ref{g-1}) yields very good results in simulations.

Define 
\begin{equation}
\psi _{m,k}=\frac{Q\left( m;k\right) }{g\left( m;k\right) }.  \label{psi}
\end{equation}%
Based on the above, we expect that $\psi _{m,k}$ drifts to zero as $m$ and $%
T_{m}$ diverge if there is no break, whereas it should explode if there is a
break; note that we only consider rates. Indeed, in order to separate such
rates even better, we use the transformation%
\begin{equation}
\widetilde{\psi }_{m,k}=\exp \left( \frac{1}{\psi _{m,k}}\right) -1.
\label{psi-1}
\end{equation}%
By construction, $\widetilde{\psi }_{m,k}$ has the opposite behaviour as $%
\psi _{m,k}$: it can be expected that $\widetilde{\psi }_{m,k}$ drifts to
zero in the presence of a break (that is, under the alternative);
conversely, it should diverge to positive infinity if there is no break
(that is, under the null). Indeed, in the Appendix, we prove that, as $%
m\rightarrow \infty $%
\begin{align*}
& P\left\{ \omega :\widetilde{\psi }_{m,k}=\infty \right\} =1\text{ under }%
H_{0}, \\
& P\left\{ \omega :\widetilde{\psi }_{m,k}=0\right\} = 1\text{ under }H_{A}.
\end{align*}

Given that the test statistic $\widetilde{\psi }_{m,k}$ does not converge to
a non-degenerate random variable under the null (or the alternative), we
propose to use a randomised version of $\widetilde{\psi }_{m,k}$. We present
this as an algorithm, whose output will be a sequence of \textit{i.i.d.}
random variables, with a known distribution (at least asymptotically) under $%
H_{0}$, and which diverge under $H_{A,1}$ and $H_{A,2}$.

\begin{description}
\item[\textbf{Step 1}] For each $k$, generate an \textit{i.i.d. }$N\left(
0,1\right) $\textit{\ }sequence $\left\{ \xi _{j}^{\left( k\right) },1\leq
j\leq R\right\} $.

\item[\textbf{Step 2}] Generate the Bernoulli sequence $\zeta _{j}^{\left(
k\right) }\left( u\right) =I\left( \left\vert \widetilde{\psi }%
_{m,k}\right\vert ^{1/2}\xi _{j}^{\left( k\right) }\leq u\right) $.

\item[\textbf{Step 3}] Compute%
\begin{equation}
\vartheta _{m,R}^{\left( k\right) }\left( u\right) =\frac{2}{R^{1/2}}%
\sum_{j=1}^{R}\left( \zeta _{j}^{\left( k\right) }\left( u\right) -\frac{1}{2%
}\right) .  \label{theta-minor}
\end{equation}

\item[\textbf{Step 4}] Define%
\begin{equation}
\Theta _{m,R}^{\left( k\right) }=\int_{-\infty }^{+\infty }\left\vert
\vartheta _{m,R}^{\left( k\right) }\left( u\right) \right\vert ^{2}dF\left(
u\right),  \label{theta-maior}
\end{equation}
where $F\left( u\right) $ is a distribution.
\end{description}

Some comments on the sequence $\left\{ \Theta _{m,R}^{\left( k\right)
},1\leq k\leq T_{m}\right\} $ are in order. Consider first the case of the
null of no break. The Bernoulli random variable $\zeta _{j}^{\left( k\right)
}\left( u\right) $ should - asymptotically - be equal to $1$ or $0$ with
probability $\frac{1}{2}$, and thus have mean $\frac{1}{2}$. In this case,
when constructing $\vartheta _{m,R}^{\left( k\right) }\left( u\right) $, a
Central Limit Theorem holds and therefore we expect $\Theta _{m,R}^{\left(
k\right) }$ to have a chi-square distribution. On the other hand, under the
alternative of a break, $\zeta _{j}^{\left( k\right) }\left( u\right) $
should be (heuristically) $0$ or $1$ with probability $0$ or $1$ (depending
on the sign of $u$) - thus its mean is not $\frac{1}{2}$, and a Law of Large
Numbers should hold. Note finally that, by construction, conditionally on
the sample, the sequence $\{\Theta _{m,R}^{\left( k\right) }\}_{k=1}^{T_{m}}$
is independent across $k$; also, by integrating out $u$ in Step 4, the
statistic $\Theta _{m,R}^{\left( k\right) }$ becomes invariant to the choice
of this specification.

The following regularity conditions are needed:

\begin{assumption}
\label{regularity}It holds that: $F\left( u\right)$ is a non-degenerate continuous distribution with (i) $\int_{-\infty }^{+\infty
}u^{2}dF\left( u\right) <\infty $; (ii) the sequences $\left\{ \xi
_{j}^{\left( k\right) },1\leq j\leq R\right\} $ are independent across $k$.
\end{assumption}

Let now $P^{\ast }$ represent the conditional probability with respect to $%
\left\{ u_{i},\epsilon _{i},1\leq i\leq T\right\} $; we use the notation
\textquotedblleft $\overset{D^{\ast }}{\rightarrow }$\textquotedblright\ and
\textquotedblleft $\overset{P^{\ast }}{\rightarrow }$\textquotedblright\ to
define, respectively, conditional convergence in distribution and in
probability according to $P^{\ast }$. It holds that

\begin{theorem}
\label{theta-1}We assume that Assumptions \ref{as-2}-\ref{regularity} hold.
As $\min \left( m,R\right) \rightarrow \infty $ with%
\begin{equation}
R\exp \left( -m^{\gamma }\right) \rightarrow 0,  \label{restriction}
\end{equation}%
under $H_{0}$\ it holds that, for each $1\leq k\leq T_{m}$ 
\begin{equation*}
\Theta _{m,R}^{\left( k\right) }\overset{D^{\ast }}{\rightarrow }\chi
_{1}^{2},
\end{equation*}%
for almost all realisations of $\left\{ u_{i},\epsilon _{i},1\leq i\leq
T\right\} $.
\end{theorem}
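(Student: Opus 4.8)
The plan is to argue conditionally on the data, fixing a realization of $\left\{ u_{i},\epsilon _{i}\right\} $ outside a null set, so that $\widetilde{\psi }_{m,k}$ becomes a diverging deterministic scalar and the only remaining randomness is the i.i.d.\ $N(0,1)$ array $\left\{ \xi _{j}^{\left( k\right) }\right\} $. The structural observation that forces the limit to be a single $\chi _{1}^{2}$ (rather than a richer functional of the whole $u$-indexed process) is that, after rescaling, $\zeta _{j}^{\left( k\right) }\left( u\right) =I\left( \xi _{j}^{\left( k\right) }\leq u/\left\vert \widetilde{\psi }_{m,k}\right\vert ^{1/2}\right) $ collapses, as $\widetilde{\psi }_{m,k}\to \infty $, to the $u$-free sign indicator $\overline{\zeta }_{j}^{\left( k\right) }:=I\left( \xi _{j}^{\left( k\right) }\leq 0\right) $; the entire process $u\mapsto \vartheta _{m,R}^{\left( k\right) }\left( u\right) $ therefore degenerates to one $N(0,1)$ variable, and the integral in (\ref{theta-maior}) reduces to its square.

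First I would pin down the rate of divergence of $\widetilde{\psi }_{m,k}$ under $H_{0}$. Expanding $\widehat{\epsilon }_{i}^{2}$ via (\ref{residual}) and using the a.s.\ orders from Assumption \ref{as-2} (super-consistency of $\widehat{\beta }_{m}$ and the orders of $\sum x_{i}x_{i}^{\prime }$, $\sum x_{i}\epsilon _{i}$, $\sum \epsilon _{i}^{2}$) together with Proposition \ref{long-run}, one obtains $Q\left( m;k\right) \leq g\left( m;k\right) m^{-\gamma }$ eventually a.s., i.e.\ $\psi _{m,k}^{-1}\geq m^{\gamma }$; this is the quantitative content behind the a.s.\ divergence stated just before the theorem. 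Hence $\widetilde{\psi }_{m,k}=\exp \left( \psi _{m,k}^{-1}\right) -1\geq \exp \left( m^{\gamma }\right) -1$, so that under (\ref{restriction})
\begin{equation*}
\frac{R}{\widetilde{\psi }_{m,k}}\leq \frac{R}{\exp \left( m^{\gamma }\right) -1}\leq 2R\exp \left( -m^{\gamma }\right) \to 0
\end{equation*}
for almost all realizations. This is the input that will control the bias below.

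Now fix $u$ and work conditionally: the $\zeta _{j}^{\left( k\right) }\left( u\right) $ are i.i.d.\ across $j$, Bernoulli with $p_{m}\left( u\right) :=P^{\ast }\left( \zeta _{j}^{\left( k\right) }\left( u\right) =1\right) =\Phi \left( u/\left\vert \widetilde{\psi }_{m,k}\right\vert ^{1/2}\right) $, and $\left\vert p_{m}\left( u\right) -\frac{1}{2}\right\vert \leq \phi \left( 0\right) \left\vert u\right\vert /\left\vert \widetilde{\psi }_{m,k}\right\vert ^{1/2}$ by the mean value theorem ($\Phi ,\phi $ the standard normal c.d.f.\ and density). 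The $u$-free comparison $\overline{\vartheta }_{m,R}^{\left( k\right) }=2R^{-1/2}\sum_{j=1}^{R}\left( \overline{\zeta }_{j}^{\left( k\right) }-\frac{1}{2}\right) $ has, since $E^{\ast }\overline{\zeta }_{j}^{\left( k\right) }=\frac{1}{2}$ and $\Var^{\ast }\overline{\zeta }_{j}^{\left( k\right) }=\frac{1}{4}$ exactly, conditional mean $0$ and variance $1$, and as a normalized sum of i.i.d.\ bounded variables satisfies $\overline{\vartheta }_{m,R}^{\left( k\right) }\overset{D^{\ast }}{\to }N(0,1)$, hence $\left\vert \overline{\vartheta }_{m,R}^{\left( k\right) }\right\vert ^{2}\overset{D^{\ast }}{\to }\chi _{1}^{2}$ for every realization (it does not depend on the sample). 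Setting $D_{j}\left( u\right) =\zeta _{j}^{\left( k\right) }\left( u\right) -\overline{\zeta }_{j}^{\left( k\right) }\in \left\{ -1,0,1\right\} $ and $q_{m}\left( u\right) :=P^{\ast }\left( D_{j}\left( u\right) \neq 0\right) =\left\vert p_{m}\left( u\right) -\frac{1}{2}\right\vert $,
\begin{equation*}
E^{\ast }\big\vert \vartheta _{m,R}^{\left( k\right) }\left( u\right) -\overline{\vartheta }_{m,R}^{\left( k\right) }\big\vert ^{2}=\frac{4}{R}\sum_{j=1}^{R}\Var^{\ast }D_{j}\left( u\right) +\big( 2R^{1/2}(p_{m}\left( u\right) -\tfrac{1}{2})\big) ^{2}\leq 4q_{m}\left( u\right) +4R\,q_{m}\left( u\right) ^{2},
\end{equation*}
which, integrated against $F$, is bounded by a constant times $\left\vert \widetilde{\psi }_{m,k}\right\vert ^{-1/2}\int \left\vert u\right\vert dF\left( u\right) +(R/\left\vert \widetilde{\psi }_{m,k}\right\vert )\int u^{2}dF\left( u\right) \to 0$, using Assumption \ref{regularity}(i) and the rate of the previous step. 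By Markov $\int \left\vert \vartheta _{m,R}^{\left( k\right) }\left( u\right) -\overline{\vartheta }_{m,R}^{\left( k\right) }\right\vert ^{2}dF\left( u\right) \overset{P^{\ast }}{\to }0$; since $\int \left\vert \overline{\vartheta }_{m,R}^{\left( k\right) }\right\vert ^{2}dF\left( u\right) =\left\vert \overline{\vartheta }_{m,R}^{\left( k\right) }\right\vert ^{2}$ ($F$ a probability measure, $\overline{\vartheta }_{m,R}^{\left( k\right) }$ $u$-free), a Cauchy--Schwarz bound on $\int (\left\vert \vartheta _{m,R}^{\left( k\right) }\left( u\right) \right\vert ^{2}-\left\vert \overline{\vartheta }_{m,R}^{\left( k\right) }\right\vert ^{2})dF\left( u\right) $ with $\int \left\vert \vartheta _{m,R}^{\left( k\right) }\left( u\right) +\overline{\vartheta }_{m,R}^{\left( k\right) }\right\vert ^{2}dF\left( u\right) =O_{P^{\ast }}(1)$ gives $\Theta _{m,R}^{\left( k\right) }-\left\vert \overline{\vartheta }_{m,R}^{\left( k\right) }\right\vert ^{2}\overset{P^{\ast }}{\to }0$, and Slutsky's theorem yields $\Theta _{m,R}^{\left( k\right) }\overset{D^{\ast }}{\to }\chi _{1}^{2}$.

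The main obstacle is the bias $2R^{1/2}(p_{m}\left( u\right) -\frac{1}{2})$: the variance of $\vartheta _{m,R}^{\left( k\right) }\left( u\right) -\overline{\vartheta }_{m,R}^{\left( k\right) }$ vanishes merely because $\widetilde{\psi }_{m,k}\to \infty $, but the mean is magnified by the $R^{1/2}$ normalization and disappears only when $R$ is small relative to $\widetilde{\psi }_{m,k}$. This is exactly what condition (\ref{restriction}) secures once the divergence rate $\widetilde{\psi }_{m,k}\geq \exp \left( m^{\gamma }\right) -1$ has been established, which makes the first step the delicate one; the subsequent passage from the pointwise-in-$u$ behaviour to the integrated statistic is then routine, the coupling to the $u$-free $\overline{\vartheta }_{m,R}^{\left( k\right) }$ avoiding any tightness argument for the process $u\mapsto \vartheta _{m,R}^{\left( k\right) }\left( u\right) $.
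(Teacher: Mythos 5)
Your proposal is correct and follows essentially the same route as the paper: establish the a.s. lower bound $\widetilde{\psi}_{m,k}\geq c_{0}\exp\left(m^{\gamma}\right)$ from Lemma \ref{lemma5} and the definition of $g\left(m;k\right)$, couple $\zeta_{j}^{\left(k\right)}\left(u\right)$ to the $u$-free indicator $I\left(\xi_{j}^{\left(k\right)}\leq 0\right)$, kill the bias term of order $Ru^{2}\left\vert\widetilde{\psi}_{m,k}\right\vert^{-1}$ via (\ref{restriction}) and the fluctuation term of order $\left\vert u\right\vert\left\vert\widetilde{\psi}_{m,k}\right\vert^{-1/2}$ after integrating against $F$, and conclude by the CLT for Bernoulli variables. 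The only cosmetic difference is that you bound the single second moment $E^{\ast}\left\vert\vartheta_{m,R}^{\left(k\right)}\left(u\right)-\overline{\vartheta}_{m,R}^{\left(k\right)}\right\vert^{2}$ (variance plus squared bias) in one stroke, whereas the paper splits the same quantity into its three-term decomposition $I+II+III$ and treats the deterministic bias and the centered fluctuation separately.
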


\begin{theorem}
\label{theta-2}We assume that Assumptions \ref{as-2}-\ref{regularity} hold.
As $\min \left( m,R\right) \rightarrow \infty $, under $H_{A,1}\cup H_{A,2}$%
\ it holds that, for each $k\geq \left\lfloor m^{\max \left\{ 1,\theta
^{\prime }\right\} \left( 1+\varepsilon \right) }\right\rfloor $ for all $%
\varepsilon >0$ 
\begin{equation*}
\frac{1}{R}\Theta _{m,R}^{\left( k\right) }\overset{P^{\ast }}{\rightarrow }%
1,
\end{equation*}%
for almost all realisations of $\left\{ u_{i},\epsilon _{i},1\leq i\leq
T\right\} $.
\end{theorem}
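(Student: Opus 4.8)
The plan is to argue conditionally on the sample $\{u_i,\epsilon_i:1\leq i\leq T\}$ and to reduce the claim to a (quantitative) conditional Law of Large Numbers for the randomised Bernoulli array. The input from the rest of the paper is the drift behaviour recorded before the statement: under $H_{A,1}\cup H_{A,2}$, and for $k\geq\lfloor m^{\max\{1,\theta^{\prime}\}(1+\varepsilon)}\rfloor$, one has $\widetilde{\psi}_{m,k}\to 0$ almost surely as $m\to\infty$. Fix such a $k$ and a realisation of the sample on which this holds; these form a set of probability one. Conditionally on the sample the $\{\xi_j^{(k)}\}_{j=1}^{R}$ are i.i.d. $N(0,1)$, so the $\{\zeta_j^{(k)}(u)\}_{j=1}^{R}$ are i.i.d. Bernoulli with
\[
p_{m,k}(u)=P^{\ast}\bigl(|\widetilde{\psi}_{m,k}|^{1/2}\xi_j^{(k)}\leq u\bigr)=\Phi\Bigl(\frac{u}{|\widetilde{\psi}_{m,k}|^{1/2}}\Bigr),
\]
where $\Phi$ is the standard normal distribution function. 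Since $|\widetilde{\psi}_{m,k}|^{1/2}\to 0$, for every $u\neq 0$ the success probability degenerates: $p_{m,k}(u)\to I(u>0)$.

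The second step is an exact algebraic reduction. Because $\zeta_j^{(k)}(u)-\tfrac12\in\{-\tfrac12,\tfrac12\}$, setting $\overline{\zeta}_R(u)=R^{-1}\sum_{j=1}^{R}\zeta_j^{(k)}(u)$ gives
\[
\frac1R\bigl|\vartheta_{m,R}^{(k)}(u)\bigr|^{2}=\frac{4}{R^{2}}\Bigl(\sum_{j=1}^{R}\zeta_j^{(k)}(u)-\frac{R}{2}\Bigr)^{2}=4\Bigl(\overline{\zeta}_R(u)-\frac12\Bigr)^{2}\in[0,1].
\]
Using $E^{\ast}\overline{\zeta}_R(u)=p_{m,k}(u)$ and $\Var^{\ast}(\overline{\zeta}_R(u))=p_{m,k}(u)(1-p_{m,k}(u))/R$, a direct computation collapses the conditional mean into the closed form
\[
1-E^{\ast}\Bigl[\frac1R\bigl|\vartheta_{m,R}^{(k)}(u)\bigr|^{2}\Bigr]=\Bigl(1-\frac1R\Bigr)\,4\,p_{m,k}(u)\bigl(1-p_{m,k}(u)\bigr),
\]
which is nonnegative, bounded by $1$, and (for every $u\neq 0$) tends to $0$ as $\min(m,R)\to\infty$, because $p_{m,k}(u)\to I(u>0)$ forces $p_{m,k}(u)(1-p_{m,k}(u))\to 0$.

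It then remains to integrate in $u$ and upgrade to convergence in probability. Integrating the last display against $F$, and using that $F$ is continuous (so that $F(\{0\})=0$ and the single exceptional point $u=0$, where $p_{m,k}(0)=\tfrac12$, is negligible), the integrand is dominated by the constant $1\in L^{1}(dF)$ and converges pointwise to $0$; the (deterministic) dominated convergence theorem gives
\[
E^{\ast}\Bigl[\int_{-\infty}^{+\infty}\Bigl(1-\frac1R\bigl|\vartheta_{m,R}^{(k)}(u)\bigr|^{2}\Bigr)dF(u)\Bigr]=\int_{-\infty}^{+\infty}\Bigl(1-E^{\ast}\Bigl[\frac1R\bigl|\vartheta_{m,R}^{(k)}(u)\bigr|^{2}\Bigr]\Bigr)dF(u)\to 0.
\]
Since $1-R^{-1}\Theta_{m,R}^{(k)}=\int_{-\infty}^{+\infty}(1-R^{-1}|\vartheta_{m,R}^{(k)}(u)|^{2})dF(u)\geq 0$, a conditional Markov inequality turns this $L^{1}(P^{\ast})$ convergence into $R^{-1}\Theta_{m,R}^{(k)}\overset{P^{\ast}}{\rightarrow}1$, for almost all realisations of the sample, which is the assertion.

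I expect the genuine difficulty to lie not in the randomisation argument above (a conditional Law of Large Numbers made quantitative) but in supplying its single input, namely the almost sure drift $\widetilde{\psi}_{m,k}\to 0$ for exactly the range $k\geq\lfloor m^{\max\{1,\theta^{\prime}\}(1+\varepsilon)}\rfloor$. Establishing it requires comparing the growth of $Q(m;k)$ with that of $g(m;k)$ separately under the two alternatives: under $H_{A,1}$ the driving term is $\sum\Delta_{\beta}^{\prime}x_i x_i^{\prime}\Delta_{\beta}$, whose almost sure quadratic divergence must be shown to overwhelm $g(m;k)$, while under $H_{A,2}$ the accumulated $\sum\epsilon_i^{2}$ of the unit-root innovations (Assumption \ref{ha-2}) plays the analogous role. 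The threshold on $k$ is precisely what guarantees that the post-break contribution, which only switches on at $k^{\ast}=O(m^{\theta^{\prime}})$, has had time to dominate both the calibration-period estimation error $\beta-\widehat{\beta}_{m}$ and the benign $(m+k)$ term inside $g(m;k)$. A final, milder point is the joint regime $\min(m,R)\to\infty$, which is harmless here because the conditional mean factorises into an $m$-dependent factor $4p_{m,k}(u)(1-p_{m,k}(u))$ and an $R$-dependent factor $(1-1/R)$, each controlled on its own.
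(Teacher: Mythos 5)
Your proposal is correct and follows essentially the same route as the paper's proof: both arguments reduce the claim, conditionally on the sample, to the degeneration of the Bernoulli success probability $G\left( u\left\vert \widetilde{\psi }_{m,k}\right\vert ^{-1/2}\right) \rightarrow I\left( u>0\right) $, which is supplied by Lemma \ref{lemma6} (equivalently, $Q\left( m;k\right) /g\left( m;k\right) \rightarrow \infty $ a.s., hence $\widetilde{\psi }_{m,k}\rightarrow 0$) for the stated range of $k$. Your exact identity $1-E^{\ast }\left[ R^{-1}\vert \vartheta _{m,R}^{\left( k\right) }\left( u\right) \vert ^{2}\right] =\left( 1-R^{-1}\right) 4p_{m,k}\left( u\right) \left( 1-p_{m,k}\left( u\right) \right) $, combined with the pointwise bound $R^{-1}\vert \vartheta _{m,R}^{\left( k\right) }\left( u\right) \vert ^{2}\leq 1$ and the one-sided Markov inequality, handles the integration over $u$ somewhat more explicitly than the paper's bias-plus-variance decomposition, but the substance is the same.
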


Theorems \ref{theta-1} and \ref{theta-2} are intermediate results. Theorem %
\ref{theta-1} stipulates that under the null $\Theta _{m,R}^{\left( k\right)
}$ has, asymptotically, a $\chi _{1}^{2}$ distribution; this result is of
independent interest, and we will make use of it to show that $\Theta
_{m,R}^{\left( k\right) }$ has finite moments of order $2+\varepsilon $ with 
$\varepsilon >0$. Note that the only thing that is required is the fact that 
$\Theta _{m,R}^{\left( k\right) }$ is an \textit{i.i.d.} sequence, with
finite moments of order $2+\varepsilon $: this is the building block on
which we can construct a detector whose properties can be studied
analytically. In this respect, any other transformation of $\vartheta
_{m,R}^{\left( k\right) }\left( u\right) $ (e.g., the absolute value, or a
power thereof) will also work, giving exactly the same results as in Theorem %
\ref{monitoring} below; the only advantage of defining $\Theta
_{m,R}^{\left( k\right) }$ as in Step 4 above is that its asymptotics has
already been studied (see e.g. \citealp{HT16}).

The theorems contain a restriction on the relative rate of divergence of the
pre-monitoring sample size $m$ and the artificial sample size $R$. Heuristically, note that our monitoring procedure is based on having a bounded sequence with finite moments under the null. As the proof of Theorem \ref{theta-1} shows, under the null the statistic $\Theta _{m,R}^{\left( k\right)
}$ has a non-centrality term which vanishes as long as (\ref{restriction}) is satisfied. Conversely, under the alternative it is required that $\Theta _{m,R}^{\left( k\right)
}$ should pass to infinity: Theorem \ref{theta-2} ensures that this occurs at a rate equal to $R$. Thus, Theorem \ref{theta-2} and (\ref{restriction}) provide a family of selection rules for $R$. Given that we only need convergence and divergence, the role played by $R$ can be expected to be rather marginal, which is also confirmed by our simulations (see
Section \ref{montecarlo}). However, we note that the choice $R=m$ satisfies (\ref%
{restriction}). Theorem \ref{theta-2}, conversely, states that, under the
alternative where there is a break at $k^{\ast }$, $\Theta _{m,R}^{\left(
k\right) }$ diverges to positive infinity after $k^{\ast }$. 

In light of these results, we build a monitoring function, based on the use
of the cumulative sums process. Define the detectors%
\begin{equation}
d\left( m;k\right) =\left\vert \sum_{i=m+1}^{m+k}\frac{\Theta _{m,R}^{\left(
i\right) }-1}{\sqrt{2}}\right\vert ,\text{ }1\leq k\leq T_{m}.
\label{detector}
\end{equation}%
As can be noted, $d\left( m;k\right) $ is the CUSUM\ process of $\left\{
\Theta _{m,R}^{\left( k\right) },1\leq k\leq T_{m}\right\} $, after
centering and standardizing. \\

Similarly to the literature on structural breaks (see e.g. \citealp{csorgo1997}), we now need to define a family of threshold functions such that if the CUSUM process exceeds the threshold, a change is detected. A standard choice (see \citealp{chu}) is
\begin{align}
& \nu \left( m;k\right) =c_{\alpha ,m}\nu ^{\ast }\left( m;k\right) ,
\label{nu-1-1} \\
& \nu ^{\ast }\left( m;k\right) =m^{1/2}\left( 1+\frac{k}{m}\right) .  \label{nu-2-1}
\end{align}%
Based on this choice, the FLCT yields that, for every $x$
\begin{align}
& P^{\ast }\left[ \max_{1\leq k\leq T_{m}}\frac{d\left( m;k\right) }{\nu
^{\ast }\left( m;k\right) }\leq x\right] \rightarrow P\left[ \sup_{0\leq
t\leq 1} \left\vert B\left( t\right) \right\vert \leq x%
\right] ,
\end{align}%
where $B$ is a standard Brownian motion. The limiting law of this expression involves a Brownian motion; intuitively, this being a heteroskedastic process, this procedure may not be the most powerful one; this is further corroborated by \citet{aue2004}, who show that the delay in detecting a changepoint increases as $\eta \rightarrow 0$. A possibility would be to re-scale the monitoring function as suggested in \citet{lajos04} and %
\citet{lajos07}, viz. using %
\begin{align}
& \nu \left( m;k\right) =c_{\alpha ,m}\nu ^{\ast }\left( m;k\right) ,
\label{nu-1} \\
& \nu ^{\ast }\left( m;k\right) =m^{1/2}\left( 1+\frac{k}{m}\right) \left( 
\frac{k}{m+k}\right) ^{\eta },  \label{nu-2}
\end{align}%
with $\eta \in \left[ 0,\frac{1}{2}\right] $, and $c_{\alpha ,m}$\ a
critical value. Intuitively, the difference with (\ref{nu-1-1}) is that the monitoring function is now smaller than before, which should ensure higher power. From a technical point of view,
however, choosing $\eta >0$ entails having to use a different asymptotics,
based on \textit{almost sure} as opposed to \textit{weak} convergence. The
fact that the building blocks of $d\left( m;k\right) $ are the $\Theta
_{m,R}^{\left( k\right) }$s - which are, conditional on the sample, \textit{%
i.i.d.} and with finite moments - entails that that we can use an array of
almost sure results (see the book by \citealp{csorgo1997}), which in turn
makes it possible to carry out the monitoring using $\eta >0$ in (\ref{nu-2}%
).\\
We point out that, despite the considerations above, the choice of threshold functions is by no
means unique, and, as \citet{chu} put it, \textquotedblleft often dictated by
mathematical convenience rather than optimality\textquotedblright ; in our
case, we have defined $\nu ^{\ast }\left( m;k\right) $ as per (\ref{nu-2})
give that the calculation of crossing probabilities (made according to (\ref%
{eta-1})) is tractable - see \citet{lajos04} and \citet{lajos07}. We then define the
stopping rule as%
\begin{equation}
\widehat{k}_{m}=\inf \left\{ 1\leq k\leq T_{m}\text{ s.t. }d\left(
m;k\right) \geq \nu \left( m;k\right) \right\} ,  \label{stopping}
\end{equation}%
setting $\widehat{k}_{m}=T_{m}$ when (\ref{stopping}) does not hold for $%
1\leq k\leq T_{m}$.

The critical value $c_{\alpha ,m}$ is defined, for a given level $\alpha $,
as%
\begin{align}
& P\left[ \sup_{0\leq t\leq 1}\frac{\left\vert B\left( t\right) \right\vert 
}{t^{\eta }}\leq c_{\alpha ,m}\right] =1-\alpha ,\text{ for }\eta <\frac{1}{2%
},  \label{eta-1} \\
& c_{\alpha ,m}=\frac{D_{m}-\ln \left( -\ln \left( 1-\alpha \right) \right) 
}{A_{m}},\text{ for }\eta =\frac{1}{2};  \label{eta-2}
\end{align}%
in (\ref{eta-1}), $\left\{ B\left( t\right) ,-\infty <t<\infty \right\} $ is
a standard Brownian motion, whereas in (\ref{eta-2}) we have defined%
\begin{equation}
A_{m}=\left( 2\ln \ln m\right) ^{1/2}\text{ and }D_{m}=2\ln \ln m+\frac{1}{2}%
\ln \ln \ln m-\frac{1}{2}\ln \pi ;  \label{am-dm}
\end{equation}%
critical values for (\ref{eta-1}) - which do not depend on $m$ - can be found in Table 1 in \citet{lajos04}.

We need the following assumption, which restricts (\ref{restriction}) and
strengthens Assumption \ref{regularity}.

\begin{assumption}
\label{restrict-2}It holds that: (i)%
\begin{equation*}
m^{1/2+\tau }R\exp \left( -m^{\gamma }\right) \rightarrow 0,
\end{equation*}%
as $\min \left( m,R\right) \rightarrow \infty $, for $\tau >0$; (ii) $%
\int_{-\infty }^{+\infty }u^{4+\tau }dF\left( u\right) <\infty $, for some $%
\tau >0$.
\end{assumption}
It holds that

\begin{theorem}
\label{monitoring}We assume that Assumptions \ref{as-2}-\ref{restrict-2} are
satisfied.

As $\min \left( m,R\right) \rightarrow \infty $ with (\ref{restriction}),
under $H_{0}$ it holds that%
\begin{align}
& P^{\ast }\left[ \max_{1\leq k\leq T_{m}}\frac{d\left( m;k\right) }{\nu
^{\ast }\left( m;k\right) }\leq x\right] \rightarrow P\left[ \sup_{0\leq
t\leq 1}\frac{\left\vert B\left( t\right) \right\vert }{t^{\eta }}\leq x%
\right] \text{ for }\eta <\frac{1}{2},  \label{null-asy-1} \\
& P^{\ast }\left[ \max_{1\leq k\leq T_{m}}\frac{d\left( m;k\right) }{\nu
^{\ast }\left( m;k\right) }\leq \frac{x+D_{m}}{A_{m}}\right] \rightarrow
\exp \left( -\exp \left( -x\right) \right) \text{ for }\eta =\frac{1}{2},
\label{null-asy-2}
\end{align}%
for $-\infty <x<\infty $ and almost all realisations of $\left\{
u_{i},\epsilon _{i},1\leq i\leq T\right\} $.

As $\min \left( m,R\right) \rightarrow \infty $, under $H_{A,1}\cup H_{A,2}$
it holds that%
\begin{equation}
\max_{1\leq k\leq T_{m}}\frac{d\left( m;k\right) }{\nu ^{\ast }\left(
m;k\right) }\overset{P^{\ast }}{\rightarrow }\infty ,\text{ for any }\eta
\in \left[ 0,\frac{1}{2}\right] ,  \label{power}
\end{equation}%
for almost all realisations of $\left\{ u_{i},\epsilon _{i},1\leq i\leq
T\right\} $.
\end{theorem}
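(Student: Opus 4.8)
The plan is to work throughout under the conditional measure $P^*$, treating the sample $\{u_i,\epsilon_i\}$ as fixed (for almost all realisations), so that the only randomness is the artificial Gaussian array. Under $H_0$ the sequence $\{\Theta_{m,R}^{(k)}\}_{k=1}^{T_m}$ is then, by Assumption \ref{regularity}(ii), independent across $k$, with conditional distribution that is asymptotically $\chi_1^2$ by Theorem \ref{theta-1}. Writing $\mu_{m,k}=E^*\Theta_{m,R}^{(k)}$ and $\sigma_{m,k}^2=\mathrm{Var}^*\Theta_{m,R}^{(k)}$, I would first show that, for almost all realisations, $\mu_{m,k}=1+b_{m,k}$ with $\sigma_{m,k}^2\to2$ uniformly in $k$, and that the $(2+\tau)$-th conditional moments are uniformly bounded --- the latter following from Assumption \ref{restrict-2}(ii) together with the representation of $\Theta_{m,R}^{(k)}$ as an $F$-average of squared Gaussian functionals. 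The deviation $b_{m,k}$ of the conditional mean from unity is governed by the distance of the Bernoulli parameter $\Phi(u|\widetilde\psi_{m,k}|^{-1/2})$ from $\tfrac12$, hence by $|\widetilde\psi_{m,k}|^{-1/2}$, which under $H_0$ vanishes at a rate dictated by $\exp(-m^\gamma)$ through (\ref{psi})--(\ref{psi-1}).

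I would then decompose the detector as
\[
d(m;k)=\frac{1}{\sqrt 2}\left|\sum_{i=m+1}^{m+k}\bigl(\Theta_{m,R}^{(i)}-\mu_{m,i}\bigr)+\sum_{i=m+1}^{m+k}b_{m,i}\right|.
\]
The second, deterministic, term is the accumulated bias: I would bound $\max_{1\le k\le T_m}\nu^*(m;k)^{-1}\bigl|\sum_{i\le k}b_{m,i}\bigr|$ and show it is $o(1)$. Since there are at most $T_m=c_0m^\theta$ summands and the smallest weight $\nu^*(m;1)$ is of order $m^{1/2}$, this is exactly where the strengthened rate $m^{1/2+\tau}R\exp(-m^\gamma)\to0$ of Assumption \ref{restrict-2}(i) enters. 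For the first, stochastic, term I would invoke a strong invariance principle for independent (non-identically distributed) summands with uniformly bounded $(2+\tau)$-moments --- of the type in \citet{kmt1} and \citet{kmt2} --- to approximate the partial-sum process by a Wiener process with error negligible relative to $\nu^*(m;k)$, after noting that $\sigma_{m,k}^2\to2$ makes the $\sqrt2$ standardisation yield unit asymptotic variance.

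With the partial sums approximated by a Brownian motion, the limit laws (\ref{null-asy-1})--(\ref{null-asy-2}) follow from the boundary-crossing asymptotics of \citet{lajos04} and \citet{lajos07} for the weighted CUSUM with boundary $\nu^*(m;k)=m^{1/2}(1+k/m)(k/(m+k))^\eta$: under the change of time $t=k/(m+k)$ the weighted maximum converges to $\sup_{0<t\le1}|B(t)|/t^\eta$ for $\eta<\tfrac12$, while for $\eta=\tfrac12$ the extreme-value (Darling--Erd\H{o}s type) normalisation by $A_m,D_m$ produces the Gumbel limit. Here Assumption \ref{horizon}(i) with $\theta>1$, so that $T_m/m\to\infty$, is what guarantees that $t=k/(m+k)$ sweeps the whole interval $(0,1)$ in the limit, so the closed-end procedure recovers the open-end law. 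For the power statement (\ref{power}) I would simply evaluate the detector at $k=T_m$: Theorem \ref{theta-2} gives $\Theta_{m,R}^{(i)}/R\xrightarrow{P^*}1$ for every $i$ beyond the buffer $\lfloor m^{\max\{1,\theta'\}(1+\varepsilon)}\rfloor$, which by Assumption \ref{horizon}(ii) is $o(T_m)$; summing the $\sim T_m$ post-buffer terms gives $d(m;T_m)\gtrsim T_mR$, whereas $\nu^*(m;T_m)\sim T_m m^{-1/2}$, so the ratio is of order $Rm^{1/2}\to\infty$ in $P^*$-probability.

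I expect the main obstacle to be the simultaneous control of the accumulated conditional bias and the invariance principle for the triangular-array (non-identically distributed) centred sequence, with the $\eta=\tfrac12$ Darling--Erd\H{o}s regime being the most delicate, since it is sensitive to moment uniformity for the small-$k$ (i.e.\ $t\to0$) terms where the weight $t^{-\eta}$ blows up.
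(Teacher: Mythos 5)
Your proposal is correct and follows essentially the same route as the paper: the uniform control of the conditional means, variances and higher moments of $\Theta_{m,R}^{(k)}$ that you describe, with Assumption \ref{restrict-2}\textit{(i)} absorbing the accumulated bias against the weight $\nu^{\ast}(m;k)$, is exactly the content of Lemma \ref{lemma7}, after which the paper likewise replaces the weighted CUSUM by that of an i.i.d.\ centred and standardised sequence and invokes the boundary-crossing asymptotics of \citet{lajos04} and \citet{lajos07}. The power claim is obtained from Theorem \ref{theta-2} just as you outline, by evaluating the detector beyond the buffer $\lfloor m^{\max\{1,\theta'\}(1+\varepsilon)}\rfloor$, which Assumption \ref{horizon} makes negligible relative to $T_{m}$.
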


Theorem \ref{monitoring} implies the following

\begin{corollary}
\label{corollary}Under the assumptions of Theorem \ref{monitoring} it holds
that: 
\begin{align}
& \lim_{\min (m,R)\rightarrow \infty }P^{\ast }\left( \widehat{k}%
_{m}<T_{m}\right) \leq \alpha ,\text{ \ \ under }H_{0},  \label{size} \\
& \lim_{\min (m,R)\rightarrow \infty }P^{\ast }\left( \widehat{k}%
_{m}<T_{m}\right) =1,\text{ \ \ under }H_{A,1}\cup H_{A,2},  \label{pow}
\end{align}%
for almost all realisations of $\left\{ u_{i},\epsilon _{i},1\leq i\leq
T\right\} $.
\end{corollary}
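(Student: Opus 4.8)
The plan is to translate the stopping event $\left\{ \widehat{k}_{m}<T_{m}\right\}$ into a statement about the normalised maximum $\max_{1\leq k\leq T_{m}}d\left( m;k\right) /\nu ^{\ast }\left( m;k\right) $, whose limiting behaviour is already pinned down by Theorem \ref{monitoring}, and then simply to substitute the definition of the critical value $c_{\alpha ,m}$. First I would observe that, since $\nu \left( m;k\right) =c_{\alpha ,m}\nu ^{\ast }\left( m;k\right) $ by (\ref{nu-1})--(\ref{nu-2}), the crossing condition $d\left( m;k\right) \geq \nu \left( m;k\right) $ in the stopping rule (\ref{stopping}) is equivalent to $d\left( m;k\right) /\nu ^{\ast }\left( m;k\right) \geq c_{\alpha ,m}$. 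Hence the event that a crossing occurs strictly before the horizon coincides with $\left\{ \max_{1\leq k\leq T_{m}-1}d\left( m;k\right) /\nu ^{\ast }\left( m;k\right) \geq c_{\alpha ,m}\right\} $, and in particular
\[
P^{\ast }\left( \widehat{k}_{m}<T_{m}\right) \leq P^{\ast }\left( \max_{1\leq k\leq T_{m}}\frac{d\left( m;k\right) }{\nu ^{\ast }\left( m;k\right) }\geq c_{\alpha ,m}\right) .
\]

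Under $H_{0}$ I would feed $x=c_{\alpha ,m}$ into the limit laws of Theorem \ref{monitoring}. For $\eta <\frac{1}{2}$ the value $c_{\alpha ,m}=c_{\alpha }$ does not depend on $m$, and (\ref{null-asy-1}) together with the continuity of the law of $\sup_{0\leq t\leq 1}\left\vert B\left( t\right) \right\vert /t^{\eta }$ and the defining relation (\ref{eta-1}) give $P^{\ast }\left( \max \geq c_{\alpha }\right) \rightarrow 1-\left( 1-\alpha \right) =\alpha $. For $\eta =\frac{1}{2}$ I would set $x=-\ln \left( -\ln \left( 1-\alpha \right) \right) $, so that $\left( x+D_{m}\right) /A_{m}=c_{\alpha ,m}$ by (\ref{eta-2})--(\ref{am-dm}); then (\ref{null-asy-2}) yields $P^{\ast }\left( \max \leq c_{\alpha ,m}\right) \rightarrow \exp \left( -\exp \left( -x\right) \right) =1-\alpha $, whence again $P^{\ast }\left( \max \geq c_{\alpha ,m}\right) \rightarrow \alpha $ by continuity of the Gumbel limit. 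Combining with the displayed inequality gives $\limsup_{\min \left( m,R\right) \rightarrow \infty }P^{\ast }\left( \widehat{k}_{m}<T_{m}\right) \leq \alpha $, which is (\ref{size}); the inequality rather than equality is precisely the price of the strict crossing $k<T_{m}$ in the definition of $\widehat{k}_{m}$.

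Under $H_{A,1}\cup H_{A,2}$ I would use the divergence (\ref{power}), which gives $\max_{1\leq k\leq T_{m}}d\left( m;k\right) /\nu ^{\ast }\left( m;k\right) \overset{P^{\ast }}{\rightarrow }\infty $. For $\eta <\frac{1}{2}$ the threshold $c_{\alpha ,m}=c_{\alpha }$ is fixed, so $P^{\ast }\left( \max \geq c_{\alpha }\right) \rightarrow 1$ follows immediately. The one point requiring care --- and the main obstacle --- is $\eta =\frac{1}{2}$, where $c_{\alpha ,m}$ itself diverges, albeit at the slow rate $A_{m},D_{m}=O\left( \ln \ln m\right) $ from (\ref{am-dm}); here I would invoke the rate established inside the proof of Theorem \ref{monitoring}, namely that the divergence in (\ref{power}) is driven by the indices $k\geq \left\lfloor m^{\max \left\{ 1,\theta ^{\prime }\right\} \left( 1+\varepsilon \right) }\right\rfloor $, for which Theorem \ref{theta-2} forces $\Theta _{m,R}^{\left( k\right) }/R\rightarrow 1$ and hence makes $d\left( m;k\right) $ grow polynomially in $m$, dominating $c_{\alpha ,m}$. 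Since by Assumption \ref{horizon} these indices lie well below $T_{m}$, restricting the maximum to $k\leq T_{m}-1$ does not affect the conclusion, so $P^{\ast }\left( \widehat{k}_{m}<T_{m}\right) \rightarrow 1$, which is (\ref{pow}). All statements hold for almost all realisations of $\left\{ u_{i},\epsilon _{i}\right\} $ because the conditional limits in Theorem \ref{monitoring} do; the whole argument is essentially a substitution, so the only genuine work is the $\eta =\frac{1}{2}$ bookkeeping just described.
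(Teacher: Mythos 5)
Your proposal is correct and follows exactly the argument the paper intends: the paper's own ``proof'' is a one-line deferral to Theorem \ref{monitoring} and to the detailed passages in \citet{bt1}, and what you have written is precisely that standard substitution of $x=c_{\alpha ,m}$ into (\ref{null-asy-1})--(\ref{null-asy-2}) under $H_{0}$ and the use of (\ref{power}) under the alternative. Your extra care on the $\eta =\frac{1}{2}$ case (checking that the polynomial divergence of $d\left( m;k\right) /\nu ^{\ast }\left( m;k\right) $ dominates the $O\left( \sqrt{\ln \ln m}\right) $ growth of $c_{\alpha ,m}$) is exactly the bookkeeping the cited reference supplies, so nothing is missing.
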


\section{Discussion and extensions\label{extensions}}

In this section, we investigate two aspects of the monitoring procedure
proposed above. Firstly, we examine the impact
of various test specifications on the power of our procedure (Section \ref%
{local-alt}); secondly, we consider the presence of deterministics in (\ref%
{model-1}) (Section \ref{deterministics}).

\subsection{Power versus shrinking alternatives and the impact of $g\left( m;k\right)$\label{local-alt}}

Our proposed methodology depends on several specifications in the
construction of the monitoring function, and in the algorithm to compute the
test statistic. In this section, we discuss the impact of such
specifications on the power of the monitoring procedure. In particular, in
this section we discuss the impact of $\gamma $ in the threshold function $%
g\left( m;k\right) $ on power versus shrinking alternatives. In Section \ref%
{montecarlo}, we also comment on the choices of $u$ and its distribution.

In order to discuss the impact of $\gamma $, we focus on a simple set-up
where there are no deterministics, viz. on model (\ref{model-1})%
\begin{equation*}
y_{i}=\beta ^{\prime }x_{i}+\epsilon _{i},
\end{equation*}%
and we consider the presence of power versus the local-to-null set-ups%
\begin{eqnarray}
H^{\ast}_{A,1} &:&\beta _{i}=\beta +\Delta _{\beta }\left( m\right) I\left[
i>k^{\ast }\right] ,  \label{ha-1-local} \\
H^{\ast}_{A,2} &:&\epsilon _{i}=\sigma _{\upsilon }\left( m\right) \upsilon
_{i}+u_{i}^{\epsilon }\text{ for }k^{\ast }+1\leq i\leq T.
\label{ha-2-local}
\end{eqnarray}%
In (\ref{ha-2-local}), we assume%
\begin{equation*}
\upsilon _{i}=\upsilon _{i-1}+u_{i}^{\upsilon },
\end{equation*}%
with $u_{i}^{\epsilon }$ independent of $u_{i}^{\upsilon }$. Specifically, in (\ref{ha-1-local}), we consider a
shrinking break where $\Delta _{\beta }\left( m\right) \rightarrow 0$,
whereas in (\ref{ha-2-local}), inspired by \citet{busetti}, we model the
local-to-null case as having $\sigma _{\upsilon }\left( m\right) \rightarrow
0$. Note that we consider, in both equations, the break as shrinking with $m$, since this can be viewed as the sample size on which estimation is based. 

Heuristically, as the proof of Theorem \ref{monitoring} shows, in order for
the monitoring procedure to detect changes, it is necessary that $\psi
_{m,k}\rightarrow \infty $ a.s.; thus, intuitively, $\Delta _{\beta }\left(
m\right) $ and $\sigma _{\upsilon }\left( m\right) $, as they drift to zero,
must be \textquotedblleft slow enough\textquotedblright\ to ensure that $%
\psi _{m,k}$ diverges. We formalise this in the following theorem

\begin{theorem}
\label{drift}We assume that Assumptions \ref{as-2}-\ref{restrict-2} are
satisfied. Then, under $H^{\ast}_{A,1}$, equation (\ref{power}) holds as $%
m\rightarrow \infty $ as long as%
\begin{eqnarray}
m^{\delta -\varepsilon }\Delta _{\beta }\left( m\right) &\rightarrow &\infty
,\text{ when }\theta \leq 2,  \label{delta-drift-1} \\
m^{\theta \frac{\theta -2+\delta }{2\left( \theta -1\right) }}\Delta _{\beta
}\left( m\right) &\rightarrow &\infty ,\text{ when }\theta >2,
\label{delta-drift-2}
\end{eqnarray}%
for some $\varepsilon >0$. Under $H^{\ast}_{A,2}$, equation (\ref{power}) holds as $%
m\rightarrow \infty $ as long as%
\begin{eqnarray}
m^{\delta -\varepsilon }\sigma _{\upsilon }\left( m\right) &\rightarrow
&\infty ,\text{ when }\theta \leq 2,  \label{sigma-drift-1} \\
m^{\theta \frac{\theta -2+\delta }{2\left( \theta -1\right) }}\sigma
_{\upsilon }\left( m\right) &\rightarrow &\infty ,\text{
when }\theta >2.  \label{sigma-drift-2}
\end{eqnarray}
\end{theorem}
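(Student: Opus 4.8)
The plan is to reduce the claim to a statement about the divergence rate of $\psi_{m,k}$ in (\ref{psi}), and then to read off how fast $\Delta_\beta(m)$ and $\sigma_\upsilon(m)$ may shrink while detection is preserved. As the heuristic preceding the theorem (and the proof of Theorem~\ref{monitoring}) indicates, (\ref{power}) follows once $\psi_{m,k}\rightarrow\infty$ almost surely along a block of indices $k$ rich enough that the corresponding $\Theta_{m,R}^{(k)}$ inherit the divergence of Theorem~\ref{theta-2} and the CUSUM $d(m;k)$ accumulates past the boundary $\nu^{\ast}(m;k)$. Hence the whole problem becomes re-examining the growth of $Q(m;k)$ in (\ref{q-2}) relative to $g(m;k)$ in (\ref{g-1}) when the break is local-to-null, i.e.\ under (\ref{ha-1-local})--(\ref{ha-2-local}).

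First I would isolate the non-centrality. Expanding the residuals (\ref{residual}) under $H^{\ast}_{A,1}$ produces, for $i>k^{\ast}$, an extra term $\Delta_\beta(m)'x_i$ inside $\widehat\epsilon_i$, so that $\sum_{i=m+1}^{m+k}\widehat\epsilon_i^2$ acquires the reliably positive contribution $\Delta_\beta(m)'\big(\sum_{k^\ast<i\le m+k}x_ix_i'\big)\Delta_\beta(m)$; under $H^{\ast}_{A,2}$ the analogous contribution is $\sigma_\upsilon(m)^2\sum_{k^\ast<i\le m+k}\upsilon_i^2$, the scaled cumulated square of the random walk $\upsilon_i$. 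The remaining summands --- the stationary part, the estimation-error part $(\beta-\widehat\beta_m)'x_i$, and the mean-zero cross terms --- are controlled exactly as in the null analysis underlying Theorem~\ref{monitoring}, so they are dominated by $g(m;k)$ and do not interfere.

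Next I would bound the non-centrality from below, almost surely, by a law-of-the-iterated-logarithm argument: for an $I(1)$ process one has $\sum_{i\le t}(\cdot)^2\gtrsim t^2/\ln\ln t$ eventually, which yields $\Delta_\beta(m)^2(m+k)^2(\ln\ln(m+k))^{-1}$ and its $\sigma_\upsilon(m)$ counterpart. Dividing by $g(m;k)$ and choosing $k$ to optimise the trade-off is where the dichotomy appears: the two summands of $g$, namely $(m+k)$ and $((m+k)/m)^2$, exchange dominance according to whether $m+k\lessgtr m^2$, i.e.\ according to whether $\theta\lessgtr 2$ once $k$ is taken of order $T_m=c_0m^{\theta}$. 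Tracking which component binds, and solving for the slowest admissible drift, produces (\ref{delta-drift-1})--(\ref{delta-drift-2}) under $H^{\ast}_{A,1}$; the identical computation with $\sigma_\upsilon(m)$ in place of $\Delta_\beta(m)$ produces (\ref{sigma-drift-1})--(\ref{sigma-drift-2}) under $H^{\ast}_{A,2}$, and the iterated-logarithm factors are harmless and absorbed into the arbitrary $\varepsilon>0$.

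The main obstacle is precisely this last rate bookkeeping: the divergence of $\psi_{m,k}$ must be secured almost surely through the LIL lower bound (rather than merely in mean), the optimisation over $k$ must be carried out carefully enough to expose both regimes and to guarantee that a non-negligible block of indices delivers diverging $\Theta_{m,R}^{(k)}$, and the value $\gamma=(1-\delta)/(\theta-1)$ from (\ref{gamma}) must be fed in so that the exponents collapse to the stated forms. A secondary, more technical point is that $H^{\ast}_{A,2}$ first requires a strong-approximation and LIL control of the scaled unit-root component $\sigma_\upsilon(m)\upsilon_i$ --- the local-to-null counterpart of Assumption~\ref{ha-2} --- before the same estimates can be reused.
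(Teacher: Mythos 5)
Your proposal follows essentially the same route as the paper's own proof: reduce (\ref{power}) to the almost-sure divergence of $Q\left( m;k\right) /g\left( m;k\right) $, identify the dominant non-centrality term ($\Delta _{\beta }^{2}\left( m\right) \sum x_{i}^{2}$, respectively $\sigma _{\upsilon }^{2}\left( m\right) \sum \upsilon _{i}^{2}$), lower-bound it by a LIL as $c_{0}\Delta _{\beta }^{2}\left( m\right) \left( m+k\right) ^{2}/\ln \ln \left( m+k\right) $, and then let the two summands of $g\left( m;k\right) $ exchange dominance according to whether $m+k\lessgtr m^{2}$ (i.e.\ $\theta \lessgtr 2$), substituting $\gamma =\left( 1-\delta \right) /\left( \theta -1\right) $ to obtain the stated exponents. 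This is exactly the argument in the paper (which likewise treats only the $\Delta _{\beta }$ case explicitly and reuses the decomposition from Lemma \ref{lemma6}), so the proposal is correct in approach, with only the final rate bookkeeping left to execute.
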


Theorem \ref{drift}, together with (\ref{gamma}), illustrates what happens
to the power of the monitoring procedure depending on the value of $\gamma $. As can be expected in light of the definition of $g\left( m;k\right) $, choosing a \textquotedblleft small\textquotedblright\ $\gamma$ (which corresponds to choosing $\delta$ close to $1$) enhances the power of the procedure, which, conversely, declines as $\gamma$ increases. This can be understood by noting that the noncentrality of $Q\left( m;k\right) $ is divided by $g\left( m;k\right) $ too. When $\theta \leq 2$, the procedure could potentially (depending on $\delta$) be able to detect breaks as small as $O\left( \frac{1}{m^{1-\epsilon}}\right)$, with $\epsilon > 0$ arbitrarily small. When $\theta > 2$ - that is, when monitoring goes on for a very long time - the ability to detect a small break increases. \\
Note that an alternative could have been to express the break as shrinking with the whole (calibration plus monitoring) sample size, $T$, as done in \citet{wied}. 

\subsection{The monitoring procedure in the presence of deterministics \label{deterministics}}

In this section, we consider the following extension of (\ref{model-1})%
\begin{equation}
y_{i}=\mu _{0}+\mu _{1}i+\beta ^{\prime }x_{i}+\epsilon _{i}.
\label{model-2}
\end{equation}%
Equation (\ref{model-2}) contains, with respect to the previous model, a
constant and a deterministic trend; other extensions could of course be
possible. Our hypothesis testing framework is the same as in the previous
section, namely we test for%
\begin{equation*}
H_{0}:\left\{ 
\begin{array}{l}
\beta _{i}=\beta \\ 
\epsilon _{i}\text{ is stationary}%
\end{array}%
\right. \text{ for }1\leq i\leq T_{m},
\end{equation*}%
versus the two alternatives%
\begin{eqnarray*}
H_{A,1} &:&\beta _{i}=\beta +\Delta _{\beta }I\left[ i>k^{\ast }\right] , \\
H_{A,2} &:&\epsilon _{i}=\epsilon _{i-1}+u_{i}^{\epsilon }\text{ for }%
k^{\ast }+1\leq i\leq T.
\end{eqnarray*}%
Note that, for the sake of a concise discussion, we do not consider changes
in $\mu _{0}$ or $\mu _{1}$, although again this would be perfectly possible in
principle.

Our monitoring scheme can be adapted as follows. As is typical in this case,
we propose to demean and detrend both $y_{i}$ and $x_{i}$, by estimating 
\begin{eqnarray*}
y_{i} &=&a_{0}+a_{1}i+u_{i}^{y}, \\
x_{i} &=&b_{0}+b_{1}i+u_{i}^{x},
\end{eqnarray*}%
using OLS, and then computing%
\begin{equation}
\widehat{\beta }_{m}^{d}=\left( \sum_{i=1}^{m}\widehat{u}_{i}^{x}\widehat{u}%
_{i}^{x\prime }\right) ^{-1}\sum_{i=1}^{m}\widehat{u}_{i}^{x}\widehat{u}%
_{i}^{y},  \label{beta-hat-detrend}
\end{equation}%
where $\widehat{u}_{i}^{x}$ and $\widehat{u}_{i}^{y}$ are the OLS residuals from the regressions above. After defining%
\begin{equation*}
\widetilde{\epsilon }_{i}=y_{i}-\widehat{\beta }_{m}^{d\prime }x_{i},
\end{equation*}%
for $m+1\leq i\leq T$, we use the recursively detrended residuals%
\begin{equation}
\widehat{\epsilon }_{i}^{d}=\widetilde{\epsilon }_{i}-\left( \widehat{\mu }%
_{0,i}+\widehat{\mu }_{1,i}i\right) ,  \label{res-detrend}
\end{equation}%
where%
\begin{equation*}
\left( 
\begin{array}{c}
\widehat{\mu }_{0,i} \\ 
\widehat{\mu }_{1,i}%
\end{array}%
\right) =\left[ \sum_{j=1}^{i}\left( 
\begin{array}{cc}
1 & j \\ 
j & j^{2}%
\end{array}%
\right) \right] ^{-1}\sum_{j=1}^{i}\left( 
\begin{array}{c}
\widetilde{\epsilon }_{j} \\ 
j\widetilde{\epsilon }_{j}%
\end{array}%
\right) .
\end{equation*}%

Note that other detrending schemes could be proposed also; for example, in
the construction of $\widehat{\epsilon }_{i}^{d}$, one could use
non-recursive estimates $\widehat{\mu }_{0}$ and $\widehat{\mu }_{1}$
(computed once and for all using the sample $1\leq i\leq m$). 
\newline
We then define, as in (\ref{q-2}), the cumulative process 
\begin{equation}
Q^{d}\left( m;k\right) =\left\vert \frac{1}{\widetilde{\sigma }_{\epsilon
}^{2}}\sum_{i=m+1}^{m+k}\left( \widehat{\epsilon }_{i}^{d}\right)
^{2}\right\vert ,  \label{q-detrend}
\end{equation}%
where the long-run variance estimator $\widetilde{\sigma }_{\epsilon }^{2}$
is computed exactly as in (\ref{sig-e-hat}), using $\widehat{\epsilon }%
_{i}^{d}$. We need the following assumption, which complements Assumption %
\ref{as-2}\textit{(i)}.

\begin{assumption}
\label{as-detrend}It holds that: (i) $E\left\Vert \sum_{i=1}^{t}i\epsilon
_{i}\right\Vert ^{2}\leq c_{0}t^{3}$, for all $1\leq t\leq T$, and (ii) $%
E\left\Vert \sum_{i=1}^{t}ix_{i}\right\Vert ^{2}\leq c_{0}t^{5}$, for all $%
1\leq t\leq T$.
\end{assumption}

The next theorem shows that, when using $Q^{d}\left( m;k\right) $ instead of 
$Q\left( m;k\right) $ in constructing the monitoring procedure, the same
results hold.

\begin{theorem}
\label{detrending}We assume that Assumptions \ref{as-2}-\ref{as-detrend} are
satisfied. Then, when constructing $d\left( m;k\right) $ using $Q^{d}\left(
m;k\right) $, (\ref{null-asy-1})-(\ref{power}) hold.
\end{theorem}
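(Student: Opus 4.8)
The plan is to reduce the detrended setting to the one already treated in Theorem~\ref{monitoring}, by showing that $Q^{d}\left(m;k\right)$ satisfies the same almost sure rate bounds as $Q\left(m;k\right)$. The key structural observation is that every object built on top of the cumulative process --- the ratio $\psi_{m,k}$ in (\ref{psi}), its transform $\widetilde{\psi}_{m,k}$ in (\ref{psi-1}), the randomised sequence $\Theta_{m,R}^{\left(k\right)}$, and finally the CUSUM detector $d\left(m;k\right)$ against the boundary $\nu^{\ast}\left(m;k\right)$ --- depends on the sample only through that process. Hence, if I can show that $\psi_{m,k}^{d}:=Q^{d}\left(m;k\right)/g\left(m;k\right)$ drifts to $0$ a.s. under $H_{0}$ and diverges to $+\infty$ a.s. under $H_{A,1}\cup H_{A,2}$, with the same rates established for $\psi_{m,k}$ in the proof of Theorem~\ref{monitoring}, then Theorems~\ref{theta-1}, \ref{theta-2} and~\ref{monitoring} carry over verbatim with $\Theta_{m,R}^{\left(k\right)}$ computed from $Q^{d}$, and (\ref{null-asy-1})-(\ref{power}) follow. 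The whole argument therefore concentrates on the almost sure analysis of $Q^{d}\left(m;k\right)$.

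First I would obtain an exact representation of the recursively detrended residual (\ref{res-detrend}). Substituting (\ref{model-2}) into $\widetilde{\epsilon}_{i}=y_{i}-\widehat{\beta}_{m}^{d\prime}x_{i}$ gives
\begin{equation*}
\widetilde{\epsilon}_{i}=\mu_{0}+\mu_{1}i+\left(\beta-\widehat{\beta}_{m}^{d}\right)^{\prime}x_{i}+\epsilon_{i}.
\end{equation*}
Because the recursive trend projection in (\ref{res-detrend}) is linear and annihilates the affine term $\mu_{0}+\mu_{1}i$ exactly, it follows that
\begin{equation*}
\widehat{\epsilon}_{i}^{d}=\epsilon_{i}^{\perp}+\left(\beta-\widehat{\beta}_{m}^{d}\right)^{\prime}x_{i}^{\perp},
\end{equation*}
where $\epsilon_{i}^{\perp}$ and $x_{i}^{\perp}$ denote $\epsilon_{i}$ and $x_{i}$ after subtracting their recursive (over $1\le j\le i$) OLS trend fit. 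Squaring and summing, $Q^{d}\left(m;k\right)$ splits into the three sums $\sum(\epsilon_{i}^{\perp})^{2}$, $\sum[(\beta-\widehat{\beta}_{m}^{d})^{\prime}x_{i}^{\perp}]^{2}$ and the cross term, mirroring exactly the decomposition of $Q\left(m;k\right)$ used in the baseline proof, now with $\epsilon_{i}^{\perp}$ and $x_{i}^{\perp}$ in place of $\epsilon_{i}$ and $x_{i}$.

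I would then verify that $\|\beta-\widehat{\beta}_{m}^{d}\|$ is of the same almost sure order as $\|\beta-\widehat{\beta}_{m}\|$ in the baseline case. Via Frisch--Waugh, (\ref{beta-hat-detrend}) is the OLS slope after partialling the trend out of $y_{i}$ and $x_{i}$; the strong approximation (\ref{sip-x}) together with the moment bounds of Assumption~\ref{as-detrend} --- which are exactly what is needed to control the trend cross-products $\sum ix_{i}$ and $\sum i\epsilon_{i}$ generated by the detrending --- yield the claimed rate, with parts~(iii)-(iv) of Assumption~\ref{as-2} handling the remaining terms as before. With this in hand the non-centrality in $Q^{d}$ behaves as required: under $H_{A,1}$ the recursively detrended $I(1)$ regressor still satisfies the quadratic growth and the strong divergence of $\sum(x_{i}^{\perp})^{2}$ used in the baseline proof, while under $H_{A,2}$ the recursively detrended unit-root $\epsilon_{i}^{\perp}$ retains a quadratic lower bound on $\sum(\epsilon_{i}^{\perp})^{2}$ by Assumption~\ref{ha-2}. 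In either case the non-centrality of $Q^{d}\left(m;k\right)$ dominates $g\left(m;k\right)$, so $\psi_{m,k}^{d}\to\infty$ a.s. under the alternatives.

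The main obstacle, and the step I would treat most carefully, is showing that the recursive detrending does not corrupt these rates uniformly over the monitoring horizon $m+1\le i\le m+k$, $1\le k\le T_{m}$. The trend projection in (\ref{res-detrend}) inverts $\sum_{j=1}^{i}\left(\begin{smallmatrix}1 & j\\ j & j^{2}\end{smallmatrix}\right)$, which is increasingly ill-conditioned, so the detrending corrections $\epsilon_{i}-\epsilon_{i}^{\perp}$ and $x_{i}-x_{i}^{\perp}$ must be bounded uniformly and shown to contribute to $Q^{d}\left(m;k\right)$ at an order strictly below $g\left(m;k\right)$ under $H_{0}$ (so that $\psi_{m,k}^{d}\to0$), while not cancelling the diverging non-centrality under the alternatives. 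Here Assumption~\ref{as-detrend}(i)-(ii), through Kolmogorov-type almost sure rate arguments applied to $\sum i\epsilon_{i}$ and $\sum ix_{i}$ (whose recursive trend-slope coefficients scale like $i^{-3}\sum_{j\le i}j\,\epsilon_{j}$ and $i^{-3}\sum_{j\le i}j\,x_{j}$), delivers the $o_{a.s.}$ bounds on the correction terms and their cross products that are needed. Once the corrections are shown to be negligible in the relevant rates, $Q^{d}\left(m;k\right)$ inherits the same upper bound under $H_{0}$ and the same lower bound under $H_{A,1}\cup H_{A,2}$ as $Q\left(m;k\right)$; this yields the required behaviour of $\psi_{m,k}^{d}$, and the reduction in the first paragraph completes the proof.
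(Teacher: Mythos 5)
Your overall strategy coincides with the paper's: reduce to Theorem \ref{monitoring} by showing that $Q^{d}(m;k)/g(m;k)$ vanishes a.s.\ under $H_{0}$ and diverges a.s.\ under the alternatives, via the decomposition of $\widehat{\epsilon}_{i}^{d}$ into $\epsilon_{i}$, the slope estimation error $(\beta-\widehat{\beta}_{m}^{d})x_{i}$, and the recursive trend-coefficient errors, with the latter controlled through Assumption \ref{as-detrend} exactly as in the paper's bounds (\ref{m-0})--(\ref{m-1}). Your observation that the recursive projection annihilates the affine part exactly is the same algebra as the paper's equation (\ref{error-2}), merely written in terms of residualised rather than raw $\epsilon_{i}$ and $x_{i}$.

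The one concrete gap is in your treatment of $\widehat{\beta}_{m}^{d}-\beta$. After Frisch--Waugh, the estimation error is $\bigl[\sum_{i=1}^{m}(\widehat{u}_{i}^{x})^{2}\bigr]^{-1}\sum_{i=1}^{m}\widehat{u}_{i}^{x}\epsilon_{i}$; the moment conditions you invoke (Assumption \ref{as-detrend} and parts \textit{(iii)}--\textit{(iv)} of Assumption \ref{as-2}) deliver only \emph{upper} bounds on the numerator and the denominator, whereas the claimed rate requires an almost sure \emph{lower} bound of order $m^{2}$ (up to logarithms) on $\sum_{i=1}^{m}(\widehat{u}_{i}^{x})^{2}$. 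In the baseline case that lower bound is Lemma \ref{lemma1}, obtained from the LIL for $\int W_{x}^{2}$ via \citet{donsker1977}; this does not transfer verbatim to the demeaned-and-detrended process. The paper's Lemma \ref{lemma8} supplies the missing ingredient: a small-deviation estimate $P\bigl(\int_{0}^{1}\widehat{W}_{x}^{2}(r)\,dr\leq f_{m}\bigr)\leq c_{0}f_{m}^{-1}\exp\bigl(-\tfrac{1}{8f_{m}}\bigr)$ for the detrended Brownian motion (Proposition 3.3 in \citet{wembo2012}), converted into an a.s.\ statement by Borel--Cantelli with $f_{m}^{-1}=8\ln\bigl(m(\ln m)^{2+\varepsilon}\bigr)$. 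Without some such lower-tail control of the detrended quadratic functional, your assertion that $\|\beta-\widehat{\beta}_{m}^{d}\|$ is of the same almost sure order as in the baseline is unsupported, and everything downstream --- the $H_{0}$ bound on $Q^{d}(m;k)$ and the negligibility of the slope-error terms relative to the noncentrality under $H_{A,1}\cup H_{A,2}$ --- rests on it.
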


\section{Numerical and empirical evidence\label{montecarlo}}

In this section, we illustrate the properties of our procedure through a
Monte Carlo exercise (Section \ref{simulations}), and through an application
to US housing market data (Section \ref{empirics}).

\subsection{Simulations\label{simulations}}

We consider the DGP in (\ref{model-1}), with the addition of a constant term, and with $p=1$, namely%
\begin{equation*}
y_{i}= \mu_{0} + \beta x_{i}+\epsilon _{i},\text{ with }x_{i}=\sum_{j=1}^{i}u_{j};
\end{equation*}%
to evaluate the finite sample performance of our proposed procedure. As discussed in section \ref{deterministics}, we can allow for a constant term in the DGP through recursive demeaning of the data. Incorporating a constant in this fashion allows us to directly compare our new procedure to the equivalent constant-only version of that proposed by \cite{wied}. Noting that our demeaned procedure is mean-invariant, we set $\mu_{0}=0$. We set $\beta =1$ for $1\leq i\leq m$, although unreported experiments show that, as can be expected, this value has no impact on the results. 

Innovations $\left\{ \epsilon _{i},u_{i}\right\} $ have been generated as
follows%
\begin{align}
u_{i}& =\rho ^{\left( x\right) }u_{i-1}+v_{i}^{u},  \label{ar-u} \\
\epsilon _{i}& =\left( \frac{1+\left( \rho ^{\left( x\epsilon \right)
}\right) ^{2}Var\left( v_{i}^{u}\right) }{1-\left( \rho ^{\left( \epsilon
\right) }\right) ^{2}}\right) ^{-1/2}\epsilon _{i}^{\ast },  \label{s-t-n} \\
\epsilon _{i}^{\ast }& =\rho ^{\left( \epsilon \right) }\epsilon
_{i-1}^{\ast }+v_{i}^{e}+\rho ^{\left( x\epsilon \right) }v_{i}^{u},
\label{ar-e}
\end{align}%
In (\ref{ar-u}), we allow for $AR\left( 1\right) $ dynamics in $u_{i}$,
setting $\rho ^{\left( x\right) }\in \left\{ 0,0.5\right\} $. We have also
experimented with other values, noting that results hardly change. In order
to control for the signal-to-noise ratio, we have generated the
idiosyncratic innovation $v_{i}^{u}$ as \textit{i.i.d.} $N\left( 0,\sigma
_{u}^{2}\right) $; by (\ref{s-t-n}). This entails that the signal-to-noise
ratio is exactly equal to $\sigma _{u}^{2}$, and we have used $\sigma
_{u}^{2}=2$ in our experiments. In unreported simulations, we considered $%
\sigma_{u}^{2}=1$, with qualitatively similar results. As far as (\ref{ar-e}%
) is concerned, we have generated $v_{i}^{e}$ as \textit{i.i.d.} $N\left(
0,1\right) $. Serial dependence in the error term $\epsilon _{i}$ is
explicitly allowed for through $\rho ^{\left( \epsilon \right) }$; note that
when $\rho ^{\left( \epsilon \right) }=1$, this corresponds to $H_{A,2}$,
that is, (\ref{model-1}) becomes a non-cointegrating regression. We report
results for $\rho ^{\left( \epsilon \right) }\in \left\{ 0,0.5,0.9\right\} $. In
(\ref{ar-e}), we also consider the possible presence of endogeneity through
the coefficient $\rho ^{\left( x\epsilon \right) }$, using $\rho ^{\left(
x\epsilon \right) }\in \left\{ 0,0.5\right\} $. The long-run variance of $%
\epsilon _{i}$\ is estimated as in (\ref{sig-e-hat}), setting $%
H=\left\lfloor m^{1/6}\right\rfloor $, where $\left\lfloor \cdot
\right\rfloor $ denotes the integer part.

As far as the other specifications of the experiment are concerned, we
report results for $T\in \left\{ 100,200,400\right\} $ and $m\in \left\{ 
\frac{T}{4},\frac{T}{2}\right\} $. When considering the presence of a break,
we have set the changepoint $k^{\ast }=m+\frac{T}{4}$. Experimenting with
other breakdates does not change results in any remarkable way. Under $%
H_{A,1}$, we have set $\beta _{i}=\beta +\Delta _{\beta }I\left[ i > k^{\ast
}\right] $, with $\Delta _{\beta }\in \left\{ 0.5,1\right\} $. In addition
to reporting empirical rejection frequencies under the alternative, we also
report the delay in changepoint detection, defined as%
\begin{equation}
delay=\frac{\widehat{k}_{m}-k^{\ast }}{k^{\ast }}.  \label{delay}
\end{equation}

We now turn to describing the implementation of the test and of the
randomisation algorithm. As far as the former is concerned, we have computed 
$g\left( m;k\right) $ setting, according to (\ref{g-1}), $\gamma =0.45$.
Results are similar, especially for large $m$, when using $\gamma =0.4$ and $%
\gamma =0.5$. 
In the randomisation algorithm, based on (\ref{restriction}), we set $R=m$;
altering this specification (which we have tried in some unreported
experiments) is virtually inconsequential on the empirical rejection
frequencies under both $H_{0}$ and $H_{A,1}\cup H_{A,2}$. Finally, we discuss the choice of $u$. Extracting $u$ from a
distribution - as we make explicit in Step 4 of our algorithm - has the
advantage that $u$ gets integrated out in the construction of $\Theta
_{m,R}^{\left( k\right) }$, making this invariant to the support of $u$
itself. In this respect, choosing $F\left( u\right) $ as the standard normal distribution is a possibility, which is very easy to implement. Indeed, in order to construct $\Theta _{m,R}^{\left( k\right) }$ practically, we can use a Gauss-Hermite quadrature to approximate the integral that defines it,
viz.%
\begin{equation}
\Theta _{m,R}^{\left( k\right) }=\frac{1}{\sqrt{\pi }}%
\sum_{s=1}^{n_{S}}w_{s}\left( \vartheta _{m,R}^{\left( k\right) }\left( \sqrt{2}%
z_{s}\right) \right) ^{2},  \label{upsilon-feasible}
\end{equation}%
where the $z_{s}$s, $1\leq s\leq n_{S}$, are the zeros of the Hermite
polynomial $H_{n_{S}}\left( z\right) $ and the weights $w_{s}$ are defined
as 
\begin{equation}
w_{s}=\frac{\sqrt{\pi }2^{n_{S}-1}\left( n_{S}-1\right) !}{n_{S}\left[
H_{n_{S}-1}\left( z_{s}\right) \right] ^{2}}.  \label{hermite-weights}
\end{equation}%
Thus, when constructing $\theta _{m,R}^{\left( k\right) }\left( u\right) $,
we construct $n_{S}$ of these statistics, each with $u=\sqrt{2}z_{s}$; the
values of the roots $z_{s}$, and of the corresponding weights $w_{s}$, are
tabulated e.g. in \citet{salzer}. in our case, we have used $n_{S}=2$, so
that $u=\pm 1$ with equal weight $\frac{1}{2}$; we note that in unreported
experiments we tried $n_{S}=4$ with the corresponding weights, but there
were no changes up to the $4$-th decimal in the empirical rejection
frequencies.  

We report results for $\eta=\{0,0.45,0.49,0.50\}$ for the
threshold function in (\ref{nu-2}). We offer a direct comparison of our procedure to that of \cite{wied}. We focus our attention on the IM-OLS version of their test in our simulations as the authors state a preference for IM-OLS, relative to the FM-OLS and D-OLS approaches that they also consider, on the basis of its finite sample performance. We denote this procedure $WW$--$IM$ in what follows. The nominal significance level $\alpha$, in the computation of critical values defined in (\ref{eta-1}) and (\ref{eta-2}), has been set as $\alpha =0.05$. Finally, all experiments have
been carried out using $1,000$ replications.

Empirical rejection frequencies under $H_{0}$ are reported in Table %
\ref{tab:Table1}. 
We point out that, in our context, the notion of size (control) differs from the standard
Neyman-Pearson testing paradigm. In the latter, empirical rejection
frequencies are expected to be close to their nominal level. In the context
of a sequential testing procedure like ours, as pointed out by %
\citet{lajos07} (see also the comments in Ch. 9 in \citealp{Sen}), the
primary purpose is to keep the false detection rate \textit{below} the
chosen level $\alpha$. Indeed, the proportion of false discoveries should go
to zero, since the monitoring can continue for an infinite amount of time.
In this respect, whilst this is the case for all four values of $\eta$
considered in our procedure, setting $\eta=0$ yields the best results, with empirical
rejection frequencies approaching zero across many settings of $m$ and $T$. Examining Panel A of the table, the case of no serial dependence in the error terms, it is clear that for $T=100$, $m=25$ all test procedures exhibit empirical rejection frequencies above their nominal significance levels, with the degree of distortion higher for our procedures than for the $WW$--$IM$ procedure. However, for $T=100$, $m=50$, whilst $WW$--$IM$ offer slightly inflated empirical rejection frequencies (between 0.052 and 0.080 depending on the value of $\rho ^{\left( x \epsilon \right) }$ and $\rho ^{\left( x  \right) }$), our test procedure offers empirical rejection frequencies lower than the nominal significance level. For $T=200$ and $T=400$ we observe a similar pattern of results to that under $T=100$, $m=50$; note that the $WW$--$IM$ procedure exhibits a higher degree of upwards distortion (with frequencies up to 0.080 observed).  

Allowing for \textit{AR(1)} dynamics in $u_{i}$ has a small upwards effect on the empirical rejection frequencies of all tests for $T=100$ and $m=25$. For all other combinations of $m$ and $T$, little effect is observed for our test procedures, with no effect at all in the case of $T=400$, whereas the $WW$--$IM$ test is observed to be a little more sensitive to these \textit{AR(1)} dynamics. Allowing for endogeneity results in modest increases in the empirical rejection frequencies for our procedures, for most settings of $m$ and $T$, whereas it has the opposite effect for the the $WW$--$IM$ test, resulting in modest decreases in size relative to the no endogeneity case. 

Considering Panel B, serial dependence in $\epsilon _{i}$ of $\rho ^{\left( \epsilon \right) }=0.5$ results in upwards size distortion for $WW$--$IM$ for all settings of $m$, $T$, $\rho ^{\left( x \epsilon \right) }$ and $\rho ^{\left( x \right) }$. In contrast, with the exception of $T=100$, $m=25$, where all procedures exhibit empirical rejection frequencies somewhat higher than the nominal significance level, the size of our procedures are robust to this degree of serial correlation, with only very small differences observed in the empirical rejection frequencies from the no serial dependence case (no larger than 0.006 for the settings considered here). This is a pleasing result, given that serial dependence is likely in practice. Finally, turning our attention to Panel C, the case of high serial dependence, $\rho ^{\left( \epsilon \right) }=0.9$, we find that the $WW$--$IM$ procedure exhibits substantial over-sizing for all combinations of $m$, $T$, $\rho ^{\left( x \epsilon \right) }$ and $\rho ^{\left( x \right) }$. As before, with the exception of $T=100$, $m=25$, our procedures display less size distortion than those of $WW$--$IM$.  When examining the performance of our procedures, we notice here the role that $m$ plays, with smaller empirical rejection frequencies observed for $m=\frac{T}{2}$ than for $m=\frac{T}{4}$, for a given $T$; and with empirical rejection frequencies decreasing as $T$ increases for a given setting of $m$. 

Empirical rejection frequencies and the associated detection delays under $%
H_{A,1}$ (i.e. under a change in $\beta$ in the cointegrating regression)
are reported in Tables \ref{tab:Table2a} and \ref{tab:Table2b} respectively.
Considering first the rejection frequencies in Table \ref{tab:Table2a}, in the case of no serial dependence in the errors (Panel A), it is clear that our monitoring procedures offer excellent power, with $\eta=\{0.45,0.49,0.5\}$ outperforming $WW$--$IM$ over most combinations of $\Delta _{\beta }$, $T$, $m$, $\rho ^{\left( x \right) }$, and $\rho ^{\left( x \epsilon \right) }$. There are only 8 instances out of the 48 different combinations of settings in Panel A where $WW$--$IM$ achieves higher power, all cases where $m=\frac{T}{2}$. This result is somewhat anticipated, given that our test allows for $T_{m} \rightarrow \infty$, assumes that the monitoring horizon should go on for a sufficiently long time, and is thus expected to perform better for small $m$ relative to $T_{m}$, whereas \citet{wied} choose $m$ to be large relative to $T$, as discussed in section \ref{sequential}. It is pleasing however, that even in the case of $m=\frac{T}{2}$, our procedures outperform $WW$--$IM$ in terms of power in the majority of cases. Despite its small empirical rejection frequencies under the null, our procedure with $\eta=0$ also performs very well in terms of power, with rejection frequencies under $H_{A,1}$ very similar to those of $\eta=\{0.45,0.49,0.5\}$ in most cases. In addition to the effects of $\rho ^{\left( x \right) }$, and $\rho ^{\left( x \epsilon \right) }$, our procedures appear to be robust to serial dependence in the errors, with high levels of
power maintained under different settings of $\rho ^{\left( \epsilon \right)
}$. Comparing the four values of $\eta$ that we consider here, no setting uniformly outperforms the others,
with little difference in rejection frequencies observed between these values. 

Turning our attention to the detection delays reported in Table \ref%
{tab:Table2b}, we observe that increasing $m$, increasing $T$, and
increasing $\Delta _{\beta }$ all contribute towards reducing the detection
delay, as we might expect. Contrary to the empirical rejection frequencies,
when considering detection delays a ranking does emerge amongst the
different values of $\eta$ for our procedure, with $\eta=0$ resulting in a longer detection
delay relative to the other settings. This detection delay can be seen as a
trade-off for the very small null rejection frequencies exhibited in Table \ref{tab:Table1}. Setting $%
\eta=\{0.45,0.49,0.5\}$ produces the shortest detection delay across
the various settings of $m$, $T$ and $\Delta _{\beta }$ considered here, with very little to distinguish between these settings. Our procedure is capable of detecting a break in the parameters of the
cointegrating regression shortly after the break occurs, as little as 2.7
observations on average after the break for the case of $T=400$, $m=200$ and 
$\Delta _{\beta }=1$, where $\rho ^{\left(\epsilon \right)}=0$, $\rho ^{\left(x \epsilon \right)}=0$, $\rho ^{\left(x\right)}=0.5$ and $\eta=0.49$. The $WW$--$IM$ procedure incurs a longer detection delay than our test procedure for every setting considered here.

Finally, empirical rejection frequencies and detection delays under $H_{A,2}$
(i.e. under a switch from a cointegrating to a non-cointegrating regression)
are given in Tables \ref{tab:Table3a} and \ref{tab:Table3b} respectively.
Considering first the rejection frequencies in Table \ref{tab:Table3a}, we note that our procedure is able to offer good levels of power against this alternative hypothesis for most settings. Relative to our results for $H_{A,1}$ in Table \ref{tab:Table2a}, increasing $m$ has a more severe effect on the empirical rejection frequencies, particularly for smaller values of $T$. Examining Panel A, our procedure outperforms $WW$--$IM$ in the majority of cases. Exceptions occur in some instances where $m=\frac{T}{2}$, $\rho ^{\left(x \epsilon \right)}=0.5$.

Comparing Panels A with Panels B and C, it is clear that serial correlation in the errors has the effect of reducing the empirical rejection frequencies for all tests, with a higher degree of serial correlation corresponding to a lower rejection frequency. Of course, this result is to be anticipated given the nature of the alternative hypothesis. In general, with serial correlation of $\rho ^{\left(\epsilon \right)}\{=0.5,0.9\}$, our procedure performs better for $m=\frac{T}{4}$ and $WW$--$IM$ performs better for $m=\frac{T}{2}$, although we note that the empirical rejection frequencies reported here are not size-adjusted, and given the degree of over-sizing exhibited by especially $WW$--$IM$ in Table \ref{tab:Table1}, it is hard to directly compare the tests' performance.

Considering the detection delays under $H_{A,2}$ in Table \ref{tab:Table3b}, we again observe that the delay decreases as $m$ and $T$ increase. We note that delay detections are generally longer under $H_{A,2}$ than for equivalent settings under $H_{A,1}$. As with the detection delays under $H_{A,1}$, when considering our procedure, setting $\eta=0$ provides the longest delay in detection, with $\eta=\{0.45,0.49,0.5\}$ providing the quickest detection of a break. $WW$--$IM$ exhibits a longer detection delay than our procedure with $\eta=\{0.45,0.49,0.5\}$ across all settings, except in one instance\footnote{$T=400$, $m=200$, $\rho ^{\left(\epsilon \right)}=0$, $\rho ^{\left(x \epsilon \right)}=0$ and $\rho ^{\left(x \right)}=0.5$ } where it is still outperformed by $\eta=\{0.45,0.5\}$.

When considering detection delays, it is possible that a test detects a break prematurely, which would lead to a negative delay for that replication according to (4.4), which in turn could result in a misleadingly low reported average delay in Tables \ref{tab:Table2b} and \ref{tab:Table3b}. To further examine the estimated break dates found by these procedures, and to verify whether premature detection is of concern here, in Figures \ref{histHA1} and \ref{histHA2} we consider histograms of the estimated break dates found by our procedure using $\eta=0$ and $\eta=0.45$, as well as the $WW$--$IM$ procedure. For simplicity, we consider the case of $\rho ^{\left(x \right)}=0$, $\rho ^{\left(\epsilon \right)}=0$ and $\rho ^{\left(x \epsilon \right)}=0$. Figure \ref{histHA1} displays estimated break dates under $H_{A,1}$, with \ref{HA1_200} considering $T=200$, $m=\frac{T}{4}$ and $\Delta_{\beta}=1$. It is clear that our procedure using either $\eta=0$ or $\eta=0.45$ provides more accurate break date estimation than $WW$--$IM$, with only a very small difference between these settings of $\eta$. A premature break date is found in only a handful of replications, in the case of $\eta=0.45$, suggesting that early detection is not a significant problem for our test.  In Figure \ref{histHA2} we set $T=400$, $m=\frac{T}{2}$ and $\Delta_{\beta}=0.5$, a more challenging circumstance for our procedure as it is designed for small $m$ relative to $T_{m}$. Nevertheless, our procedure displays more accuracy than that of $WW$-$IM$ here. 

Figure \ref{histHA2} displays estimated break dates under $H_{A,2}$, with \ref{HA2_200} and \ref{HA2_400} considering the same settings of $T$ and $m$ as in \ref{HA1_200} and \ref{HA1_400} respectively. Again, we are able to note the accuracy of our procedure relative to $WW$--$IM$, and the relatively small numbers of replications where a break is detected before the true break date, $k^{*}$.

Although $\eta=0$ provides the lowest null empirical rejection frequencies, we argue that a sequential monitoring test based on $\eta=0.45$ provides the best overall performance given that it maintains a null empirical rejection frequency below the nominal significance level across most settings of $m$ and $T$, as well as providing the shortest detection delays under both $%
H_{A,1}$ and $H_{A,2}$ (although we note that there is very little difference in performance between $\eta=\{0.45,0.49,0.5\}$).

\subsection{Empirical application\label{empirics}}

To demonstrate the practical relevance of the procedure developed in section %
\ref{theory}, and inspired by the empirical work of \citet{anundsen} and %
\citet{wied}, we investigate the possibility that the US housing market
experienced a structural break in cointegration. Based on the life-cycle
model of housing under the assumption of no arbitrage for the housing
market, \citet{anundsen} analyses two fundamentals-driven cointegrating
relationships. The first approach, known as the \textit{price-to-rent}
model, relies on the user cost of a property being equal to the cost of
renting a property of similar quality in equilibrium, and is given by: 
\begin{equation}  \label{pricerent}
ph_{t} = \gamma_{r}r_{t} + \gamma_{UC}UC_{t} + u_{t}
\end{equation}
where $ph_{t}$ is the logarithm of real housing prices at period $t$, $r_{t}$
is the logarithm of real rents, and $UC_{t}$ is the real direct user cost of
housing, computed as 
\begin{equation*}
UC_{t} = (1-\tau^{y}_{t})(i_{t}+\tau^{p}_{t}) - \pi_{t} + \delta_{t},
\end{equation*}
where $\tau^{y}_{t}$ is the marginal personal income tax rate (measured here
at twice the median income), $\tau^{p}_{t}$ is the marginal tax rate on
personal property, $i_{t}$ is the nominal interest rate, $\pi_{t}$ is
overall price inflation, and $\delta_{t}$ is the housing depreciation rate.

The second approach, known as the \textit{inverted demand} model, assumes
that imputed rent is a function of income and housing stock, and is given by
the below equation: 
\begin{equation}
ph_{t}=\widetilde{\gamma }_{y}y_{t}+\widetilde{\gamma }_{h}h_{t}+\widetilde{%
\gamma }_{UC}UC_{t} + \widetilde{u}_{t},  \label{invdemand}
\end{equation}%
where $y_{t}$ is the logarithm of real per capita disposable income and $%
h_{t}$ is the logarithm of the per capita housing stock.

Assuming that the variables in (\ref{pricerent}) and (\ref{invdemand}) are $%
I(1)$, economic theory predicts that $u_{t}$ and $\widetilde{u}_{t}$ are
both $I(0)$. That is, two cointegrating relationships exist between housing
prices and their fundamentals. A breakdown of these cointegrating
regressions therefore indicates that housing prices are no longer being
driven by these fundamentals. Following the definition of \cite{stiglitz}, 
\textit{inter alia}, that an asset bubble exists when its price no longer
appears to be justified by the value of its fundamental components, \cite%
{anundsen} interprets a breakdown in these cointegrating relationships as
evidence of a bubble in housing prices. Indeed, following \citet{anundsen},
we also allow for an intercept and linear trend term in each model, viz. 
\begin{equation}
ph_{t}=\theta _{1}+\theta _{2}t+\theta _{3}r_{t}+\theta _{4}UC_{t}+u_{t},
\label{ptr-2}
\end{equation}%
and 
\begin{equation}
ph_{t}=\widetilde{\theta }_{1}+\widetilde{\theta }_{2}t+\widetilde{\theta }%
_{3}y_{t}+\widetilde{\theta }_{4}h_{t}+\widetilde{\theta }_{4}UC_{t}+%
\widetilde{u}_{t}.  \label{id-2}
\end{equation}

\citet{anundsen} applies models (\ref{ptr-2})-(\ref{id-2}) to quarterly US
housing market data over the sample period 1976:Q1 - 2010:Q4. Specifically,
he estimates vector autoregression models and undertakes Johansen
cointegration testing for both the price-to-rent and inverted demand
equations using expanding sub-samples of the data, starting with an initial
sub-sample from 1976:Q1 - 1995:Q4 and then subsequently adding four new
observations until the full sample is used. This analysis finds evidence of
a cointegrating relationship in the housing market up until 2002 in the
price-to-rent model; evidence in favour of cointegration disappears when
2002:Q4 is included in the sample, but there is evidence of a return to a
cointegrating relationship towards the end of the sample. As far as the
inverted demand model is concerned, a similar pattern is found, with
cointegration breaking down in 2001:Q4. These results imply the emergence of
bubble behaviour in the housing market beginning in 2001-2002; however, %
\citet{wied} highlight that the analysis suffers from the problem of
multiple testing, leading to uncontrolled size. Considering the same
dataset, they apply their real time monitoring procedure to models (\ref%
{ptr-2})-(\ref{id-2}), with $WW$--$IM$ detecting a breakdown in cointegration at 2006:Q4 for the price-to-rent model and 2004:Q2 for the inverted demand model (with the FM-OLS version of the procedure finding a slightly earlier break of 2003:Q2 for this model). This delay in detection, relative to the results of \citet{anundsen}, can be viewed as a trade-off for asymptotic validity, and therefore controlled size.

We apply our sequential monitoring procedure to the dataset discussed above, containing information on US house prices from 1976:Q1 - 2010:Q4.\footnote{Detailed information on the dataset sources and construction are contained
in \cite{anundsen}. The dataset has been downloaded from the \textit{Journal
of Applied Econometrics} data archive.} In line with the previous two
studies, our calibration sample runs from 1976:Q1 -
1995:Q4, such that $m=80$; effectively, this means that \textquotedblleft future data\textquotedblright (and our monitoring) starts in 1996:Q1 (this being the earliest possible break date that we can detect). We set $R=m$ and $n_{s}=2$ as before. We allow for a constant and linear trend in both models through the recursive demeaning and detrending method discussed in section \ref{deterministics}. In view of our
simulation results in the previous section, we have used $\eta=0.45$.%
\footnote{%
Although, in unreported results, we note that setting $\eta=\{0,0.49,0.50\}$
provides the same break date estimate as $\eta=0.45$ for both models.} Similarly, based on the Monte Carlo evidence, we set $\gamma=0.4$, which should ensure size control, while decreasing the detection delay.

Figure \ref{empexgraphs} displays the residuals obtained from our estimation of the price to rent and inverted demand models, using recursive demeaning and detrending. From visual inspection, it is
clear that the residuals of both models undergo a period of mean-reverting
behaviour in the earlier part of the sample, whereas more persistent
behaviour in the residuals is observed from the early 2000s, lending support
to the hypothesis that a structural break in cointegration occurs during the
sample period. Considering first the price-to-rent model in Figure \ref{ptr}, our sequential monitoring procedure finds evidence of a break in
cointegration in 2005:Q3, 5 quarters earlier than the $WW$--$IM$ test, although
still somewhat later than the detection date in the initial experiment of 
\cite{anundsen}. Examining the inverted demand model, in Figure \ref{id},
evidence is found of a break in cointegration in 2004:Q1, somewhat earlier than for the price-to-rent model, in line with the results of \cite{wied} and \cite{anundsen}. Thus, our results support the claim of a breakdown in fundamentals-driven cointegrating relationships in the US housing markets during the housing bubble of the 2000s. 

\section{Conclusions\label{conclusions}}

In this paper we have investigated the issue of monitoring a cointegrating
regression. Having stability as the null hypothesis, we develop a procedure
to detect changes in the regression coefficients and/or from
cointegration to non-cointegration. Our procedure is based
on using the cumulative sums of squared residuals; at each point in the
monitoring horizon, we randomise the cumulative sum process, thereby
obtaining an \textit{i.i.d.} sequence with finite moments of arbitrarily
high order. We then use the results in \citet{lajos04} and \citet{lajos07}
to construct a family of procedures which may be viewed as a complement to
the results in \citet{wied}.

We point out that, as well as deriving the aforementioned statistics, in
this paper we have proposed a general methodology to construct monitoring
schemes in the context of a cointegrating regression. The approach we
propose can be readily generalised to use other statistics (e.g., upon
calculating the relevant rates, even the KPSS type statistic employed in %
\citet{wied} could be randomised and used in our algorithm), or to other
hypothesis testing frameworks. As a leading example, \citet{wagner} consider
the very interesting case where (\ref{model-1}) is, to begin with, a
non-cointegrating regression with $\epsilon _{i}\sim I\left( 1\right) $, and
the purpose of monitoring is to verify whether (\ref{model-1}) becomes a
cointegrating regression, with $\epsilon _{i}\sim I\left( 0\right) $.
Although we leave this interesting research question for future study, we
point out that a monitoring scheme for this case could be readily developed.
Indeed, one could use exactly the same approach as we do, using 
\begin{equation*}
\widehat{\psi }_{m,k}=\exp \left( \psi _{m,k}\right) -1,
\end{equation*}%
instead of (\ref{psi-1}). This and others issues are under investigation by
the authors.

{\small {\setlength{\bibsep}{.2cm} 
\bibliographystyle{chicago}
\bibliography{biblio}

\begin{thebibliography}{}

\bibitem[\protect\citeauthoryear{Ai, Li, and Liu}{Ai et~al.}{2012}]{wembo2012}
Ai, X., W.~V. Li, and G.~Liu (2012).
\newblock Karhunen--loeve expansions for the detrended brownian motion.
\newblock {\em Statistics \& Probability Letters\/}~{\em 82\/}(7), 1235--1241.

\bibitem[\protect\citeauthoryear{Anundsen}{Anundsen}{2015}]{anundsen}
Anundsen, A.~K. (2015).
\newblock Econometric regime shifts and the {US} subprime bubble.
\newblock {\em Journal of Applied Econometrics\/}~{\em 30\/}(1), 145--169.

\bibitem[\protect\citeauthoryear{Aue, H{\"o}rmann, Horv{\'a}th,
  Hu{\v{s}}kov{\'a}, and Steinebach}{Aue et~al.}{2012}]{aue12}
Aue, A., S.~H{\"o}rmann, L.~Horv{\'a}th, M.~Hu{\v{s}}kov{\'a}, and J.~G.
  Steinebach (2012).
\newblock Sequential testing for the stability of high-frequency portfolio
  betas.
\newblock {\em Econometric Theory\/}~{\em 28}, 804--837.

\bibitem[\protect\citeauthoryear{Aue and Horv{\'a}th}{Aue and
  Horv{\'a}th}{2004}]{aue2004}
Aue, A. and L.~Horv{\'a}th (2004).
\newblock Delay time in sequential detection of change.
\newblock {\em Statistics \& Probability Letters\/}~{\em 67\/}(3), 221--231.

\bibitem[\protect\citeauthoryear{Aue, Horv{\'a}th, and Steinebach}{Aue
  et~al.}{2006}]{aue2006}
Aue, A., L.~Horv{\'a}th, and J.~Steinebach (2006).
\newblock Estimation in random coefficient autoregressive models.
\newblock {\em Journal of Time Series Analysis\/}~{\em 27\/}(1), 61--76.

\bibitem[\protect\citeauthoryear{Bandi and Corradi}{Bandi and
  Corradi}{2014}]{bandi2014}
Bandi, F.~M. and V.~Corradi (2014).
\newblock Nonparametric nonstationarity tests.
\newblock {\em Econometric Theory\/}~{\em 30\/}(1), 127--149.

\bibitem[\protect\citeauthoryear{Barigozzi and Trapani}{Barigozzi and
  Trapani}{2017}]{bt1}
Barigozzi, M. and L.~Trapani (2017).
\newblock Sequential testing for structural stability in approximate factor
  models.
\newblock {\em arXiv preprint arXiv:1708.02786\/}.

\bibitem[\protect\citeauthoryear{Barigozzi and Trapani}{Barigozzi and
  Trapani}{2018}]{bt3}
Barigozzi, M. and L.~Trapani (2018).
\newblock Determining the dimension of factor structures in non-stationary
  large datasets.
\newblock {\em arXiv preprint arXiv:1806.03647\/}.

\bibitem[\protect\citeauthoryear{Berkes, Horv{\'a}th, Kokoszka, and
  Shao}{Berkes et~al.}{2005}]{berkesbartlett}
Berkes, I., L.~Horv{\'a}th, P.~Kokoszka, and Q.-M. Shao (2005).
\newblock Almost sure convergence of the {B}artlett estimator.
\newblock {\em Periodica Mathematica Hungarica\/}~{\em 51\/}(1), 11--25.

\bibitem[\protect\citeauthoryear{Berkes, Liu, and Wu}{Berkes
  et~al.}{2014}]{berkesliuwu}
Berkes, I., W.~Liu, and W.~B. Wu (2014).
\newblock Koml{\'o}s--{Major}--{Tusn{\'a}dy} approximation under dependence.
\newblock {\em The Annals of Probability\/}~{\em 42\/}(2), 794--817.

\bibitem[\protect\citeauthoryear{Billingsley}{Billingsley}{2013}]{billingsley}
Billingsley, P. (2013).
\newblock {\em Convergence of probability measures}.
\newblock John Wiley \& Sons.

\bibitem[\protect\citeauthoryear{Busetti and Taylor}{Busetti and
  Taylor}{2004}]{busetti}
Busetti, F. and A.~R. Taylor (2004).
\newblock Tests of stationarity against a change in persistence.
\newblock {\em Journal of Econometrics\/}~{\em 123\/}(1), 33--66.

\bibitem[\protect\citeauthoryear{Chan and Wei}{Chan and Wei}{1988}]{chanwei}
Chan, N.~H. and C.~Wei (1988).
\newblock Limiting distributions of least squares estimates of unstable
  autoregressive processes.
\newblock {\em The Annals of Statistics\/}, 367--401.

\bibitem[\protect\citeauthoryear{Chu, Stinchcombe, and White}{Chu
  et~al.}{1996}]{chu}
Chu, C., M.~Stinchcombe, and H.~White (1996).
\newblock Monitoring structural change.
\newblock {\em Econometrica\/}~{\em 64}, 1045--1066.

\bibitem[\protect\citeauthoryear{Corradi and Swanson}{Corradi and
  Swanson}{2006}]{corradi2006}
Corradi, V. and N.~R. Swanson (2006).
\newblock The effect of data transformation on common cycle, cointegration, and
  unit root tests: Monte carlo results and a simple test.
\newblock {\em Journal of Econometrics\/}~{\em 132\/}(1), 195--229.

\bibitem[\protect\citeauthoryear{Cs{\"o}rg{\H{o}} and
  Horv{\'a}th}{Cs{\"o}rg{\H{o}} and Horv{\'a}th}{1997}]{csorgo1997}
Cs{\"o}rg{\H{o}}, M. and L.~Horv{\'a}th (1997).
\newblock {\em Limit theorems in change-point analysis}, Volume~18.
\newblock John Wiley \& Sons Inc.

\bibitem[\protect\citeauthoryear{Dette and G{\"o}smann}{Dette and
  G{\"o}smann}{2019}]{dette2019likelihood}
Dette, H. and J.~G{\"o}smann (2019).
\newblock A likelihood ratio approach to sequential change point detection for
  a general class of parameters.
\newblock {\em Journal of the American Statistical Association\/}, 1--17.

\bibitem[\protect\citeauthoryear{Donsker and Varadhan}{Donsker and
  Varadhan}{1977}]{donsker1977}
Donsker, M. and S.~Varadhan (1977).
\newblock On laws of the iterated logarithm for local times.
\newblock {\em Communications on Pure and Applied Mathematics\/}~{\em 30\/}(6),
  707--753.

\bibitem[\protect\citeauthoryear{G{\"o}tze and Zaitsev}{G{\"o}tze and
  Zaitsev}{2009}]{gotze}
G{\"o}tze, F. and A.~Y. Zaitsev (2009).
\newblock Bounds for the rate of strong approximation in the multidimensional
  invariance principle.
\newblock {\em Theory of Probability \& Its Applications\/}~{\em 53\/}(1),
  59--80.

\bibitem[\protect\citeauthoryear{Groen, Kapetanios, and Price}{Groen
  et~al.}{2013}]{kap-monitor}
Groen, J.~J., G.~Kapetanios, and S.~Price (2013).
\newblock Multivariate methods for monitoring structural change.
\newblock {\em Journal of Applied Econometrics\/}~{\em 28\/}(2), 250--274.

\bibitem[\protect\citeauthoryear{Homm and Breitung}{Homm and
  Breitung}{2012}]{homm}
Homm, U. and J.~Breitung (2012).
\newblock Testing for speculative bubbles in stock markets: a comparison of
  alternative methods.
\newblock {\em Journal of Financial Econometrics\/}~{\em 10\/}(1), 198--231.

\bibitem[\protect\citeauthoryear{Horv\'{a}th, Hu\v{s}kov\'{a}, Kokoszka, and
  Steinebach}{Horv\'{a}th et~al.}{2004}]{lajos04}
Horv\'{a}th, L., M.~Hu\v{s}kov\'{a}, P.~Kokoszka, and J.~Steinebach (2004).
\newblock Monitoring changes in linear models.
\newblock {\em Journal of Statistical Planning and Inference\/}~{\em 126},
  225--251.

\bibitem[\protect\citeauthoryear{Horv\'{a}th, Kokoszka, and
  Steinebach}{Horv\'{a}th et~al.}{2007}]{lajos07}
Horv\'{a}th, L., P.~Kokoszka, and J.~Steinebach (2007).
\newblock On sequential detection of parameter changes in linear regression.
\newblock {\em Statistics and Probability Letters\/}~{\em 80}, 1806--1813.

\bibitem[\protect\citeauthoryear{Horv{\'a}th and Trapani}{Horv{\'a}th and
  Trapani}{2019}]{HT16}
Horv{\'a}th, L. and L.~Trapani (2019).
\newblock Testing for randomness in a random coefficient autoregression model.
\newblock {\em Journal of Econometrics\/}~{\em 209\/}(2), 338--352.

\bibitem[\protect\citeauthoryear{Kejriwal and Perron}{Kejriwal and
  Perron}{2010}]{kejriwal}
Kejriwal, M. and P.~Perron (2010).
\newblock Testing for multiple structural changes in cointegrated regression
  models.
\newblock {\em Journal of Business \& Economic Statistics\/}~{\em 28\/}(4),
  503--522.

\bibitem[\protect\citeauthoryear{Koml\'{o}s, Major, and Tusn\'{a}dy}{Koml\'{o}s
  et~al.}{1975}]{kmt1}
Koml\'{o}s, J., P.~Major, and G.~Tusn\'{a}dy (1975).
\newblock An approximation of partial sums of independent {R.V.'s} and the
  sample {DF.I}.
\newblock {\em Z. Wahrscheinlichkeitstheorie und verwandte Gebiete\/}~{\em 32},
  111--131.

\bibitem[\protect\citeauthoryear{Koml\'{o}s, Major, and Tusn\'{a}dy}{Koml\'{o}s
  et~al.}{1976}]{kmt2}
Koml\'{o}s, J., P.~Major, and G.~Tusn\'{a}dy (1976).
\newblock An approximation of partial sums of independent {R.V.'s} and the
  sample {DF.II}.
\newblock {\em Z. Wahrscheinlichkeitstheorie und verwandte Gebiete\/}~{\em 34},
  33--58.

\bibitem[\protect\citeauthoryear{Moricz}{Moricz}{1983}]{moricz1983}
Moricz, F. (1983).
\newblock A general moment inequality for the maximum of the rectangular
  partial sums of multiple series.
\newblock {\em Acta Mathematica Hungarica\/}~{\em 41}, 337--346.

\bibitem[\protect\citeauthoryear{Phillips}{Phillips}{1988}]{phillipsweak}
Phillips, P. C.~B. (1988).
\newblock Weak convergence of sample covariance matrices to stochastic
  integrals via martingale approximations.
\newblock {\em Econometric Theory\/}~{\em 4\/}(3), 528--533.

\bibitem[\protect\citeauthoryear{Phillips and Durlauf}{Phillips and
  Durlauf}{1986}]{durlauf}
Phillips, P. C.~B. and S.~N. Durlauf (1986).
\newblock Multiple time series regression with integrated processes.
\newblock {\em The Review of Economic Studies\/}~{\em 53\/}(4), 473--495.

\bibitem[\protect\citeauthoryear{Phillips and Hansen}{Phillips and
  Hansen}{1990}]{phillips-hansen}
Phillips, P. C.~B. and B.~E. Hansen (1990).
\newblock Statistical inference in instrumental variables regression with {I
  (1)} processes.
\newblock {\em The Review of Economic Studies\/}~{\em 57\/}(1), 99--125.

\bibitem[\protect\citeauthoryear{Phillips and Solo}{Phillips and
  Solo}{1992}]{solo}
Phillips, P. C.~B. and V.~Solo (1992).
\newblock Asymptotics for linear processes.
\newblock {\em The Annals of Statistics\/}, 971--1001.

\bibitem[\protect\citeauthoryear{Saikkonen}{Saikkonen}{1991}]{saikkonen}
Saikkonen, P. (1991).
\newblock Asymptotically efficient estimation of cointegration regressions.
\newblock {\em Econometric Theory\/}~{\em 7\/}(1), 1--21.

\bibitem[\protect\citeauthoryear{Sakarya, Wagner, and Wied}{Sakarya
  et~al.}{2019}]{wagner}
Sakarya, N., M.~Wagner, and D.~Wied (2019).
\newblock Monitoring a change from spurious regression to cointegration.
\newblock Technical report.

\bibitem[\protect\citeauthoryear{Salzer, Zucker, and Capuano}{Salzer
  et~al.}{1952}]{salzer}
Salzer, H.~E., R.~Zucker, and R.~Capuano (1952).
\newblock {\em Table of the zeros and weight factors of the first twenty
  Hermite polynomials}.
\newblock US Government Printing Office.

\bibitem[\protect\citeauthoryear{Sen}{Sen}{1981}]{Sen}
Sen, P.~K. (1981).
\newblock {\em Sequential nonparametrics: invariance principles and statistical
  inference}.
\newblock Probability and mathematical statistics. Wiley New York.

\bibitem[\protect\citeauthoryear{Serfling}{Serfling}{1970}]{serfling1970}
Serfling, R.~J. (1970).
\newblock Moment inequalities for the maximum cumulative sum.
\newblock {\em The Annals of Mathematical Statistics\/}, 1227--1234.

\bibitem[\protect\citeauthoryear{Steland and Weidauer}{Steland and
  Weidauer}{2013}]{steland}
Steland, A. and S.~Weidauer (2013).
\newblock Detection of stationary errors in multiple regressions with
  integrated regressors and cointegration.
\newblock {\em Sequential Analysis\/}~{\em 32\/}(3), 319--349.

\bibitem[\protect\citeauthoryear{Stiglitz}{Stiglitz}{1990}]{stiglitz}
Stiglitz, J. (1990).
\newblock Symposium on bubbles.
\newblock {\em Journal of Economic Perspectives\/}~{\em 4}, 13--18.

\bibitem[\protect\citeauthoryear{Trapani}{Trapani}{2016}]{trapani16}
Trapani, L. (2016).
\newblock Testing for (in)finite moments.
\newblock {\em Journal of Econometrics\/}~{\em 191}, 57--68.

\bibitem[\protect\citeauthoryear{Trapani}{Trapani}{2018}]{trapani17}
Trapani, L. (2018).
\newblock A randomized sequential procedure to determine the number of factors.
\newblock {\em Journal of the American Statistical Association\/}~{\em
  113\/}(523), 1341--1349.

\bibitem[\protect\citeauthoryear{Wagner and Wied}{Wagner and Wied}{2017}]{wied}
Wagner, M. and D.~Wied (2017).
\newblock Consistent monitoring of cointegrating relationships: The {US}
  housing market and the subprime crisis.
\newblock {\em Journal of Time Series Analysis\/}~{\em 38\/}(6), 960--980.

\bibitem[\protect\citeauthoryear{Wald}{Wald}{1973}]{wald}
Wald, A. (1973).
\newblock {\em Sequential analysis}.
\newblock Courier Corporation.

\bibitem[\protect\citeauthoryear{Wu}{Wu}{2005}]{wu2005}
Wu, W.~B. (2005).
\newblock Nonlinear system theory: Another look at dependence.
\newblock {\em Proceedings of the National Academy of Sciences of the United
  States of America\/}~{\em 102\/}(40), 14150--14154.

\end{thebibliography}
}}

\clearpage
\appendix
\setcounter{section}{0} \setcounter{subsection}{-1} \setcounter{equation}{0} %
\setcounter{lemma}{0} \renewcommand{\thelemma}{A.\arabic{lemma}} %
\renewcommand{\theequation}{A.\arabic{equation}}

\section{Technical Lemmas\label{lemmas}}

Similarly to the main paper, we present results and proofs for the
univariate case, i.e. for $p=1$.

\begin{lemma}
\label{stou}Consider a multi-index random variable $U_{i_{1},...,i_{h}}$,
with $1\leq i_{1}\leq S_{1}$, $1\leq i_{2}\leq S_{2}$, etc... Assume that%
\begin{equation}
\sum_{S_{1}}\cdot \cdot \sum_{S_{h}}\frac{1}{S_{1}\cdot ...\cdot S_{h}}%
P\left( \max_{1\leq i_{1}\leq S_{1},...,1\leq i_{h}\leq S_{h}}\left\vert
U_{i_{1},...,i_{h}}\right\vert >\epsilon L_{S_{1},...,S_{h}}\right) <\infty ,
\label{bc-1}
\end{equation}%
for some $\epsilon >0$ and a sequence $L_{S_{1},...,S_{h}}$ defined as%
\begin{equation*}
L_{S_{1},...,S_{h}}=S_{1}^{d_{1}}\cdot ...\cdot S_{h}^{d_{h}}l_{1}\left(
S_{1}\right) \cdot ...l_{h}\left( S_{h}\right) ,
\end{equation*}%
where $d_{1}$, $d_{2}$, etc. are non-negative numbers and $l_{1}\left( \cdot
\right) $, $l_{2}\left( \cdot \right) $, etc. are slowly varying functions
in the sense of Karamata. Then it holds that%
\begin{equation}
\lim \sup_{\left( S_{1},...,S_{h}\right) \rightarrow \infty }\frac{%
\left\vert U_{S_{1},...,S_{h}}\right\vert }{L_{S_{1},...,S_{h}}}=0\text{ 
\textit{a.s.}}  \label{bc-2}
\end{equation}
\begin{proof}
The lemma is shown in \citet{bt3} - see, in particular, Lemma B1 therein.
\end{proof}
\end{lemma}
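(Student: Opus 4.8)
The plan is to run a first Borel--Cantelli argument over a geometric sublattice of the index set, using the regular variation of $L$ to transfer between the full lattice and the sublattice. I would work out the bookkeeping for a single coordinate first and then carry it across all $h$ coordinates simultaneously. Fix $\rho>1$ and discretise each coordinate along $n^{(j)}_{k_j}=\lfloor\rho^{k_j}\rfloor$, partitioning the lattice into blocks $\mathcal{B}_k=\prod_{j=1}^{h}[n^{(j)}_{k_j},n^{(j)}_{k_j+1})$. Since $L_{S_1,\dots,S_h}=\prod_j S_j^{d_j}l_j(S_j)$ is regularly varying (of index $d_j$) in each argument, Karamata's uniform convergence theorem gives, for $k$ large and every $(S_1,\dots,S_h)\in\mathcal{B}_k$, two-sided control $c_\rho\le L_{S}/L_{n^{(1)}_{k_1},\dots,n^{(h)}_{k_h}}\le C_\rho$ with constants depending only on $\rho$ and $\sum_j d_j$. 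Because the multi-index maximum $\max_{i_j\le S_j}|U_{i_1,\dots,i_h}|$ is nondecreasing in each $S_j$, this yields the key containment: the lower-corner event $\{\max_{i_j\le n^{(j)}_{k_j}}|U|>C_\rho\,\epsilon\,L_{n^{(1)}_{k_1},\dots,n^{(h)}_{k_h}}\}$ is contained in $\{\max_{i_j\le S_j}|U|>\epsilon L_{S}\}$ for every interior point $S\in\mathcal{B}_k$.

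Next I would convert the hypothesis \eqref{bc-1} into summability over the sublattice. Over each block the weights factor, $\sum_{S\in\mathcal{B}_k}\prod_j S_j^{-1}=\prod_j\bigl(\sum_{S_j}S_j^{-1}\bigr)\ge(\tfrac12\log\rho)^h$ for $k$ large, so summing the containment above against $\prod_j S_j^{-1}$ and then over $k$ gives $\sum_k P(\mathrm{corner}_k)\le(\tfrac12\log\rho)^{-h}\sum_{S}\prod_j S_j^{-1}\,P(\max|U|>\epsilon L_S)<\infty$. The first Borel--Cantelli lemma, which is valid for any countable index set and in particular for the lattice $\{k\}$, then shows that the corner events occur for only finitely many $k$ almost surely.

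Finally I would sandwich the full-index $\limsup$ between the sublattice corners. On the almost sure event that the corner events eventually fail, for $S\in\mathcal{B}_k$ one has $|U_S|\le\max_{i_j\le n^{(j)}_{k_j+1}}|U|\le C_\rho\,\epsilon\,L_{n^{(1)}_{k_1+1},\dots}$, while the ratio control gives $L_S\ge c'_\rho L_{n^{(1)}_{k_1+1},\dots}$; hence $\sup_{S\in\mathcal{B}_k}|U_S|/L_S\le\mathrm{const}\cdot\epsilon$ for all large $k$, so $\limsup_{S\to\infty}|U_S|/L_S\le\mathrm{const}\cdot\epsilon$ almost surely. Applying this bound for a sequence $\epsilon\downarrow0$ (the summability in \eqref{bc-1} being available for every $\epsilon>0$, with slowly varying room to spare in the applications of the lemma) and intersecting the corresponding full-measure events yields \eqref{bc-2}.

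The main obstacle is the interplay between the multi-index structure and the slowly varying factors. Unlike the one-dimensional case one cannot simply pass to a single geometric subsequence, and the $l_j$ need not be monotone, so the two-sided ratio bounds over each block must be obtained uniformly through Karamata theory and then combined across the $h$ coordinates while keeping the product of per-coordinate block-weight sums bounded away from zero. Once that uniform control is secured, the Borel--Cantelli step and the corner-to-block sandwiching are routine.
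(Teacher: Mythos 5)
Your argument is correct, but there is nothing in the paper to compare it against: the paper's ``proof'' of Lemma \ref{stou} is a one-line citation to Lemma B1 of the reference given there, so you have in effect reconstructed the standard argument behind that cited result. Your route --- geometric blocking of each coordinate, Karamata's uniform convergence theorem to get two-sided control of $L_{S}/L_{n^{(1)}_{k_1},\dots,n^{(h)}_{k_h}}$ over each block, monotonicity of the multi-index maximum to obtain the corner-event containment, the observation that the block weights $\sum_{S\in\mathcal{B}_k}\prod_j S_j^{-1}$ are bounded below by $(\tfrac12\log\rho)^h$, Borel--Cantelli over the countable lattice of block indices, and the corner-to-block sandwich --- is exactly the Stout/M\'oricz-type mechanism these lemmas rest on, and each step checks out (including the point that finitely many exceptional blocks are contained in a bounded box, so the eventual bound holds as all coordinates diverge). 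One remark worth keeping: as literally stated the lemma assumes \eqref{bc-1} only ``for some $\epsilon>0$,'' which by your argument yields $\limsup |U_S|/L_S\leq c\,\epsilon$ rather than $0$; the conclusion \eqref{bc-2} requires summability for every $\epsilon>0$ (or the extra slowly varying slack present in all of the paper's applications). You flag this explicitly and resolve it by intersecting over $\epsilon\downarrow 0$, which is the right reading and is, if anything, more careful than the statement itself.
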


\begin{lemma}
\label{lemma1}Under Assumption \ref{as-2}, it holds that there exist a
random variable $m_{0}$ and a constant $0<c_{0}<\infty $ such that, for all $%
m\geq m_{0}$%
\begin{equation*}
\sum_{i=1}^{m}x_{i}^{2}\geq c_{0}\frac{m^{2}}{\ln \ln m}.
\end{equation*}

\begin{proof}
The lemma is an immediate consequence of Assumption \ref{as-2}; indeed we have%
\begin{align*}
& \sum_{i=1}^{m}x_{i}^{2} =\sum_{i=1}^{m}W_{x}^{2}\left( i\right)
-2\sum_{i=1}^{m}W_{x}\left( i\right) \left( W_{x}\left( i\right)
-x_{i}\right) +\sum_{i=1}^{m}\left( W_{x}\left( i\right) -x_{i}\right) ^{2}
\\
& =I+II+III.
\end{align*}%
Consider first $II$; we have%
\begin{align*}
& \frac{\ln \ln m}{m^{2}}II \leq 2\frac{\ln \ln m}{m^{2}}\sup_{1\leq i\leq
m}\left\vert W_{x}\left( i\right) -x_{i}\right\vert \sum_{i=1}^{m}\left\vert
W_{x}\left( i\right) \right\vert \\
&\leq c_{0}\frac{\ln \ln m}{m^{2}}m^{1/2-\delta ^{\prime
}}\sum_{i=1}^{m}i^{1/2}\left( \ln \ln i\right) ^{1/2}=o_{a.s.}\left(
1\right) .
\end{align*}%
Similarly%
\begin{equation*}
\frac{\ln \ln m}{m^{2}}III\leq c_{0}\frac{\ln \ln m}{m^{2}}m\sup_{1\leq
i\leq m}\left\vert W_{x}\left( i\right) -x_{i}\right\vert
^{2}=o_{a.s.}\left( 1\right) .
\end{equation*}%
Finally, by the Law of the Iterated Logarithm (LIL henceforth) for functionals of Brownian
motions (see Example 2 in\ \citealp{donsker1977}) we have%
\begin{equation*}
\frac{\ln \ln m}{m^{2}}I\geq c_{0},\text{ a.s.;}
\end{equation*}%
putting all together, the desired result follows. 
\end{proof}
\end{lemma}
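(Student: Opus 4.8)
The plan is to exploit the strong approximation in Assumption~\ref{as-2}\textit{(ii)} to replace $x_i$ by the Wiener process $W_x(i)$, thereby reducing the problem to a lower bound on a quadratic functional of Brownian motion, and then to invoke a Chung-type (lower-deviation) law of the iterated logarithm for that functional. Writing $r_i=x_i-W_x(i)$, so that $\sup_{1\leq i\leq m}|r_i|=O_{a.s.}(m^{1/2-\delta'})$, the first step is the elementary expansion
\begin{equation*}
\sum_{i=1}^{m}x_{i}^{2}=\sum_{i=1}^{m}W_{x}^{2}(i)+2\sum_{i=1}^{m}W_{x}(i)r_{i}+\sum_{i=1}^{m}r_{i}^{2},
\end{equation*}
after which I would argue that the last two terms are negligible relative to the target order $m^{2}/\ln\ln m$, leaving the Brownian term to carry the bound.

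For the two remainders I would use only the ordinary (upper) LIL for $W_x$, which gives $|W_x(i)|=O(i^{1/2}(\ln\ln i)^{1/2})$ a.s.\ and hence $\sum_{i=1}^{m}|W_x(i)|=O(m^{3/2}(\ln\ln m)^{1/2})$. Combined with the uniform control on $r_i$, the cross term is bounded by $\sup_{i\leq m}|r_i|\sum_{i=1}^{m}|W_x(i)|=O(m^{2-\delta'}(\ln\ln m)^{1/2})$, and the square term by $m\sup_{i\leq m}|r_i|^{2}=O(m^{2-2\delta'})$. Multiplying through by $\ln\ln m/m^{2}$ turns these into $O(m^{-\delta'}(\ln\ln m)^{3/2})$ and $O(m^{-2\delta'}\ln\ln m)$ respectively, both $o_{a.s.}(1)$ because $\delta'>0$. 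This step is routine and requires nothing beyond the strong approximation and the standard LIL.

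The crux is the lower bound on the main term $\sum_{i=1}^{m}W_{x}^{2}(i)$. Up to a Riemann-sum correction of smaller order this is comparable to $\int_{0}^{m}W_{x}^{2}(t)\,dt$, a quadratic functional of Brownian motion whose \emph{lower} fluctuations are governed by a small-deviation LIL. Concretely, I would appeal to the fact that
\begin{equation*}
\liminf_{T\rightarrow\infty}\frac{\ln\ln T}{T^{2}}\int_{0}^{T}W_{x}^{2}(t)\,dt>0\quad\text{a.s.},
\end{equation*}
which is the content of the Donsker--Varadhan small-ball asymptotics for $\int_{0}^{1}W^{2}$. This furnishes a random threshold $m_{0}$ beyond which $\frac{\ln\ln m}{m^{2}}\sum_{i=1}^{m}W_{x}^{2}(i)\geq c_{0}$ a.s., and adding back the negligible remainders yields the stated inequality.

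The main obstacle is precisely this lower bound on the Brownian functional. The ordinary LIL controls only the upper envelope and says nothing about how small the functional can become; what is needed instead is a Chung-type small-deviation result guaranteeing that $\int_{0}^{T}W^{2}$ never drops below order $T^{2}/\ln\ln T$. Identifying the correct normalisation $T^{2}/\ln\ln T$---rather than the naive $T^{2}$ suggested by the typical size of the functional---is the delicate point, since it is exactly the additional $\ln\ln T$ factor that renders the $\liminf$ strictly positive and thus makes the bound uniform for all large $m$.
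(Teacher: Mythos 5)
Your proposal is correct and follows essentially the same route as the paper: the same decomposition of $\sum_{i=1}^{m}x_{i}^{2}$ into the Brownian functional plus two remainders controlled by the strong approximation and the upper LIL, and the same appeal to the Donsker--Varadhan (Chung-type) lower LIL for $\int_{0}^{T}W_{x}^{2}$ to obtain the a.s.\ lower bound of order $m^{2}/\ln\ln m$ on the main term. You correctly identify the key point -- that the ordinary LIL is insufficient and a small-deviation result is needed -- which is precisely what the paper's citation of Example 2 in Donsker and Varadhan (1977) supplies.
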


\begin{lemma}
\label{lemma2}Under Assumption \ref{as-2}, it holds that 
\begin{equation*}
\sum_{i=1}^{m}x_{i}\epsilon _{i} \leq o_{a.s.}\left( m\left( \ln m\right)
^{1+\varepsilon }\right) ,  \label{x-e-2}
\end{equation*}%
for every $\varepsilon >0$.

\begin{proof}
Given that, by Assumption \ref{as-2}\textit{(ii)}, $E\left( \sum_{i=1}^{m}x_{i}\epsilon
_{i}\right) ^{2}\leq c_{0}m^{2}$, using the results in \citet{serfling1970}
it follows that that 
\begin{equation*}
E\max_{1\leq i\leq m}\left\vert \sum_{i=1}^{m}x_{i}\epsilon _{i}\right\vert
^{2}\leq c_{0}m^{2}\left( \ln m\right) ^{2};
\end{equation*}%
The lemma now follows from Lemma \ref{stou} and the Markov inequality. 
\end{proof}
\end{lemma}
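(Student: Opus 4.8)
The plan is to avoid any strong approximation for the cross term $\sum_{i=1}^m x_i\epsilon_i$ (which would require a joint invariance principle for $(x_i,\epsilon_i)$, a tool we do not have at this level of generality) and to argue purely from the second-moment bound already in hand. The only information available on this quantity is Assumption \ref{as-2}\textit{(iii)}, namely $E\left\vert \sum_{i=1}^t x_i\epsilon_i\right\vert^2\le c_0 t^2$ for every $t$. Writing $S_j=\sum_{i=1}^j x_i\epsilon_i$, the strategy has two stages: first promote this pointwise $L^2$ bound to a maximal $L^2$ bound for $\max_{1\le j\le m}|S_j|$, paying a logarithmic price; then convert the maximal moment bound into an almost sure rate through a Borel--Cantelli argument, for which Lemma \ref{stou} is purpose-built.

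For the first stage I would appeal to a maximal inequality of \citet{serfling1970}. The essential point is that the summands $x_i\epsilon_i$ need not be orthogonal — endogeneity between $u_i$ and $\epsilon_i$ is not excluded — so Doob's inequality is unavailable and one must instead use an inequality that converts a bound on $E|S_m|^2$ into a bound on $E\max_{1\le j\le m}|S_j|^2$ at the cost of a $(\ln m)^2$ factor, giving $E\max_{1\le j\le m}|S_j|^2\le c_0 m^2(\ln m)^2$. This is the step I expect to be the main obstacle: Serfling's machinery is framed in terms of increments $E|S_j-S_i|^2$ over dyadic blocks, whereas Assumption \ref{as-2}\textit{(iii)} only controls $S_t$ measured from the origin, so one has to verify that the block increments inherit a bound of the superadditive form the inequality requires, and it is here that the precise power of $\ln m$ is fixed.

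For the second stage, take $L_m=m(\ln m)^{1+\varepsilon}$, which is of the regularly varying form $m^{d}l(m)$ with $d=1$ and $l$ slowly varying, exactly as Lemma \ref{stou} demands in the single-index case $h=1$. Markov's inequality combined with the maximal bound then yields
\[
P\left( \max_{1\le j\le m}|S_j|>\epsilon L_m\right)\le \frac{E\max_{1\le j\le m}|S_j|^2}{\epsilon^2 L_m^2}\le \frac{c_0}{\epsilon^2}\,(\ln m)^{-2\varepsilon}.
\]
It remains to check the summability hypothesis (\ref{bc-1}) of Lemma \ref{stou}; the dyadic reduction built into that lemma, together with the slowly varying factor in $L_m$, is what upgrades the estimate from a statement along a subsequence to control of the full supremum and delivers (\ref{bc-2}), i.e. $\limsup_m |S_m|/L_m=0$ almost surely. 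Tracking the interplay between the $\ln$ exponent produced by the maximal inequality and the one needed for (\ref{bc-1}) to converge is the bookkeeping that pins down the admissible range of $\varepsilon$ and hence the stated $o_{a.s.}\left(m(\ln m)^{1+\varepsilon}\right)$ rate.
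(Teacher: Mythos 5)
Your proposal is correct and follows essentially the same route as the paper: the second-moment bound from Assumption \ref{as-2}\textit{(iii)}, promoted to a maximal bound $E\max_{1\le j\le m}|\sum_{i=1}^j x_i\epsilon_i|^2\le c_0 m^2(\ln m)^2$ via \citet{serfling1970}, then Markov's inequality and Lemma \ref{stou} with $L_m=m(\ln m)^{1+\varepsilon}$. (Minor note: the paper's proof cites part \textit{(ii)} of the assumption for the moment bound, but the relevant condition is part \textit{(iii)}, as you correctly identify.)
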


\begin{lemma}
\label{lemma3}Under Assumption \ref{as-2}, it holds that 
\begin{equation*}
\widehat{\beta }_{m}-\beta =o_{a.s.}\left( \frac{\left( \ln m\right)
^{1+\varepsilon }\left( \ln \ln m\right) }{m}\right) .
\end{equation*}
\begin{proof}
The lemma is an immediate consequence of Lemmas \ref{lemma1} and \ref{lemma2}.
\end{proof}
\end{lemma}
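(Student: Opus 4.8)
The plan is to exploit the closed-form expression for the least-squares error and then substitute the rates already established in Lemmas \ref{lemma1} and \ref{lemma2}. First I would recall that, substituting $y_{i}=\beta ^{\prime }x_{i}+\epsilon _{i}$ into the definition (\ref{beta}), the estimation error admits the exact representation
\begin{equation*}
\widehat{\beta }_{m}-\beta =\left( \sum_{i=1}^{m}x_{i}x_{i}^{\prime }\right) ^{-1}\sum_{i=1}^{m}x_{i}\epsilon _{i},
\end{equation*}
which, in the univariate case $p=1$ to which we restrict attention, reduces to the scalar ratio $\left( \sum_{i=1}^{m}x_{i}\epsilon _{i}\right) \big/ \left( \sum_{i=1}^{m}x_{i}^{2}\right)$. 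The whole task is thus to bound the numerator from above and the denominator from below, and both bounds are already available.

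Second, I would invoke Lemma \ref{lemma2} for the numerator, giving $\left\vert \sum_{i=1}^{m}x_{i}\epsilon _{i}\right\vert =o_{a.s.}\left( m\left( \ln m\right) ^{1+\varepsilon }\right)$, and Lemma \ref{lemma1} for the denominator, giving $\sum_{i=1}^{m}x_{i}^{2}\geq c_{0}m^{2}/\ln \ln m$ for all $m\geq m_{0}$. Taking the ratio on the almost-sure event where both statements hold, the factor $m$ in the numerator cancels against one power of $m$ in the denominator, the constant $c_{0}$ is absorbed, and the $\ln \ln m$ from the denominator lower bound migrates into the numerator, yielding precisely
\begin{equation*}
\left\vert \widehat{\beta }_{m}-\beta \right\vert =o_{a.s.}\left( \frac{\left( \ln m\right) ^{1+\varepsilon }\left( \ln \ln m\right) }{m}\right) .
\end{equation*}

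Third, there is essentially no deep obstacle here, since all the genuine work—the Chung-type lower bound for $\sum x_{i}^{2}$ via the LIL for Brownian functionals in Lemma \ref{lemma1}, and the Serfling maximal-inequality-plus-Borel--Cantelli argument for $\sum x_{i}\epsilon _{i}$ in Lemma \ref{lemma2}—is encapsulated in the preceding two lemmas. The only point requiring a word of care is that the denominator bound holds only beyond the \emph{random} index $m_{0}$; but since each of the two rate statements holds on a set of probability one, so does their intersection, and on that intersection the division is legitimate for all sufficiently large $m$, which is exactly what an $o_{a.s.}$ conclusion requires. Hence the result follows immediately, as claimed.
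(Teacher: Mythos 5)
Your proposal is correct and is exactly the argument the paper intends: the paper's own proof simply states that the lemma is an immediate consequence of Lemmas \ref{lemma1} and \ref{lemma2}, i.e. bounding the numerator $\sum_{i=1}^{m}x_{i}\epsilon_{i}$ by Lemma \ref{lemma2} and the denominator $\sum_{i=1}^{m}x_{i}^{2}$ from below by Lemma \ref{lemma1} in the closed-form expression for $\widehat{\beta}_{m}-\beta$. Your additional remark about the random index $m_{0}$ is a correct and welcome clarification of a detail the paper leaves implicit.
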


\begin{lemma}
\label{lemma5}Under Assumptions \ref{as-2}-\ref{lrv}, it holds that, under $%
H_{0}$ 
\begin{equation*}
Q\left( m;k\right) =o_{a.s.}\left( r_{m,k}\right) ,
\end{equation*}%
where%
\begin{align}
r_{m,k} =&\left( m+k\right) \left( \ln \left( m+k\right) \right)
^{2+\varepsilon }+\frac{m+k}{m}\left( \ln m\right) ^{1+\varepsilon }\left(
\ln \ln m\right) \left( \ln \ln \left( m+k\right) \right) ^{1+\varepsilon }
\label{r-mk} \\
&+ \left( \frac{m+k}{m}\right) ^{2}\left( \ln \ln m\right) ^{2}\left( \ln
\ln \left( m+k\right) \right) \left( \ln m\right) ^{2+\varepsilon }.  \notag
\end{align}
for every $\varepsilon >0$.

\begin{proof}
Consider 
\begin{equation*}
\widehat{\sigma }_{\epsilon }^{2}Q\left( m;k\right) =\left\vert
\sum_{i=m+1}^{m+k}\widehat{\epsilon }_{i}^{2}\right\vert ,
\end{equation*}%
and note that, by Proposition \ref{long-run} and Assumption \ref{as-2}%
\textit{(i)}(b), $\widehat{\sigma }_{\epsilon }^{2}>0$ a.s. Thus, the order
of magnitude of $Q\left( m;k\right) $ can be studied by estimating 
\begin{equation*}
\sum_{i=m+1}^{m+k}\widehat{\epsilon }_{i}^{2}=\sum_{i=m+1}^{m+k}\epsilon
_{i}^{2}+2\left( \beta -\widehat{\beta }_{m}\right)
\sum_{i=m+1}^{m+k}x_{i}\epsilon _{i}+\left( \widehat{\beta }_{m}-\beta
\right) ^{2}\sum_{i=m+1}^{m+k}x_{i}^{2}=I+II+III.
\end{equation*}%
Using Assumption \ref{as-2}\textit{(i)} we obtain the (non-sharp) bound%
\begin{equation*}
E\sum_{i=m+1}^{m+k}\epsilon _{i}^{2}\leq E\sum_{i=1}^{m+k}\epsilon
_{i}^{2}\leq c_{0}\left( m+k\right) ,
\end{equation*}%
so that by Lemma \ref{stou}%
\begin{equation}
I=o_{a.s.}\left( \left( m+k\right) \left( \ln \left( m+k\right) \right)
^{2+\varepsilon }\right) .  \label{esq}
\end{equation}%
\ We now turn to $II$. Consider%
\begin{equation*}
\left\vert \sum_{i=m+1}^{m+k}x_{i}\epsilon _{i}\right\vert \leq \left\vert
\sum_{i=1}^{m+k}x_{i}\epsilon _{i}\right\vert +\left\vert
\sum_{i=1}^{m+1}x_{i}\epsilon _{i}\right\vert =II_{a}+II_{b}.
\end{equation*}%
By Assumption \ref{as-2}\textit{(ii)}%
\begin{equation*}
E\left\vert \sum_{i=1}^{m+k}x_{i}\epsilon _{i}\right\vert ^{2}\leq
c_{0}\left( m+k\right) ^{2};
\end{equation*}%
then, by Theorem A in \citet{serfling1970}%
\begin{equation*}
E\max_{1\leq j\leq k}\left\vert \sum_{i=1}^{m+j}x_{i}\epsilon
_{i}\right\vert ^{2}\leq c_{0}\left( m+k\right) ^{2}\left( \ln \left(
m+k\right) \right) ^{2};
\end{equation*}%
using Lemma \ref{stou} and the Markov inequality, we finally obtain $%
II_{a}=o_{a.s.}\left( \left( m+k\right) \left( \ln \left( m+k\right) \right)
^{1+\varepsilon }\right) $, and, similarly, $II_{b}=o_{a.s.}\left( m\left(
\ln \left( m\right) \right) ^{1+\varepsilon }\right) $. Thus, the following
(non sharp) estimate%
\begin{equation}
\sum_{i=m+1}^{m+k}x_{i}\epsilon _{i}=o_{a.s.}\left( \left( m+k\right) \left(
\ln \left( m+k\right) \right) ^{1+\varepsilon }\right) ,  \label{cp}
\end{equation}%
holds for every $\varepsilon >0$. Using Lemma \ref{lemma3} we obtain%
\begin{equation*}
II=o_{a.s.}\left( \frac{m+k}{m}\left( \ln m\right) ^{1+\varepsilon }\left(
\ln \left( m+k\right) \right) ^{1+\varepsilon }\left( \ln \ln m\right)
\right) .
\end{equation*}%
Finally, as far as $III$ is concerned, Assumption \ref{as-2}\textit{(ii)}
and the LIL (see \citealp{donsker1977}) entail%
\begin{equation}
\sum_{i=m+1}^{m+k}x_{i}^{2}=O_{a.s.}\left( \left( m+k\right) ^{2}\ln \ln
\left( m+k\right) \right) ;  \label{bm}
\end{equation}%
combining this with Lemma \ref{lemma3}, we have%
\begin{equation*}
III=o_{a.s.}\left( \left( \frac{m+k}{m}\right) ^{2}\left( \ln m\right)
^{2+\varepsilon }\left( \ln \ln m\right) ^{2}\ln \ln \left( m+k\right)
\right) .
\end{equation*}

The desired result now follows from putting everything together.%
\end{proof}
\end{lemma}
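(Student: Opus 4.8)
The plan is to reduce $Q(m;k)$ to a residual sum of squares, expand the squared residuals, and bound each resulting piece by pairing a moment inequality from Assumption~\ref{as-2} with the two-index strong law of Lemma~\ref{stou}. First I would dispose of the normalising constant: by Proposition~\ref{long-run} together with Assumption~\ref{as-2}\textit{(i)}(b) one has $\widehat{\sigma}_{\epsilon}^{2}\to\sigma_{\epsilon}^{2}>0$ almost surely, so that $\widehat{\sigma}_{\epsilon}^{2}$ is eventually bounded away from zero and the order of $Q(m;k)$ coincides with that of $\sum_{i=m+1}^{m+k}\widehat{\epsilon}_{i}^{2}$. Inserting the identity $\widehat{\epsilon}_{i}=\epsilon_{i}+(\beta-\widehat{\beta}_{m})^{\prime}x_{i}$ from (\ref{residual}) and squaring gives
\begin{equation*}
\sum_{i=m+1}^{m+k}\widehat{\epsilon}_{i}^{2}=\sum_{i=m+1}^{m+k}\epsilon_{i}^{2}+2\left(\beta-\widehat{\beta}_{m}\right)\sum_{i=m+1}^{m+k}x_{i}\epsilon_{i}+\left(\widehat{\beta}_{m}-\beta\right)^{2}\sum_{i=m+1}^{m+k}x_{i}^{2}=I+II+III,
\end{equation*}
and the whole task becomes one of matching the three terms $I$, $II$, $III$ against the three summands defining $r_{m,k}$ in (\ref{r-mk}).

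For $I$ I would use the crude moment bound $E\sum_{i=1}^{t}\epsilon_{i}^{2}\leq c_{0}t$ implied by Assumption~\ref{as-2}\textit{(i)}(a) and feed it into Lemma~\ref{stou} (with the multi-index collapsing to the single index $m+k$), yielding the almost sure rate $I=o_{a.s.}((m+k)(\ln(m+k))^{2+\varepsilon})$, which is the first summand of $r_{m,k}$. For the cross term $II$ the natural input is Assumption~\ref{as-2}\textit{(iii)}, giving $E|\sum_{i=1}^{t}x_{i}\epsilon_{i}|^{2}\leq c_{0}t^{2}$; since this controls only a fixed endpoint, I would first upgrade it to a bound on the maximal partial sum via Theorem~A of \citet{serfling1970}, obtaining $E\max_{1\leq j\leq k}|\sum_{i=1}^{m+j}x_{i}\epsilon_{i}|^{2}\leq c_{0}(m+k)^{2}(\ln(m+k))^{2}$, and then convert this to the a.s.\ rate $\sum_{i=m+1}^{m+k}x_{i}\epsilon_{i}=o_{a.s.}((m+k)(\ln(m+k))^{1+\varepsilon})$ through Lemma~\ref{stou} and Markov's inequality. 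Multiplying by the estimation-error rate $\widehat{\beta}_{m}-\beta=o_{a.s.}(m^{-1}(\ln m)^{1+\varepsilon}(\ln\ln m))$ from Lemma~\ref{lemma3} then reproduces (up to the logarithmic correction factors) the middle summand of $r_{m,k}$. For $III$ I would control $\sum_{i=m+1}^{m+k}x_{i}^{2}$ by invoking the strong approximation of Assumption~\ref{as-2}\textit{(ii)} together with the law of the iterated logarithm for Brownian functionals, exactly as in Lemma~\ref{lemma1} and \citet{donsker1977}, to get $\sum_{i=m+1}^{m+k}x_{i}^{2}=O_{a.s.}((m+k)^{2}\ln\ln(m+k))$; squaring Lemma~\ref{lemma3} and combining delivers the third summand.

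The main obstacle is not any single inequality but the requirement that all of these rates hold \emph{almost surely and uniformly in the two indices $m$ and $k$} over the monitoring horizon: the moment and maximal inequalities are probabilistic statements about fixed endpoints, and the bridge to almost sure control along the whole horizon is precisely the two-index Borel--Cantelli argument packaged in Lemma~\ref{stou}. The delicate bookkeeping lies in propagating the slowly varying (logarithmic and iterated-logarithmic) correction factors correctly through each application, since it is exactly these factors that pin down the sharp exponents of $m+k$ and of $m$ appearing in $r_{m,k}$. Once the three contributions are in hand, the claimed bound follows by collecting them and retaining the dominant term in each regime of $k$ relative to $m$.
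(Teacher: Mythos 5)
Your proposal is correct and follows essentially the same route as the paper's own proof: the same decomposition of $\sum_{i=m+1}^{m+k}\widehat{\epsilon}_{i}^{2}$ into $I+II+III$, the same use of Lemma \ref{stou} with Serfling's maximal inequality for the cross term, and the same LIL plus Lemma \ref{lemma3} treatment of the remaining terms. (You cite Assumption \ref{as-2}\textit{(iii)} for the bound on $E|\sum x_{i}\epsilon_{i}|^{2}$, which is in fact the correct part of the assumption, whereas the paper's proof text refers to part \textit{(ii)}; this is immaterial.)
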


\begin{lemma}
\label{lemma6}Under Assumptions \ref{as-2}-\ref{lrv}, it holds that as $%
m\rightarrow \infty $%
\begin{equation*}
\frac{Q\left( m;k\right) }{g\left( m;k\right) }\rightarrow \infty \text{
a.s.,}
\end{equation*}%
under $H_{A,1}\cup H_{A,2}$, for $k\geq \left\lfloor \left( m^{\max \left\{
1,\theta ^{\prime }\right\} }\right) ^{1+\varepsilon }\right\rfloor $, for
every $\varepsilon >0$.

\begin{proof}
We prove the result separately under $H_{A,1}$ and $H_{A,2}$, starting from
the former. Recall that, in this case, $\beta _{i}=\beta +\Delta_{\beta} I\left[
i>m+k^{\ast }\right] $. We have 
\begin{align}
& \sum_{i=m+1}^{m+k}\widehat{\epsilon }_{i}^{2}=\sum_{i=m+1}^{m+k}\left(
\epsilon _{i}-\left( \widehat{\beta }_{m}-\beta \right) x_{i}\right)
^{2}+\Delta _{\beta }^{2}\sum_{i=m+k^{\ast }+1}^{m+k}x_{i}^{2} \\
& +2\Delta _{\beta }\left( \beta -\widehat{\beta }_{m}\right)
\sum_{i=m+1}^{m+k}x_{i}\left( \epsilon _{i}+\left( \beta -\widehat{\beta }%
_{m}\right) x_{i}\right)   \notag \\
& =I+II+III.  \label{res-sq-2}
\end{align}%
By Lemma \ref{lemma5}, $I=o_{a.s.}\left( r_{m,k}\right) $, with $r_{m,k}$\
defined in (\ref{r-mk}). Turning to $II$, note that%
\begin{equation*}
II=\Delta _{\beta }^{2}\sum_{i=m+k^{\ast }+1}^{m+k}x_{i}^{2}=\Delta _{\beta
}^{2}\sum_{i=1}^{m+k}x_{i}^{2}-\Delta _{\beta }^{2}\sum_{i=1}^{m+k^{\ast
}}x_{i}^{2}=II_{a}+II_{b}.
\end{equation*}%
By Assumption \ref{as-2} and the LIL (\citealp{donsker1977}), it holds that
there is a random variable $m_{0}$ such that for $m\geq m_{0}$%
\begin{align}
II_{a} & \geq c_{0}\frac{\left( m+k\right) ^{2}}{\ln \ln \left( m+k\right) },
\label{lil-2a} \\
II_{b} & \leq c_{0}\left( m+k^{\ast }\right) ^{2}\ln \ln \left( m+k^{\ast
}\right) ,  \notag
\end{align}%
so that%
\begin{align*}
\frac{II_{a}}{II_{b}} \geq &\frac{\left( m+k\right) ^{2}}{\left( m+k^{\ast
}\right) ^{2}\ln \ln \left( m+k\right) \ln \ln \left( m+k^{\ast }\right) } \\
\geq &\frac{\left( m+k\right) ^{2}}{\left( m+k^{\ast }\right) ^{2}\left(
\ln \ln \left( m+k\right) \right) ^{2}} \\
\geq &c_{0}\frac{\left( m^{2\max \left\{ 1,\theta ^{\prime }\right\}
}\right) ^{\varepsilon }}{\left( \ln \ln \left( m\right) \right) ^{2}}%
\rightarrow \infty ,
\end{align*}%
and therefore the term that dominates is $II_{a}$. It is immediate to see
that by (\ref{lil-2a}) and the definition of $\gamma $%
\begin{align*}
\frac{II_{a}}{\left( m+k\right) ^{1+\gamma }\left( \ln \left( m+k\right)
\right) ^{\left( 2+\varepsilon \right) \left( 1+\gamma \right) }} 
\geq c_{0}\frac{\left( m+k\right) ^{2}}{\left( m+k\right) ^{\left(
1+\gamma \right) }\left( \ln \left( m+k\right) \right) ^{\left(
2+\varepsilon \right) \left( 1+\gamma \right) }\ln \ln \left( m+k\right) } 
\geq c_{0}\left( m+k\right) ^{1-\gamma -\varepsilon ^{\prime }},
\end{align*}%
and%
\begin{align*}
&\frac{II_{a}}{\left( \left( \ln \ln m\right) ^{2}\left( \ln \ln \left(
m+k\right) \right) \left( \ln m\right) ^{2+\varepsilon }\right) ^{1+\gamma }}%
\left( \frac{m}{m+k}\right) ^{2\left( 1+\gamma \right) } \\
\geq & c_{0}\frac{\left( m+k\right) ^{2}}{\left( \left( \ln \ln m\right)
^{2}\left( \ln \ln \left( m+k\right) \right) \left( \ln m\right)
^{2+\varepsilon }\right) ^{1+\gamma }\ln \ln \left( m+k\right) }\left( \frac{%
m}{m+k}\right) ^{2\left( 1+\gamma \right) }\\
\geq & c_{0}\frac{m^{2\left(
1+\gamma \right) -\varepsilon ^{\prime }}}{\left( m+k\right) ^{2\gamma }},
\end{align*}%
which entails that, as $m\rightarrow \infty $%
\begin{equation*}
\frac{II}{g\left( m;k\right) }\rightarrow \infty \text{ a.s.}
\end{equation*}%
Finally, consider $III$; we have%
\begin{equation*}
\frac{III}{2\Delta _{\beta }}=\left( \beta -\widehat{\beta }_{m}\right)
\sum_{i=m+1}^{m+k}x_{i}\epsilon _{i}+\left( \beta -\widehat{\beta }%
_{m}\right) ^{2}\sum_{i=m+1}^{m+k}x_{i}^{2}=III_{a}+III_{b}.
\end{equation*}%
Using Lemma \ref{lemma3}, (\ref{cp}) and noting that%
\begin{equation*}
\sum_{i=m+1}^{m+k}x_{i}^{2}\leq \sum_{i=1}^{m+k}x_{i}^{2}=O_{a.s.}\left(
\left( m+k\right) ^{2}\ln \ln \left( m+k\right) \right) ,
\end{equation*}%
it holds that%
\begin{equation*}
III=o_{a.s.}\left( \frac{m+k}{m}\left( \ln m\right) ^{1+\varepsilon }\left(
\ln \ln m\right) \left( \ln \left( m+k\right) \right) ^{1+\varepsilon
}+\left( \frac{m+k}{m}\right) ^{2}\ln \ln \left( m+k\right) \left( \ln
m\right) ^{2+\varepsilon }\left( \ln \ln m\right) ^{2}\right) ,
\end{equation*}%
which immediately entails%
\begin{equation*}
\frac{III}{g\left( m;k\right) }\rightarrow 0\text{ a.s.}
\end{equation*}%
Putting everything together, the desired result obtains.

Under $H_{A,2}$, recall (\ref{spurious}), and write 
\begin{equation*}
\sum_{i=m+1}^{m+k}\widehat{\epsilon }_{i}^{2}=\sum_{i=m+1}^{m+k}\epsilon
_{i}^{2}+2\left( \beta -\widehat{\beta }_{m}\right)
\sum_{i=m+1}^{m+k}x_{i}\epsilon _{i}+\left( \widehat{\beta }_{m}-\beta
\right) ^{2}\sum_{i=m+1}^{m+k}x_{i}^{2}=I+II+III.
\end{equation*}%
We know from the passages above that 
\begin{equation*}
III=o_{a.s.}\left( \frac{\left( m+k\right) ^{2}}{m}\left( \ln m\right)
^{1+\varepsilon }\left( \ln \ln m\right) \ln \ln \left( m+k\right) \right) .
\end{equation*}%
Turning to $II$, by Assumption \ref{ha-2}\textit{(ii)} and similar passages
as in the previous proofs, we have 
\begin{equation*}
E\left\vert \sum_{i=m+1}^{m+k}x_{i}\epsilon _{i}\right\vert \leq E\left\vert
\sum_{i=m+1}^{m+k^{\ast }}x_{i}\epsilon _{i}\right\vert +E\left\vert
\sum_{i=m+k^{\ast }+1}^{m+k}x_{i}\epsilon _{i}\right\vert \leq c_{0}\left(
m+k^{\ast }\right) +c_{1}k^{2},
\end{equation*}%
which, by the same arguments as in the above, yields the bound $%
II=o_{a.s.}\left( \frac{k^{2}}{m}\left( \ln k\right) ^{2+\varepsilon }\left(
\ln m\right) ^{1+\varepsilon }\left( \ln \ln m\right) \right) $. Finally, as
far as $I$ is concerned, note that, using the same arguments in the proof of
Lemma \ref{lemma1}, it can be shown that, by Assumption \ref{ha-2}\textit{(i)%
} and the LIL%
\begin{equation*}
\sum_{i=m+1}^{m+k}\epsilon _{i}^{2}=\sum_{i=m+k^{\ast }+1}^{m+k}\epsilon
_{i}^{2}-\sum_{i=m+1}^{m+k^{\ast }}\epsilon _{i}^{2}\geq c_{0}\frac{k^{2}}{%
\ln \ln k},
\end{equation*}%
for some $0<c_{0}<\infty $ and sufficiently large $k$. Thus, term $I$ is the
one that dominates. The desired result follows by the same passages as
above. 
\end{proof}
\end{lemma}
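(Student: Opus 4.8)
The plan is to treat the two alternatives $H_{A,1}$ and $H_{A,2}$ separately, and in each case to isolate a single ``non-centrality'' term inside $\sum_{i=m+1}^{m+k}\widehat{\epsilon}_i^2$ that (a) dominates all the remaining terms almost surely, and (b) diverges strictly faster than $g(m;k)$. Throughout I work in the univariate case $p=1$, and I use Proposition \ref{long-run} together with Assumption \ref{as-2}\textit{(i)}(b) to replace $\widehat{\sigma}_\epsilon^2$ by a strictly positive a.s.\ limit, so that it suffices to study the order of $\sum_{i=m+1}^{m+k}\widehat{\epsilon}_i^2$.

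Under $H_{A,1}$ I would substitute $\widehat{\epsilon}_i=\epsilon_i+(\beta-\widehat{\beta}_m)x_i+\Delta_\beta x_i\,I[i>m+k^\ast]$, expand the square, and group the result into three blocks: a ``null-like'' block $\sum_{i=m+1}^{m+k}(\epsilon_i-(\widehat{\beta}_m-\beta)x_i)^2$, the quadratic non-centrality $II=\Delta_\beta^2\sum_{i=m+k^\ast+1}^{m+k}x_i^2$, and the cross terms $III$. Lemma \ref{lemma5} gives that the first block is $o_{a.s.}(r_{m,k})$ with $r_{m,k}$ as in (\ref{r-mk}), and since $r_{m,k}/g(m;k)\to0$ by the definition of $\gamma$, it is negligible. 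For the cross terms, Lemma \ref{lemma3} together with the LIL bound $\sum_{i=1}^{m+k}x_i^2=O_{a.s.}((m+k)^2\ln\ln(m+k))$ and the estimate (\ref{cp}) shows $III/g(m;k)\to0$. The work therefore reduces to $II$, which I would write as $\Delta_\beta^2\sum_{i=1}^{m+k}x_i^2-\Delta_\beta^2\sum_{i=1}^{m+k^\ast}x_i^2$ and bound using the two-sided LIL: Lemma \ref{lemma1} supplies the liminf-type lower bound $\sum_{i=1}^{m+k}x_i^2\geq c_0(m+k)^2/\ln\ln(m+k)$, while the ordinary LIL gives $\sum_{i=1}^{m+k^\ast}x_i^2\leq c_0(m+k^\ast)^2\ln\ln(m+k^\ast)$. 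The constraint $k\geq\lfloor(m^{\max\{1,\theta'\}})^{1+\varepsilon}\rfloor$, combined with $k^\ast=O(m^{\theta'})$, is exactly what forces the post-break sum to dominate the pre-break one, so that $II\geq c_0(m+k)^2/\ln\ln(m+k)$ eventually.

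The decisive step -- and the one I expect to be the main obstacle -- is comparing this lower bound with $g(m;k)$. Since the dominant component of $g(m;k)$ is $((m+k)/m)^{2(1+\gamma)}$ whenever $k$ is large relative to $m$, the ratio $II/g(m;k)$ behaves like $m^{2(1+\gamma)}/((m+k)^{2\gamma}\ln\ln(m+k))$; inserting $m+k\asymp m^\theta$ reduces the polynomial part to $m^{2(1-\gamma(\theta-1))}$. The tailored choice $\gamma=(1-\delta)/(\theta-1)$ gives $\gamma(\theta-1)=1-\delta$, so the exponent is $2\delta>0$ and the ratio diverges, swamping the slowly varying $\ln\ln$ factor. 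Tracking all the logarithmic corrections so that they are genuinely absorbed by this $m^{2\delta}$ gain is the delicate part of the bookkeeping, and it is precisely here that the construction of $g(m;k)$ pays off.

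Finally, under $H_{A,2}$ the roles change: we still have $\widehat{\epsilon}_i=\epsilon_i+(\beta-\widehat{\beta}_m)x_i$, but now $\epsilon_i$ is a random walk after $k^\ast$, so the non-centrality is carried by $I=\sum_{i=m+1}^{m+k}\epsilon_i^2$ itself. Arguing as in Lemma \ref{lemma1} but applying the LIL to $W_\epsilon$ through Assumption \ref{ha-2}\textit{(i)}, I would obtain $\sum_{i=m+1}^{m+k}\epsilon_i^2\geq c_0 k^2/\ln\ln k$ for large $k$, while Lemma \ref{lemma3}, Assumption \ref{ha-2}\textit{(ii)} and the bounds already established show that the cross term and the estimation term are of smaller order. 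The same comparison with $g(m;k)$ as in the $H_{A,1}$ case then yields $I/g(m;k)\to\infty$ a.s., which completes the proof.
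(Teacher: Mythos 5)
Your proposal follows essentially the same route as the paper's own proof: under $H_{A,1}$ the identical three-term decomposition with the null-like block controlled by Lemma \ref{lemma5}, the non-centrality $\Delta_\beta^2\sum_{i=m+k^\ast+1}^{m+k}x_i^2$ bounded below via the two-sided LIL and shown to swamp $g\left(m;k\right)$ through the choice $\gamma=(1-\delta)/(\theta-1)$ (your exponent computation $m^{2\delta}$ matches the paper's $m^{2(1+\gamma)-\varepsilon'}/(m+k)^{2\gamma}$ bound), and the cross terms killed by Lemma \ref{lemma3} together with (\ref{cp}); under $H_{A,2}$ the same switch of the dominant term to $\sum_{i=m+1}^{m+k}\epsilon_i^2\geq c_0 k^2/\ln\ln k$ via Assumption \ref{ha-2}. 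The argument is correct and no substantive difference from the paper's proof remains to be reconciled.
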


\begin{lemma}
\label{lemma7}We assume that Assumptions \ref{as-2}-\ref{restrict-2} hold.
Under $H_{0}$, $\left\{ \Theta _{m,R}^{\left( i\right) },1\leq i\leq
T_{m}\right\} $ is an i.i.d. sequence conditionally on the sample, with%
\begin{align}
&\max_{1\leq k\leq T_{m}}\sqrt{\frac{m}{k\left( m+k\right) }}\left\vert
\sum_{i=m+1}^{m+k}\left( E^{\ast }\Theta _{m,R}^{\left( i\right) }-1\right)
\right\vert =O\left( m^{-\varepsilon }\right) ,  \label{7-1} \\
& \max_{1\leq k\leq T_{m}}\sqrt{\frac{m}{k\left( m+k\right) }}\left\vert
\sum_{i=m+1}^{m+k}\left( V^{\ast }\Theta _{m,R}^{\left( i\right) }-2\right)
\right\vert =O\left( m^{-\varepsilon }\right) ,  \label{7-2} \\
& E^{\ast }\left\vert \Theta _{m,R}^{\left( i\right) }\right\vert
^{2+\varepsilon ^{\prime }} <\infty ,  \label{7-3}
\end{align}%
where $\varepsilon ,\varepsilon ^{\prime }>0$.

\begin{proof}
The sequence $\Theta _{m,R}^{\left( i\right) }$ is independent across $i$
(conditionally on the sample) by construction. We begin by showing (\ref{7-1}%
). Using the fact that $\xi _{j}^{\left( i\right) }$ is \textit{i.i.d.}
across $j$, it holds that%
\begin{equation*}
4E^{\ast }\Theta _{m,R}^{\left( i\right) }=4E^{\ast }\int_{-\infty
}^{+\infty }\left\vert \left( \zeta _{1}^{\left( i\right) }-\frac{1}{2}%
\right) \right\vert ^{2}dF\left( u\right) .
\end{equation*}%
By the same passages as in the proof of Theorem \ref{theta-1}, it follows
that%
\begin{equation}
\left\vert 4E^{\ast }\Theta _{m,R}^{\left( i\right) }-1\right\vert \leq
c_{0}\left( \left\vert \widetilde{\psi }_{m,i}\right\vert ^{-1}+R\left\vert 
\widetilde{\psi }_{m,i}\right\vert ^{-2}\right) .  \label{error-7-1}
\end{equation}%
Equation (\ref{psi-3}) entails that there exist a constant $0<c_{0}<\infty $
and a random variable $m_{0}$ such that, for $m\geq m_{0}$ so that $%
\left\vert \widetilde{\psi }_{m,i}\right\vert ^{-1}\leq c_{0}\exp \left(
-m^{-\gamma }\right) $%
\begin{align*}
& \max_{1\leq k\leq T_{m}}\sqrt{\frac{m}{k\left( m+k\right) }}\left\vert
\sum_{i=m+1}^{m+k}E^{\ast }\Theta _{m,R}^{\left( i\right) }-\frac{1}{4}%
\right\vert  \\
& \leq c_{0}\exp \left( -m^{\gamma }\right) \left( 1+R\exp \left( -m^{\gamma
}\right) \right) \max_{1\leq k\leq T_{m}}\frac{m^{1/2}k^{1/2}}{\left(
m+k\right) ^{1/2}} \\
& \leq c_{0}m^{1/2}\exp \left( -m^{\gamma }\right) \left( 1+R\exp \left(
-m^{\gamma }\right) \right) ,
\end{align*}%
whence (\ref{7-1}) follows from Assumption \ref{restrict-2}.

Turning to (\ref{7-2}), we define $\zeta _{j}^{\left( i\right) }\left(
0\right) =I\left( \left\vert \widetilde{\psi }_{m,i}\right\vert ^{1/2}\xi
_{j}^{\left( i\right) }\leq 0\right) $. Elementary calculations yield%
\begin{equation}
E^{\ast }\left\vert \int_{-\infty }^{+\infty }\left\vert
R^{-1/2}\sum_{j=1}^{R}\left( \zeta _{j}^{\left( i\right) }\left( 0\right) -%
\frac{1}{2}\right) \right\vert ^{2}dF\left( u\right) \right\vert ^{2}=\frac{3%
}{16}.  \label{second-moment}
\end{equation}%
We begin by showing%
\begin{eqnarray}
&&E^{\ast }\left\vert \int_{-\infty }^{+\infty }\left\vert
R^{-1/2}\sum_{j=1}^{R}\left( \zeta _{j}^{\left( i\right) }\left( u\right) -%
\frac{1}{2}\right) \right\vert ^{2}dF\left( u\right) \right\vert
^{2}-E^{\ast }\left\vert \int_{-\infty }^{+\infty }\left\vert
R^{-1/2}\sum_{j=1}^{R}\left( \zeta _{j}^{\left( i\right) }\left( 0\right) -%
\frac{1}{2}\right) \right\vert ^{2}dF\left( u\right) \right\vert ^{2}
\notag\\
&\leq &c_{0}\left( R^{-1/4}\left\vert \widetilde{\psi }_{m,i}\right\vert
^{-1/4}+\left\vert \widetilde{\psi }_{m,i}\right\vert ^{-1/2}\right) . 
\label{second-moment-1} 
\end{eqnarray}%
Let%
\begin{align*}
X& =\int_{-\infty }^{+\infty }\left\vert R^{-1/2}\sum_{j=1}^{R}\left( \zeta
_{j}^{\left( i\right) }\left( u\right) -\frac{1}{2}\right) \right\vert
^{2}dF\left( u\right) +\int_{-\infty }^{+\infty }\left\vert
R^{-1/2}\sum_{j=1}^{R}\left( \zeta _{j}^{\left( i\right) }\left( 0\right) -%
\frac{1}{2}\right) \right\vert ^{2}dF\left( u\right)  \\
Y& =\int_{-\infty }^{+\infty }\left\vert R^{-1/2}\sum_{j=1}^{R}\left( \zeta
_{j}^{\left( i\right) }\left( u\right) -\frac{1}{2}\right) \right\vert
^{2}dF\left( u\right) -\int_{-\infty }^{+\infty }\left\vert
R^{-1/2}\sum_{j=1}^{R}\left( \zeta _{j}^{\left( i\right) }\left( 0\right) -%
\frac{1}{2}\right) \right\vert ^{2}dF\left( u\right)  \\
& =\left( \int_{-\infty }^{+\infty }R^{-1/2}\sum_{j=1}^{R}\left( \zeta
_{j}^{\left( i\right) }\left( u\right) -\zeta _{j}^{\left( i\right) }\left(
0\right) \right) dF\left( u\right) \right) \times \int_{-\infty }^{+\infty
}R^{-1/2}\sum_{j=1}^{R}\left( \left( \zeta _{j}^{\left( i\right) }\left(
u\right) -\frac{1}{2}\right) +\left( \zeta _{j}^{\left( i\right) }\left(
0\right) -\frac{1}{2}\right) \right) dF\left( u\right)  \\
& =Y_{1}Y_{2}.
\end{align*}%
Using the Cauchy-Schwartz inequality, $E^{\ast }\left( XY\right) \leq \left(
E^{\ast }\left( X^{2}\right) \right) ^{1/2}\left( E^{\ast }\left(
Y_{1}^{4}\right) \right) ^{1/4}\left( E^{\ast }\left( Y_{2}^{4}\right)
\right) ^{1/4}$. Now, 
\begin{align*}
E^{\ast }\left( X^{2}\right) & \leq c_{0}E^{\ast }\left( \int_{-\infty
}^{+\infty }\left\vert R^{-1/2}\sum_{j=1}^{R}\left( \zeta _{j}^{\left(
i\right) }\left( u\right) -\frac{1}{2}\right) \right\vert ^{2}dF\left(
u\right) \right) ^{2}+c_{0}E^{\ast }\left( \int_{-\infty }^{+\infty
}\left\vert R^{-1/2}\sum_{j=1}^{R}\left( \zeta _{j}^{\left( i\right) }\left(
0\right) -\frac{1}{2}\right) \right\vert ^{2}dF\left( u\right) \right) ^{2}
\\
& \leq c_{0}E^{\ast }\int_{-\infty }^{+\infty }\left\vert
R^{-1/2}\sum_{j=1}^{R}\left( \zeta _{j}^{\left( i\right) }\left( u\right) -%
\frac{1}{2}\right) \right\vert ^{4}dF\left( u\right) +c_{0}E^{\ast
}\left\vert R^{-1/2}\sum_{j=1}^{R}\left( \zeta _{j}^{\left( i\right) }\left(
0\right) -\frac{1}{2}\right) \right\vert ^{4}.
\end{align*}%
Hence, standard arguments entail $E^{\ast }\left( X^{2}\right) <\infty $. By
similar passages, it can be shown that $E^{\ast }\left( Y_{2}^{4}\right)
<\infty $. Also%
\begin{align}
E^{\ast }\left( Y_{1}^{4}\right) & =E^{\ast }\left( \int_{-\infty }^{+\infty
}R^{-1/2}\sum_{j=1}^{R}\left( \zeta _{j}^{\left( i\right) }\left( u\right)
-\zeta _{j}^{\left( i\right) }\left( 0\right) \right) dF\left( u\right)
\right) ^{4}  \notag \\
& \leq c_{0}R^{-2}\sum_{j=1}^{R}\int_{-\infty }^{+\infty }E^{\ast
}\left\vert \zeta _{j}^{\left( i\right) }\left( u\right) -\zeta _{j}^{\left(
i\right) }\left( 0\right) \right\vert ^{4}dF\left( u\right)
+c_{0}R^{-2}\left( \sum_{j=1}^{R}\int_{-\infty }^{+\infty }E^{\ast
}\left\vert \left( \zeta _{j}^{\left( i\right) }\left( u\right) -\zeta
_{j}^{\left( i\right) }\left( 0\right) \right) \right\vert ^{2}dF\left(
u\right) \right) ^{2}  \label{rosenthal}
\end{align}%
having used convexity and Rosenthal's inequality. We have%
\begin{align*}
& R^{-2}\sum_{j=1}^{R}\int_{-\infty }^{+\infty }E^{\ast }\left\vert \zeta
_{j}^{\left( i\right) }\left( u\right) -\zeta _{j}^{\left( i\right) }\left(
0\right) \right\vert ^{4}dF\left( u\right)  \\
& \leq c_{0}R^{-1}\int_{-\infty }^{+\infty }E^{\ast }\left\vert \zeta
_{1}^{\left( i\right) }\left( u\right) -\zeta _{1}^{\left( i\right) }\left(
0\right) \right\vert dF\left( u\right)  \\
& \leq c_{0}R^{-1}\left\vert \widetilde{\psi }_{m,i}\right\vert
^{-1/2}\int_{-\infty }^{+\infty }\left\vert u\right\vert dF\left( u\right)
\leq c_{0}R^{-1}\left\vert \widetilde{\psi }_{m,i}\right\vert ^{-1/2},
\end{align*}%
and%
\begin{align*}
& \sum_{j=1}^{R}\int_{-\infty }^{+\infty }E^{\ast }\left\vert \left( \zeta
_{j}^{\left( i\right) }\left( u\right) -\zeta _{j}^{\left( i\right) }\left(
0\right) \right) \right\vert ^{2}dF\left( u\right)  \\
& = c_{0}R\int_{-\infty }^{+\infty }E^{\ast }\left\vert \zeta
_{1}^{\left( i\right) }\left( u\right) -\zeta _{1}^{\left( i\right) }\left(
0\right) \right\vert^{2} dF\left( u\right)  \\
& \leq c_{0}R\left\vert \widetilde{\psi }_{m,i}\right\vert
^{-1/2}\int_{-\infty }^{+\infty }\left\vert u\right\vert^{2} dF\left( u\right)
\leq c_{0}R\left\vert \widetilde{\psi }_{m,i}\right\vert ^{-1/2}.
\end{align*}%
Thus, using (\ref{rosenthal})%
\begin{equation*}
E^{\ast }\left( Y_{1}^{4}\right) \leq c_{0}\left( R^{-1}\left\vert 
\widetilde{\psi }_{m,i}\right\vert ^{-1/2}+\left\vert \widetilde{\psi }%
_{m,i}\right\vert ^{-1}\right) .
\end{equation*}%
Thus, combining the results above with (\ref{second-moment})%
\begin{equation*}
E^{\ast }\left\vert \int_{-\infty }^{+\infty }\left\vert
R^{-1/2}\sum_{j=1}^{R}\left( \zeta _{j}^{\left( i\right) }\left( u\right) -%
\frac{1}{2}\right) \right\vert ^{2}dF\left( u\right) \right\vert ^{2}=\frac{3%
}{16}+O_{P^{\ast }}\left( R^{-1/4}\left\vert \widetilde{\psi }%
_{m,i}\right\vert ^{-1/8}\right) +O_{P^{\ast }}\left( \left\vert \widetilde{%
\psi }_{m,i}\right\vert ^{-1/4}\right) .
\end{equation*}%
Putting all together, and using (\ref{error-7-1}), we have%
\begin{equation*}
V^{\ast }\Theta _{m,R}^{\left( i\right) }=16\left( \frac{3}{16}-\frac{1}{16}%
\right) +O_{P^{\ast }}\left( R^{-1/4}\left\vert \widetilde{\psi }%
_{m,i}\right\vert ^{-1/8}\right) +O_{P^{\ast }}\left( \left\vert \widetilde{%
\psi }_{m,i}\right\vert ^{-1/4}\right) ,
\end{equation*}%
whence the desired result follows. 

Finally, consider (\ref{7-3}). We need to show that 
\begin{equation*}
E^{\ast }\left\vert \int_{-\infty }^{+\infty }\left\vert
R^{-1/2}\sum_{j=1}^{R}\left( \zeta _{j}^{\left( i\right) }\left( u\right) -%
\frac{1}{2}\right) \right\vert ^{2}dF\left( u\right) \right\vert
^{2+\varepsilon ^{\prime }}<\infty .
\end{equation*}%
Note first that, by convexity%
\begin{equation*}
E^{\ast }\left\vert \int_{-\infty }^{+\infty }\left\vert
R^{-1/2}\sum_{j=1}^{R}\left( \zeta _{j}^{\left( i\right) }\left( u\right) -%
\frac{1}{2}\right) \right\vert ^{2}dF\left( u\right) \right\vert
^{2+\varepsilon ^{\prime }}\leq E^{\ast }\int_{-\infty }^{+\infty
}\left\vert R^{-1/2}\sum_{j=1}^{R}\left( \zeta _{j}^{\left( i\right) }\left(
u\right) -\frac{1}{2}\right) \right\vert ^{2\left( 2+\varepsilon ^{\prime
}\right) }dF\left( u\right) ;
\end{equation*}%
further%
\begin{align*}
& E^{\ast }\int_{-\infty }^{+\infty }\left\vert R^{-1/2}\sum_{j=1}^{R}\left(
\zeta _{j}^{\left( i\right) }\left( u\right) -\frac{1}{2}\right) \right\vert
^{2\left( 2+\varepsilon ^{\prime }\right) }dF\left( u\right)  \\
& \leq c_{0}E^{\ast }\int_{-\infty }^{+\infty }\left\vert
R^{-1/2}\sum_{j=1}^{R}\left( \zeta _{j}^{\left( i\right) }\left( u\right)
-G\left( u\left\vert \widetilde{\psi }_{m,i}\right\vert ^{-1/2}\right) \right)
\right\vert ^{2\left( 2+\varepsilon ^{\prime }\right) }dF\left( u\right)  \\
& +c_{0}\int_{-\infty }^{+\infty }\left\vert R^{1/2}\left( G\left(
u\left\vert \widetilde{\psi }_{m,i}\right\vert ^{-1/2}\right) -\frac{1}{2}%
\right) \right\vert ^{2\left( 2+\varepsilon ^{\prime }\right) }dF\left(
u\right) ,
\end{align*}%
by the C$_{r}$-inequality. Consider the first term. The sequence $\left\{
\zeta _{j}^{\left( i\right) },1\leq j\leq R\right\} $ is independent
(conditional on the sample); thus, by Burkholder's inequality and convexity,
we get%
\begin{align*}
& \int_{-\infty }^{+\infty }E^{\ast }\left\vert R^{-1/2}\sum_{j=1}^{R}\left(
\zeta _{j}^{\left( i\right) }\left( u\right) -G\left( u\left\vert \widetilde{%
\psi }_{m,i}\right\vert ^{-1/2}\right) \right) \right\vert ^{2\left(
2+\varepsilon ^{\prime }\right) }dF\left( u\right)  \\
& \leq \int_{-\infty }^{+\infty }E^{\ast }\left\vert
R^{-1}\sum_{j=1}^{R}\left( \zeta _{j}^{\left( i\right) }\left( u\right)
-G\left( u\left\vert \widetilde{\psi }_{m,i}\right\vert ^{-1/2}\right)
\right) ^{2}\right\vert ^{\left( 2+\varepsilon ^{\prime }\right) }dF\left(
u\right)  \\
& \leq R^{-1}\sum_{j=1}^{R}\int_{-\infty }^{+\infty }E^{\ast }\left\vert
\left( \zeta _{j}^{\left( i\right) }\left( u\right) -G\left( u\left\vert 
\widetilde{\psi }_{m,i}\right\vert ^{-1/2}\right) \right) ^{2}\right\vert
^{\left( 2+\varepsilon ^{\prime }\right) }dF\left( u\right) \leq c_{0}.
\end{align*}%
Also 
\begin{equation*}
R^{2+\varepsilon ^{\prime }}\int_{-\infty }^{+\infty }\left\vert \left(
G\left( u\left\vert \widetilde{\psi }_{m,i}\right\vert ^{-1/2}\right) -\frac{%
1}{2}\right) \right\vert ^{2\left( 2+\varepsilon ^{\prime }\right) }dF\left(
u\right) \leq c_{0}\left( R\left\vert \widetilde{\psi }_{m,i}\right\vert
^{-1}\right) ^{2+\varepsilon ^{\prime }}\int_{-\infty }^{+\infty }\left\vert
u\right\vert ^{2\left( 2+\varepsilon ^{\prime }\right) }dF\left( u\right) ,
\end{equation*}%
which vanishes on account of (\ref{restriction}) and Assumption \ref%
{restrict-2}\textit{(ii)}. Putting all together, (\ref{7-3}) obtains. 
\end{proof}
\end{lemma}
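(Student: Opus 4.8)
The plan is to establish the three displayed estimates; conditional independence is the easy part and comes for free. Given the sample, each $\widetilde{\psi}_{m,i}$ is a fixed number, so $\Theta_{m,R}^{(i)}$ is a measurable function of the single randomisation block $\{\xi_j^{(i)}\}_{j=1}^R$, and these blocks are independent across $i$ by Assumption \ref{regularity}(ii); hence $\{\Theta_{m,R}^{(i)}\}_i$ is conditionally independent. (The sequence is not literally identically distributed, since $\widetilde{\psi}_{m,i}$ varies with $i$; the role of \eqref{7-1}--\eqref{7-3} is precisely to show that the conditional mean, variance and $(2+\varepsilon')$-th moment depart from the $\chi_1^2$ values $1$, $2$ and a finite constant by amounts that are uniformly negligible.) The one fact driving everything is that, writing $p_u=\Phi(u|\widetilde{\psi}_{m,i}|^{-1/2})$ for the conditional success probability of $\zeta_j^{(i)}(u)$, the Lipschitz bound $|p_u-\tfrac12|\le c_0|u|\,|\widetilde{\psi}_{m,i}|^{-1/2}$ holds, and, under $H_0$, the a.s.\ divergence of $\widetilde{\psi}_{m,i}$ established in the analysis preceding Lemma \ref{lemma5} furnishes a \emph{uniform} bound $|\widetilde{\psi}_{m,i}|^{-1}\le c_0\exp(-m^{\gamma})$ over $m+1\le i\le m+T_m$ for all large $m$.

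For \eqref{7-1}, a direct second-moment computation for the conditionally i.i.d.\ Bernoulli array (mirroring the passages in the proof of Theorem \ref{theta-1}) gives $E^{\ast}\Theta_{m,R}^{(i)}=1+4(R-1)\int(p_u-\tfrac12)^2\,dF(u)$, so the Lipschitz bound together with $\int u^2\,dF<\infty$ (Assumption \ref{regularity}(i)) yields $|E^{\ast}\Theta_{m,R}^{(i)}-1|\le c_0\,R\,|\widetilde{\psi}_{m,i}|^{-1}$. Summing over $i$, bounding $\sqrt{mk/(m+k)}\le m^{1/2}$, and inserting the exponential bound produces $\max_{1\le k\le T_m}\sqrt{m/(k(m+k))}\,\big|\sum_{i=m+1}^{m+k}(E^{\ast}\Theta_{m,R}^{(i)}-1)\big|\le c_0\,m^{1/2}R\exp(-m^{\gamma})$, which is $O(m^{-\varepsilon})$ by Assumption \ref{restrict-2}(i).

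The decisive step is \eqref{7-2}. I would compare $\Theta_{m,R}^{(i)}$ to its ``fair-coin'' version at $u=0$: the indicators $\zeta_j^{(i)}(0)=I(|\widetilde{\psi}_{m,i}|^{1/2}\xi_j^{(i)}\le 0)$ are exactly $\mathrm{Bernoulli}(\tfrac12)$ by symmetry and do not depend on $u$, so an elementary fourth-moment count gives $E^{\ast}\big|\int|R^{-1/2}\sum_j(\zeta_j^{(i)}(0)-\tfrac12)|^2\,dF\big|^2=(3R-2)/(16R)$, whence the idealised $V^{\ast}\Theta=2-2/R$; the $O(1/R)$ gap from the nominal $2$ is, after summation and normalisation, of order $m^{1/2}/R$ and is negligible for the admissible choices of $R$ (in particular $R\asymp m$). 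To pass from the fair-coin version to the genuine statistic I would write the difference of the two squared integrals as $E^{\ast}(XY)$, with $X$ their sum and $Y$ their difference, factor $Y=Y_1Y_2$, and apply H\"older with exponents $(2,4,4)$. Here $E^{\ast}X^2$ and $E^{\ast}Y_2^4$ are bounded by routine moment arguments, whereas $Y_1$ carries all the smallness: since $\zeta_j^{(i)}(u)-\zeta_j^{(i)}(0)$ is an indicator with $E^{\ast}|\zeta_j^{(i)}(u)-\zeta_j^{(i)}(0)|\le c_0|u|\,|\widetilde{\psi}_{m,i}|^{-1/2}$, Rosenthal's inequality gives $E^{\ast}Y_1^4\le c_0(R^{-1}|\widetilde{\psi}_{m,i}|^{-1/2}+|\widetilde{\psi}_{m,i}|^{-1})$. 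Combining with the mean estimate yields $V^{\ast}\Theta_{m,R}^{(i)}=2+O_{P^{\ast}}(R^{-1/4}|\widetilde{\psi}_{m,i}|^{-1/8}+|\widetilde{\psi}_{m,i}|^{-1/4})+O(1/R)$; summing and normalising as for \eqref{7-1}, and using that $\exp(-cm^{\gamma})$ beats any power of $m$ (with the $R$-dependent terms handled by Assumption \ref{restrict-2}(i)), gives \eqref{7-2}.

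For \eqref{7-3} I would use convexity of $x\mapsto x^{2+\varepsilon'}$ to move the power inside the averaging measure, $|\int|\vartheta_{m,R}^{(i)}(u)|^2\,dF|^{2+\varepsilon'}\le\int|\vartheta_{m,R}^{(i)}(u)|^{2(2+\varepsilon')}\,dF$, and then split $\zeta_j^{(i)}(u)-\tfrac12$ into a centred part and the bias $p_u-\tfrac12$: Burkholder's inequality bounds the centred contribution by a constant uniformly in $u$, while the bias contributes $c_0(R|\widetilde{\psi}_{m,i}|^{-1})^{2+\varepsilon'}\int|u|^{2(2+\varepsilon')}\,dF$, which is finite by Assumption \ref{restrict-2}(ii) once $2\varepsilon'\le\tau$ and vanishes by \eqref{restriction}; hence $E^{\ast}|\Theta_{m,R}^{(i)}|^{2+\varepsilon'}$ is bounded uniformly in $i$. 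The genuinely delicate step is \eqref{7-2}: the square of an integral in $u$ cannot be evaluated in closed form, so the whole difficulty is funnelled into the $L^4$-smallness of the discrepancy $Y_1$ between the $u$-indexed indicators and their $u=0$ counterparts, which is exactly what the Lipschitz-plus-Rosenthal estimate delivers. A subtler but essential point is that every $|\widetilde{\psi}_{m,i}|^{-1}$ bound must be uniform over the monitoring range $i\le m+T_m$, so that the maxima in \eqref{7-1}--\eqref{7-2} are genuinely controlled; this uniformity is inherited from the a.s.\ rate at which $\psi_{m,k}\to 0$ under $H_0$.
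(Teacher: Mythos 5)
Your proposal is correct and follows essentially the same route as the paper's proof: the exponential lower bound on $\widetilde{\psi}_{m,i}$ for the bias in \eqref{7-1}, the comparison with the fair-coin ($u=0$) statistic via the $X$, $Y=Y_{1}Y_{2}$ decomposition with Cauchy--Schwartz and Rosenthal for \eqref{7-2}, and convexity plus Burkholder with the bias term controlled by Assumption \ref{restrict-2}\textit{(ii)} for \eqref{7-3}. If anything, your exact computations $E^{\ast}\Theta_{m,R}^{(i)}=1+4(R-1)\int(p_{u}-\tfrac{1}{2})^{2}dF(u)$ and $(3R-2)/(16R)$ are slightly sharper than the paper's, which absorbs the $O(1/R)$ corrections implicitly.
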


\begin{lemma}
\label{lemma8}We assume that Assumptions \ref{as-2} and \ref{as-detrend} are
satisfied. Then it holds that%
\begin{equation*}
\widehat{\beta }_{m}^{d}-\beta =o_{a.s.}\left( \frac{\left( \ln m\right)
^{3+\varepsilon }}{m}\right) ,
\end{equation*}%
for all $\varepsilon >0$, where $\widehat{\beta }_{m}^{d}$ is defined in (%
\ref{beta-hat-detrend}).

\begin{proof}
The proof follows similar passages to the proofs of Lemmas \ref{lemma2} and %
\ref{lemma3}. Note that, by the Frisch-Waugh-Lovell theorem, we can write%
\begin{equation}
\widehat{\beta }_{m}^{d}-\beta =\left[ \sum_{i=1}^{m}\left( \widehat{u}%
_{i}^{x}\right) ^{2}\right] ^{-1}\sum_{i=1}^{m}\widehat{u}_{i}^{x}\epsilon
_{i}.  \label{est-error-detrend}
\end{equation}%
We begin by showing that there exist a finite constant $c_{0}>0$ and a random
variable $m_{0}$ such that, for $m\geq m_{0}$ and all $\varepsilon >0$%
\begin{equation}
\frac{\ln \left( m\left( \ln m\right) ^{2+\varepsilon }\right) }{m^{2}}%
\sum_{i=1}^{m}\left( \widehat{u}_{i}^{x}\right) ^{2}\geq c_{0}.
\label{deno-detrend}
\end{equation}%
As an immediate consequence of Assumption \ref{as-2}, tedious but standard
calculations yield%
\begin{equation}
\frac{\int_{0}^{1}\widehat{W}_{x}^{2}\left( r\right) dr}{\frac{1}{m^{2}}%
\sum_{i=1}^{m}\left( \widehat{u}_{i}^{x}\right) ^{2}}=o_{a.s.}\left(
1\right) ,  \label{sip-detrend}
\end{equation}%
where $\widehat{W}_{x}\left( r\right) $, $r\in \left[ 0,1\right] $, is a
detrended Brownian motion defined as%
\begin{equation*}
\widehat{W}_{x}\left( r\right) =\bar{W}_{x}\left( r\right) -12\left( r-\frac{%
1}{2}\right) \int_{0}^{1}\left( s-\frac{1}{2}\right) \bar{W}_{x}\left(
s\right) ds,
\end{equation*}%
where $\bar{W}_{x}\left( r\right) =W_{x}\left( r\right)
-\int_{0}^{1}W_{x}\left( r\right) dr$, and $W_{x}\left( r\right) $ is
defined in Assumption \ref{as-2}. Then, by Proposition 3.3 in \citet{wembo2012}, it
holds that, for any sequence $f_{m}\rightarrow 0$%
\begin{equation*}
P\left( \int_{0}^{1}\widehat{W}_{x}^{2}\left( r\right) dr\leq f_{m}\right) \leq
c_{0}\frac{1}{f_{m}}\exp \left( -\frac{1}{8f_{m}}\right) .
\end{equation*}%
Upon using $f_{m}^{-1}=8\ln \left( m\left( \ln m\right)
^{2+\varepsilon }\right) $, it is easy to see that%
\begin{equation*}
\sum_{m=1}^{\infty }P\left( \int_{0}^{1}\widehat{W}_{x}^{2}\left( r\right) dr \leq c_{0}f_{m}^{-1}\right) <\infty ,
\end{equation*}%
which, combined with (\ref{sip-detrend}), yields (\ref{deno-detrend}) by the
Borel-Cantelli lemma. 

We now show that%
\begin{equation}
\sum_{i=1}^{m}\widehat{u}_{i}^{x}\epsilon _{i}=o_{a.s.}\left( m\left( \ln
m\right) ^{2+\varepsilon }\right) ,  \label{num-detrend}
\end{equation}%
for all $\varepsilon >0$. Note that%
\begin{equation*}
\widehat{u}_{i}^{x}=u_{i}^{x}+\left( b_{0}-\widehat{b}_{0}\right) +\left(
b_{1}-\widehat{b}_{1}\right) i,
\end{equation*}%
where%
\begin{equation*}
\left( 
\begin{array}{c}
\widehat{b}_{0}-b_{0} \\ 
\widehat{b}_{1}-b_{1}%
\end{array}%
\right) =\left( 
\begin{array}{cc}
m & \frac{m\left( m+1\right) }{2} \\ 
\frac{m\left( m+1\right) }{2} & \frac{m\left( m+1\right) \left( 2m+1\right) }{%
6}%
\end{array}%
\right) ^{-1}\left( 
\begin{array}{c}
\sum_{i=1}^{m}u_{i}^{x} \\ 
\sum_{i=1}^{m}iu_{i}^{x}%
\end{array}%
\right) .
\end{equation*}%
It can be shown by standard arguments that $Var\left( \sum_{i=1}^{m}u_{i}^{x}\right) =O\left( m^{3}\right) 
$ and $Var\left( \sum_{i=1}^{m}iu_{i}^{x}\right) =O\left( m^{5}\right) $. Thus, using Lemma \ref{lemma1} yields%
\begin{eqnarray*}
\widehat{b}_{0}-b_{0} &=&o_{a.s.}\left( m^{-1/2}\left( \ln m\right)
^{1+\varepsilon }\right) , \\
\widehat{b}_{1}-b_{1} &=&o_{a.s.}\left( m^{-1/2}\left( \ln m\right)
^{1+\varepsilon }\right) ,
\end{eqnarray*}%
for all $\varepsilon >0$. Hence%
\begin{equation*}
\sum_{i=1}^{m}\widehat{u}_{i}^{x}\epsilon
_{i}=\sum_{i=1}^{m}u_{i}^{x}\epsilon _{i}+\left( b_{0}-\widehat{b}%
_{0}\right) \sum_{i=1}^{m}\epsilon _{i}+\left( b_{1}-\widehat{b}_{1}\right)
\sum_{i=1}^{m}i\epsilon _{i}.
\end{equation*}%
Assumption \ref{as-2} yields $Var\left( \sum_{i=1}^{m}\epsilon _{i}\right)
=O\left( m\right) $ and $Var\left( \sum_{i=1}^{m}i\epsilon _{i}\right)
=O\left( m^{3}\right) $. Using Lemmas \ref{lemma1} and \ref{lemma3} and
putting everything together, (\ref{num-detrend}) follows. Recalling (\ref%
{est-error-detrend}), this immediately yields the desired result.
\end{proof}
\end{lemma}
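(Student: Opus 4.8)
The plan is to start from the Frisch--Waugh--Lovell representation of the detrended estimator in (\ref{beta-hat-detrend}), namely
\begin{equation*}
\widehat{\beta}_{m}^{d}-\beta=\left[\sum_{i=1}^{m}\left(\widehat{u}_{i}^{x}\right)^{2}\right]^{-1}\sum_{i=1}^{m}\widehat{u}_{i}^{x}\epsilon_{i},
\end{equation*}
where $\widehat{u}_{i}^{x}$ is the OLS residual from regressing $x_{i}$ on a constant and a linear trend (the orthogonality of $\widehat{u}_{i}^{x}$ to $(1,i)$ lets us replace the detrended $\epsilon$ by $\epsilon_{i}$ in the numerator). I would then bound the denominator from below and the numerator from above, in the spirit of Lemmas \ref{lemma1}--\ref{lemma3}, and take the quotient.

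For the numerator, write $\widehat{u}_{i}^{x}=u_{i}^{x}+(b_{0}-\widehat{b}_{0})+(b_{1}-\widehat{b}_{1})i$, where $u_{i}^{x}$ is the centred regressor and $(\widehat{b}_{0},\widehat{b}_{1})'$ the OLS trend coefficients, so that
\begin{equation*}
\sum_{i=1}^{m}\widehat{u}_{i}^{x}\epsilon_{i}=\sum_{i=1}^{m}u_{i}^{x}\epsilon_{i}+(b_{0}-\widehat{b}_{0})\sum_{i=1}^{m}\epsilon_{i}+(b_{1}-\widehat{b}_{1})\sum_{i=1}^{m}i\epsilon_{i}.
\end{equation*}
The first term is $o_{a.s.}(m(\ln m)^{1+\varepsilon})$ by Lemma \ref{lemma2}. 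For the other two I would invert the explicit $2\times2$ trend-design matrix to express $\widehat{b}_{0}-b_{0}$ and $\widehat{b}_{1}-b_{1}$ as linear combinations of $m^{-1}\sum u_{i}^{x}$ and $m^{-2}\sum iu_{i}^{x}$, and control the partial sums $\sum_{i\le t}u_{i}^{x}$, $\sum_{i\le t}iu_{i}^{x}$, $\sum_{i\le t}\epsilon_{i}$ and $\sum_{i\le t}i\epsilon_{i}$ through their second moments---$O(t^{3})$ (Assumption \ref{as-2}), $O(t^{5})$ (Assumption \ref{as-detrend}(ii)), $O(t)$ (Assumption \ref{as-2}) and $O(t^{3})$ (Assumption \ref{as-detrend}(i))---via a Serfling-type maximal inequality together with Lemma \ref{stou} and Markov's inequality, exactly as in the proof of Lemma \ref{lemma2}. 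This gives $\widehat{b}_{0}-b_{0}=o_{a.s.}(m^{1/2}(\ln m)^{1+\varepsilon})$ and $\widehat{b}_{1}-b_{1}=o_{a.s.}(m^{-1/2}(\ln m)^{1+\varepsilon})$; pairing these with $\sum\epsilon_{i}=o_{a.s.}(m^{1/2}(\ln m)^{1+\varepsilon})$ and $\sum i\epsilon_{i}=o_{a.s.}(m^{3/2}(\ln m)^{1+\varepsilon})$ makes both cross terms $o_{a.s.}(m(\ln m)^{2+\varepsilon})$, whence
\begin{equation*}
\sum_{i=1}^{m}\widehat{u}_{i}^{x}\epsilon_{i}=o_{a.s.}\left(m(\ln m)^{2+\varepsilon}\right).
\end{equation*}

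The genuinely delicate step is the lower bound on the denominator, and this is where I expect the main obstacle to lie. By the strong approximation of Assumption \ref{as-2}(ii) and standard manipulations, $m^{-2}\sum_{i=1}^{m}(\widehat{u}_{i}^{x})^{2}$ equals, up to an $o_{a.s.}(1)$ term, $\int_{0}^{1}\widehat{W}_{x}^{2}(r)\,dr$ for a detrended Wiener process $\widehat{W}_{x}$. The difficulty is that---unlike the undetrended sum $\sum x_{i}^{2}$ in Lemma \ref{lemma1}, where the LIL of \citet{donsker1977} supplies the lower bound directly---this object is a rescaled functional that genuinely fluctuates with $m$ (equivalently, $m^{-2}\int_{0}^{m}(\text{detrended }W_{x})^{2}$ does not stabilise to a deterministic constant), so the pathwise positivity $\int_{0}^{1}\widehat{W}_{x}^{2}>0$ a.s.\ does \emph{not} yield a bound uniform in $m$. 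My plan is therefore to invoke a small-deviation (small-ball) estimate for the $L^{2}$ norm of detrended Brownian motion of the form $P(\int_{0}^{1}\widehat{W}_{x}^{2}\le f)\le c_{0}f^{-1}\exp(-c_{1}/f)$ for small $f$, and to apply Borel--Cantelli along a deterministic sequence $f_{m}\to0$. Taking $f_{m}^{-1}$ proportional to $\ln\!\big(m(\ln m)^{2+\varepsilon}\big)$ renders $\sum_{m}P(\int_{0}^{1}\widehat{W}_{x}^{2}\le f_{m})$ summable, so that a.s.\ for large $m$
\begin{equation*}
\sum_{i=1}^{m}\left(\widehat{u}_{i}^{x}\right)^{2}\ge c_{0}\,\frac{m^{2}}{\ln m}.
\end{equation*}
Pinning down the correct exponential constant $c_{1}$ (which is governed by the top eigenvalue of the covariance operator of detrended Brownian motion and fixes the admissible rate of $f_{m}$) is the main technical input, and is where most of the work lies. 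Combining the two bounds, the quotient is $o_{a.s.}\big(m(\ln m)^{2+\varepsilon}\cdot\ln m\,/\,m^{2}\big)=o_{a.s.}\big((\ln m)^{3+\varepsilon}/m\big)$, the claimed rate; the extra power of $\ln m$ relative to Lemma \ref{lemma3} is precisely the price of the small-ball lower bound on the detrended design.
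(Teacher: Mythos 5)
Your proposal follows essentially the same route as the paper's proof: the Frisch--Waugh--Lovell representation (\ref{est-error-detrend}), a small-ball estimate for $\int_{0}^{1}\widehat{W}_{x}^{2}(r)\,dr$ combined with Borel--Cantelli (with $f_{m}^{-1}\propto\ln(m(\ln m)^{2+\varepsilon})$) to obtain the $m^{2}/\ln m$ lower bound on the denominator, and the trend-coefficient decomposition with variance bounds and maximal inequalities for the numerator. The only discrepancy is your rate $\widehat{b}_{0}-b_{0}=o_{a.s.}(m^{1/2}(\ln m)^{1+\varepsilon})$ versus the paper's stated $m^{-1/2}$; your exponent is the one actually consistent with $\mathrm{Var}(\sum_{i=1}^{m}u_{i}^{x})=O(m^{3})$ and $\mathrm{Var}(\sum_{i=1}^{m}iu_{i}^{x})=O(m^{5})$, and since the cross terms remain $o_{a.s.}(m(\ln m)^{2+\varepsilon})$ either way, the conclusion is unaffected.
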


\begin{proof}[Proof of Proposition \protect\ref{long-run}]
It holds that%
\begin{align}
& \widehat{\sigma }_{\epsilon }^{2}-\sigma _{\epsilon }^{2}=\widehat{\rho }%
_{0}^{\left( \epsilon \right) }-\rho _{0}^{\left( \epsilon \right)
}+2\sum_{l=1}^{H}\left( 1-\frac{l}{H+1}\right) \left( \widehat{\rho }%
_{l}^{\left( \epsilon \right) }-\rho _{l}^{\left( \epsilon \right) }\right) -%
\frac{2}{H+1}\sum_{l=1}^{H}l\rho _{l}^{\left( \epsilon \right)
}-2\sum_{l=H+1}^{\infty }\rho _{l}^{\left( \epsilon \right) }
\label{sig-error} \\
& =I+II+III+IV.  \notag
\end{align}%
By standard arguments, it follows that Assumption \ref{lrv}\textit{(ii)}
entails that $III=O\left( H^{-1}\right) $ and $IV=o\left( H^{-1}\right) $.
We now consider $I+II$, defining $\kappa \left( l\right) $ such that $\kappa
\left( 0\right) =1$ and $\kappa \left( l\right) =2\left( 1-\frac{l}{H+1}%
\right) $ for $l\geq 1$. Note first that%
\begin{align}
& \widehat{\rho }_{l}^{\left( \epsilon \right) }=\frac{1}{m}%
\sum_{i=l+1}^{m}\epsilon _{i}\epsilon _{i-l}+\frac{1}{m}\left( \beta -%
\widehat{\beta }_{m}\right) \sum_{i=l+1}^{m}x_{i}\epsilon _{i-l}
\label{rho-eps} \\
& +\frac{1}{m}\left( \beta -\widehat{\beta }_{m}\right)
\sum_{i=l+1}^{m}x_{i-l}\epsilon _{i}+\frac{1}{m}\left( \beta -\widehat{\beta 
}_{m}\right) ^{2}\sum_{i=l+1}^{m}x_{i}x_{i-l},  \notag
\end{align}%
and let $\widetilde{\rho }_{l}^{\left( \epsilon \right)
}=m^{-1}\sum_{i=l+i}^{m}\epsilon _{i}\epsilon _{i-l}$. We begin by studying%
\begin{eqnarray*}
&&E\left\vert \sum_{l=0}^{H}\kappa \left( l\right) \left( \widetilde{\rho }%
_{l}^{\left( \epsilon \right) }-\rho _{l}^{\left( \epsilon \right) }\right)
\right\vert ^{2} \\
&=&\sum_{l=0}^{H}\sum_{h=0}^{H}\kappa \left( l\right) \kappa \left( h\right)
E\left( \frac{1}{m}\sum_{i=l+1}^{m}y_{i,l}^{\left( \epsilon \right) }\right)
\left( \frac{1}{m}\sum_{i=h+1}^{m}y_{h,l}^{\left( \epsilon \right) }\right)
+\sum_{l=0}^{H}\sum_{h=0}^{H}\kappa \left( l\right) \kappa \left( h\right) 
\frac{lh}{m^{2}}\left( \rho _{l}^{\left( \epsilon \right) }\rho _{h}^{\left(
\epsilon \right) }\right) \\
&=&\sum_{l=0}^{H}\sum_{h=0}^{H}\kappa \left( l\right) \kappa \left( h\right)
E\left( \frac{1}{m}\sum_{i=l+1}^{m}y_{i,l}^{\left( \epsilon \right) }\right)
\left( \frac{1}{m}\sum_{i=h+1}^{m}y_{h,l}^{\left( \epsilon \right) }\right)
+\left( \sum_{l=0}^{H}\kappa \left( l\right) \frac{l}{m}\rho _{l}^{\left(
\epsilon \right) }\right) ^{2}.
\end{eqnarray*}%
Noting that $\left\vert \kappa \left( l\right) \right\vert \leq 2$, the
second term is bounded by Assumption \ref{lrv}\textit{(ii)}; as far as the
first term is concerned, using the Cauchy-Schwartz inequality, this is
bounded by%
\begin{equation*}
\frac{4}{m^{2}}\sum_{l=0}^{H}\sum_{h=0}^{H}\left( E\left\vert
\sum_{i=l+1}^{m}y_{i,l}^{\left( \epsilon \right) }\right\vert ^{2}\right)
^{1/2}\left( E\left\vert \sum_{i=h+1}^{m}y_{h,l}^{\left( \epsilon \right)
}\right\vert ^{2}\right) ^{1/2}\leq c_{0}\frac{H^{2}}{m},
\end{equation*}%
by Assumption \ref{lrv}\textit{(iii)}. By the maximal inequality for
rectangular sums (see \citealp{moricz1983}), it follows that%
\begin{equation*}
E\max_{1\leq m^{\prime }\leq m,1\leq h\leq H}\left\vert \sum_{l=0}^{H}\kappa
\left( l\right) \frac{1}{m}\sum_{i=l+1}^{m^{\prime }}y_{i,l}^{\left(
\epsilon \right) }\right\vert ^{2}\leq c_{0}\frac{H^{2}}{m}\ln m\ln H,
\end{equation*}%
which in turn, by\ Lemma \ref{stou}, entails%
\begin{equation*}
\sum_{l=0}^{H}\kappa \left( l\right) \left( \widetilde{\rho }_{l}^{\left(
\epsilon \right) }-\rho _{l}^{\left( \epsilon \right) }\right)
=o_{a.s.}\left( \frac{H}{m^{1/2}}\left( \ln m\right) ^{1+\varepsilon }\left(
\ln H\right) ^{1+\varepsilon }\right) ,
\end{equation*}%
for every $\varepsilon >0$. Recalling (\ref{rho-eps}), note that%
\begin{equation*}
E\left\vert \sum_{l=0}^{H}\kappa \left( l\right)
\sum_{i=l+i}^{m}x_{i}x_{i-l}\right\vert \leq
2\sum_{l=0}^{H}\sum_{i=l+i}^{m}\left( Ex_{i}^{2}\right) ^{1/2}\left(
Ex_{i-l}^{2}\right) ^{1/2}\leq c_{0}m^{2}H,
\end{equation*}%
having used Assumption \ref{as-2}\textit{(iii)}. Thus, by the same logic as
above and Lemma \ref{lemma3}%
\begin{equation*}
\frac{1}{m}\left( \beta -\widehat{\beta }_{m}\right)
^{2}\sum_{l=0}^{H}\kappa \left( l\right)
\sum_{i=l+i}^{m}x_{i}x_{i-l}=o_{a.s.}\left( \frac{H}{m}\left( \ln m\right)
^{4+\varepsilon }\left( \ln \ln m\right) ^{2}\left( \ln H\right)
^{2+\varepsilon }\right) .
\end{equation*}%
Also we can derive the (crude) estimate%
\begin{equation*}
E\left\vert \sum_{l=0}^{H}\kappa \left( l\right)
\sum_{i=l+i}^{m}x_{i}\epsilon _{i-l}\right\vert \leq
\sum_{l=0}^{H}\sum_{i=l+i}^{m}\left( Ex_{i}^{2}\right) ^{1/2}\left(
E\epsilon _{i-l}^{2}\right) ^{1/2}\leq c_{0}m^{3/2}H,
\end{equation*}%
by virtue of Assumptions \ref{as-2}\textit{(i)} and \ref{as-2}\textit{(iv)}.
Thus, by the same logic as above%
\begin{equation*}
\frac{1}{m}\left( \beta -\widehat{\beta }_{m}\right) \sum_{l=0}^{H}\kappa
\left( l\right) \sum_{i=l+i}^{m}x_{i}\epsilon _{i-l}=o_{a.s.}\left( \frac{H}{%
m^{1/2}}\left( \ln m\right) ^{3+\varepsilon }\left( \ln \ln m\right) \left(
\ln H\right) ^{2+\varepsilon }\right) .
\end{equation*}%
Thus, in (\ref{sig-error}), we have 
\begin{equation*}
I+II=o_{a.s.}\left( \frac{H}{m^{1/2}}\left( \ln m\right) ^{3+\varepsilon
}\left( \ln \ln m\right) \left( \ln H\right) ^{2+\varepsilon }\right) .
\end{equation*}%
Putting all together, the desired result follows. 
\end{proof}

\setcounter{subsection}{-1} \setcounter{equation}{0} %
\renewcommand{\theequation}{B.\arabic{equation}}

\section{Proofs\label{proofs}}

Results and proofs are presented for the
univariate case, i.e. for $p=1$, for simplicity and without loss of
generality. Also, henceforth we define $E^{\ast }$ and $V^{\ast }$ as the
expected value and the variance according to $P^{\ast }$.

\begin{proof}[Proof of Theorem \protect\ref{theta-1}]
We begin by noting that, by Lemma \ref{lemma5} and by the definition of $%
g\left( m;k\right) $, it holds that 
\begin{equation*}
\frac{Q\left( m;k\right) }{g\left( m;k\right) }=o_{a.s.}\left( m^{-\gamma
}\right) ,
\end{equation*}%
for $1\leq k\leq T_{m}$. In turn, this entails that there exists a random
variable $m_{0}$ such that, for $m\geq m_{0}$%
\begin{equation}
\widetilde{\psi }_{m,k}\geq c_{0}\exp \left( m^{\gamma }\right) ,
\label{psi-3}
\end{equation}%
which also entails that we can assume%
\begin{equation}
\lim_{m\rightarrow \infty }\widetilde{\psi }_{m,k}=\infty \text{.}
\label{psi-2}
\end{equation}%
We show the theorem using $u>0$ without loss of generality. Let $G\left(
\cdot \right) $\ denote the normal distribution. We have%
\begin{align*}
& R^{-1/2}\sum_{j=1}^{R}\left( \zeta _{j}^{\left( k\right) }-\frac{1}{2}%
\right)  \\
& =R^{-1/2}\sum_{j=1}^{R}\left( I\left( \left\vert \widetilde{\psi }%
_{m,k}\right\vert ^{1/2}\xi _{j}^{\left( k\right) }\leq 0\right) -\frac{1}{2}%
\right) +R^{-1/2}\sum_{j=1}^{R}\left( G\left( u\left\vert \widetilde{\psi }%
_{m,k}\right\vert ^{-1/2}\right) -\frac{1}{2}\right)  \\
& +R^{-1/2}\sum_{j=1}^{R}\left( I\left( 0<\left\vert \widetilde{\psi }%
_{m,k}\right\vert ^{1/2}\xi _{j}^{\left( k\right) }\leq u\right) -\left(
G\left( u\left\vert \widetilde{\psi }_{m,k}\right\vert ^{-1/2}\right) -\frac{%
1}{2}\right) \right)  \\
& =I+II+III.
\end{align*}%
We start with $III$; note that $E^{\ast }I\left( 0<\left\vert \widetilde{%
\psi }_{m,k}\right\vert ^{1/2}\xi _{j}^{\left( k\right) }\leq u\right) $ $=$ 
$\left( G\left( u\left\vert \widetilde{\psi }_{m,k}\right\vert
^{-1/2}\right) -\frac{1}{2}\right) $, and%
\begin{equation*}
V^{\ast }I\left( 0<\left\vert \widetilde{\psi }_{m,k}\right\vert ^{1/2}\xi
_{j}^{\left( k\right) }\leq u\right) =\left( G\left( u\left\vert \widetilde{%
\psi }_{m,k}\right\vert ^{-1/2}\right) -\frac{1}{2}\right) \left[ 1-\left(
G\left( u\left\vert \widetilde{\psi }_{m,k}\right\vert ^{-1/2}\right) +\frac{%
1}{2}\right) \right] .
\end{equation*}%
Thus we have%
\begin{align*}
& E^{\ast }\int_{-\infty }^{+\infty }\left\vert R^{-1/2}\sum_{j=1}^{R}\left(
I\left( 0<\left\vert \widetilde{\psi }_{m,k}\right\vert ^{1/2}\xi
_{j}^{\left( k\right) }\leq u\right) -\left( G\left( u\left\vert \widetilde{%
\psi }_{m,k}\right\vert ^{-1/2}\right) -\frac{1}{2}\right) \right)
\right\vert ^{2}dF\left( u\right)  \\
& =\int_{-\infty }^{+\infty }E^{\ast }\left\vert \left( I\left( 0<\left\vert 
\widetilde{\psi }_{m,k}\right\vert ^{1/2}\xi _{1}^{\left( k\right) }\leq
u\right) -\left( G\left( u\left\vert \widetilde{\psi }_{m,k}\right\vert
^{-1/2}\right) -\frac{1}{2}\right) \right) \right\vert ^{2}dF\left( u\right) 
\\
& =\int_{-\infty }^{+\infty }V^{\ast }I\left( 0<\left\vert \widetilde{\psi }%
_{m,k}\right\vert ^{1/2}\xi _{1}^{\left( k\right) }\leq u\right) dF\left(
u\right)  \\
& \leq \int_{-\infty }^{+\infty }E^{\ast }I\left( 0<\left\vert \widetilde{%
\psi }_{m,k}\right\vert ^{1/2}\xi _{1}^{\left( k\right) }\leq u\right)
dF\left( u\right)  \\
& \leq \frac{1}{\sqrt{2\pi }}\left\vert \widetilde{\psi }_{m,k}\right\vert
^{-1/2}\int_{-\infty }^{+\infty }\left\vert u\right\vert dF\left( u\right) .
\end{align*}%
By (\ref{psi-2}) and Assumption \ref{regularity}\textit{(i)}, it holds that $%
III=o_{P^{\ast }}\left( 1\right) $. We now turn to $II$, by studying%
\begin{align*}
& \int_{-\infty }^{+\infty }\left\vert R^{-1/2}\sum_{j=1}^{R}\left( G\left(
u\left\vert \widetilde{\psi }_{m,k}\right\vert ^{-1/2}\right) -\frac{1}{2}%
\right) \right\vert ^{2}dF\left( u\right)  \\
& =R\int_{-\infty }^{+\infty }\left\vert \left( G\left( u\left\vert 
\widetilde{\psi }_{m,k}\right\vert ^{-1/2}\right) -\frac{1}{2}\right)
\right\vert ^{2}dF\left( u\right)  \\
& \leq R\frac{1}{2\pi }\left\vert \widetilde{\psi }_{m,k}\right\vert
^{-1}\int_{-\infty }^{+\infty }u^{2}dF\left( u\right) ,
\end{align*}%
which, by (\ref{psi-3}), Assumption \ref{regularity}\textit{(i)} and (\ref%
{restriction}), entails that $II=o\left( 1\right) $. Putting all together,
Markov inequality yields%
\begin{equation*}
\int_{-\infty }^{+\infty }\left\vert 2R^{-1/2}\sum_{j=1}^{R}\left( \zeta
_{j}^{\left( k\right) }-\frac{1}{2}\right) \right\vert ^{2}dF\left( u\right)
=\int_{-\infty }^{+\infty }\left\vert 2R^{-1/2}\sum_{j=1}^{R}\left( I\left(
\left\vert \widetilde{\psi }_{m,k}\right\vert ^{1/2}\xi _{j}^{\left(
k\right) }\leq 0\right) -\frac{1}{2}\right) \right\vert ^{2}dF\left(
u\right) +o_{P^{\ast }}\left( 1\right) ;
\end{equation*}%
the desired result now follows from the CLT for Bernoulli random variables. 
\end{proof}

\begin{proof}[Proof of Theorem \protect\ref{theta-2}]
Recall that, as $m\rightarrow \infty $, Lemma \ref{lemma6} entails that, for
every $k\geq \left\lfloor m^{\max \left\{ 1,\theta ^{\prime }\right\} \left(
1+\varepsilon \right) }\right\rfloor $, $\varepsilon >0$%
\begin{equation*}
P\left\{ \omega :\psi _{m,k}=\infty \right\} =1.
\end{equation*}%
Therefore we can assume that%
\begin{equation}
\lim_{m\rightarrow \infty }\widetilde{\psi }_{m,k}=0.  \label{psi-4}
\end{equation}%
As in the proof of the previous theorem, we consider the case of $u>0$ only.
We have%
\begin{align*}
& E^{\ast }\left\vert R^{-1/2}\sum_{j=1}^{R}\left( I\left( \left\vert 
\widetilde{\psi }_{m,k}\right\vert ^{1/2}\xi _{j}^{\left( k\right) }\leq
u\right) -\frac{1}{2}\right) \right\vert ^{2} \\
& =E^{\ast }\left\vert R^{-1/2}\sum_{j=1}^{R}\left( I\left( \left\vert 
\widetilde{\psi }_{m,k}\right\vert ^{1/2}\xi _{j}^{\left( k\right) }\leq
u\right) -G\left( u\left\vert \widetilde{\psi }_{m,k}\right\vert
^{-1/2}\right) \right) \right\vert ^{2} \\
& +R\left( G\left( u\left\vert \widetilde{\psi }_{m,k}\right\vert
^{-1/2}\right) -\frac{1}{2}\right) ^{2}=I+II.
\end{align*}%
Equation (\ref{psi-4}) yields immediately that $R^{-1}II\rightarrow \frac{1}{%
4}$. Similarly, note that%
\begin{eqnarray*}
&&E^{\ast }\left\vert \left( I\left( \left\vert \widetilde{\psi }%
_{m,k}\right\vert ^{1/2}\xi _{j}^{\left( k\right) }\leq u\right) -G\left(
u\left\vert \widetilde{\psi }_{m,k}\right\vert ^{-1/2}\right) \right)
\right\vert ^{2} \\
&=&R^{-1}\sum_{j=1}^{R}E^{\ast }\left( I\left( \left\vert \widetilde{\psi }%
_{m,k}\right\vert ^{1/2}\xi _{j}^{\left( k\right) }\leq u\right) -G\left(
u\left\vert \widetilde{\psi }_{m,k}\right\vert ^{-1/2}\right) \right) ^{2} \\
&=&V^{\ast }\left( I\left( \left\vert \widetilde{\psi }_{m,k}\right\vert
^{1/2}\xi _{1}^{\left( k\right) }\leq u\right) -G\left( u\left\vert 
\widetilde{\psi }_{m,k}\right\vert ^{-1/2}\right) \right) <\infty ,
\end{eqnarray*}%
whence we conclude that $I=O_{P^{\ast }}\left( 1\right) $, so that ultimately%
\begin{equation*}
\frac{1}{R}\left\vert R^{-1/2}\sum_{j=1}^{R}\left( I\left( \left\vert 
\widetilde{\psi }_{m,k}\right\vert ^{1/2}\xi _{j}^{\left( k\right) }\leq
u\right) -\frac{1}{2}\right) \right\vert ^{2}\overset{P^{\ast }}{\rightarrow 
}\frac{1}{4},
\end{equation*}%
and therefore, under $H_{A,1}\cup H_{A,2}$%
\begin{equation*}
\frac{1}{R}\int_{-\infty }^{+\infty }\left\vert R^{-1/2}\sum_{j=1}^{R}\left(
\zeta _{j}^{\left( k\right) }-\frac{1}{2}\right) \right\vert ^{2}dF\left(
u\right) =\frac{1}{4}+o_{P^{\ast }}\left( 1\right) .
\end{equation*}%
\end{proof}

\begin{proof}
[Proof of Theorem \ref{monitoring}] The proof of (\ref{null-asy-1}) and (\ref%
{null-asy-2}) follows immediately from Lemma \ref{lemma7}, once noting that%
\begin{equation*}
\max_{1\leq k\leq T_{m}}\left\vert \sum_{i=m+1}^{m+k}\frac{\Theta
_{m,R}^{\left( i\right) }-1}{\sqrt{2}}\right\vert =\max_{1\leq k\leq
T_{m}}\left\vert \sum_{i=m+1}^{m+k}Z_{i}\right\vert +O_{P^{\ast }}\left(
m^{-\varepsilon }\right) ,
\end{equation*}%
for $\varepsilon >0$, where $Z_{i}$ is \textit{i.i.d.} with $E^{\ast
}Z_{h,i}=0$, $V^{\ast }Z_{h,i}=1$ and $E^{\ast }\left\vert
Z_{h,i}\right\vert ^{2+\varepsilon }<\infty $. Detailed passages, based on %
\citet{lajos04} and \citet{lajos07}, can be found in \citet{bt1}.
\end{proof}

\begin{proof}
[Proof of Corollary \ref{corollary}]
Corollary \ref{corollary} follows immediately from Theorem \ref{monitoring};
passages can be found in \citet{bt1}.
\end{proof}

\begin{proof}[Proof of Theorem \protect\ref{drift}]
We present the proof only for (\ref{delta-drift-1}) and (\ref{delta-drift-2}); (\ref{sigma-drift-1}) and (\ref{sigma-drift-2}) can
be shown using exactly the same arguments, and we therefore omit it to save
space. From the proof of (\ref{power}), it is clear that, in order for the
monitoring procedure to detect changes, it must hold that%
\begin{equation*}
\frac{Q\left( m;k\right) }{g\left( m;k\right) }\rightarrow \infty \text{
a.s.,}
\end{equation*}%
as $m\rightarrow \infty $. Also, based on the passages in the proof of Lemma %
\ref{lemma6}, under $H^{\ast}_{A,1}$ the term that dominates in the expansion of $%
Q\left( m;k\right) $ is defined in equation (A.9) as%
\begin{equation*}
c_{0}\Delta _{\beta }^{2}\left( m\right) \frac{\left( m+k\right) ^{2}}{\ln
\ln \left( m+k\right) }.
\end{equation*}%
This entails that, in order to have power, based on (\ref{g-1}), a
sufficient condition is%
\begin{equation}
\Delta _{\beta }^{2}\left( m\right) \frac{\left( m+k\right) ^{2-\varepsilon }%
}{\left( m+k\right) ^{1+\gamma }}\left\vert \max \left\{ 1,\frac{m+k}{m^{2}}%
\right\} \right\vert ^{-\left( 1+\gamma \right) }\rightarrow \infty ,
\label{drift-limit}
\end{equation}%
for arbitrarily small $\varepsilon >0$. On account of the fact that $T_{m}=O\left(
m^{\theta }\right) $, as $k$ approaches the end of the monitoring period it
holds that%
\begin{equation*}
\max \left\{ 1,\frac{m+k}{m^{2}}\right\} =\left\{ 
\begin{array}{l}
1 \\ 
\frac{m+k}{m^{2}}%
\end{array}%
\right. \text{ according as }%
\begin{array}{c}
\theta \leq 2 \\ 
\theta >2%
\end{array}%
;
\end{equation*}%
thus, it is convenient to consider the two cases $\theta \leq 2$ and $\theta
>2$ separately. In the latter case, (\ref{drift-limit}) becomes%
\begin{equation*}
\Delta _{\beta }^{2}\left( m\right) \left( m+k\right) ^{1-\gamma
-\varepsilon }\rightarrow \infty ;
\end{equation*}%
recalling (\ref{gamma}), this can also be rewritten as%
\begin{equation*}
\Delta _{\beta }^{2}\left( m\right) \left( m+k\right) ^{\theta \frac{\theta
-2+\delta }{\theta -1}-\varepsilon }\rightarrow \infty ,
\end{equation*}%
whence (\ref{delta-drift-1}) follows immediately. When $\theta >2$, it must
hold that%
\begin{equation*}
\Delta _{\beta }^{2}\left( m\right) \frac{\left( m+k\right) ^{-\varepsilon }%
}{\left( m+k\right) ^{2\gamma }}m^{2\left( 1+\gamma \right) }\rightarrow
\infty ,
\end{equation*}%
which after some manipulations can be written as%
\begin{equation*}
m^{2-\varepsilon }\Delta _{\beta }^{2}\left( m\right) \frac{m^{2\gamma }}{%
\left( m+k\right) ^{2\gamma }}\rightarrow \infty .
\end{equation*}%
However%
\begin{equation*}
m^{2-\varepsilon }\Delta _{\beta }^{2}\left( m\right) \frac{m^{2\gamma }}{%
\left( m+k\right) ^{2\gamma }}\geq m^{2-\varepsilon }\Delta _{\beta
}^{2}\left( m\right) \frac{m^{2\gamma }}{m^{2\gamma \theta }}%
=m^{2-\varepsilon }\Delta _{\beta }^{2}\left( m\right) m^{2\left( \delta
-1\right) },
\end{equation*}%
which yields (\ref{delta-drift-2}) immediately. 
\end{proof}

\begin{proof}[Proof of Theorem \protect\ref{detrending}]
The proof of the validity of (\ref{power}) follows from using Lemma \ref%
{lemma7}, and exactly the same arguments as in the proof of Lemma \ref%
{lemma6}. The proof of the validity of (\ref{null-asy-1}) and (\ref%
{null-asy-2}) require only to show that, under $H_{0}$%
\begin{equation}
\frac{Q^{d}\left( m;k\right) }{g\left( m;k\right) }\rightarrow 0,
\label{zero-detrend}
\end{equation}%
as $m\rightarrow \infty $ - everything else follows from the same
calculations as in the previous results. Consider%
\begin{eqnarray}
\widehat{\epsilon }_{i}^{d} &=&\widetilde{\epsilon }_{i}-\left( \widehat{\mu 
}_{0,i}+\widehat{\mu }_{1,i}i\right)   \notag \\
&=&\epsilon _{i}+\left( \beta -\widehat{\beta }_{m}^{d}\right) x_{i}+\left(
\mu _{0,i}-\widehat{\mu }_{0,i}\right) +\left( \mu _{1,i}-\widehat{\mu }%
_{1,i}\right) i,  \label{error-2}
\end{eqnarray}%
and note that%
\begin{equation*}
\left( 
\begin{array}{c}
\widehat{\mu }_{0,i}-\mu _{0,i} \\ 
\widehat{\mu }_{1,i}-\mu _{1,i}%
\end{array}%
\right) =\left( 
\begin{array}{cc}
i & \frac{i\left( i+1\right) }{2} \\ 
\left( i+1\right)  & \frac{i\left( i+1\right) \left( 2i+1\right) }{6}%
\end{array}%
\right) ^{-1}\left( 
\begin{array}{c}
\sum_{j=1}^{i}\widetilde{\epsilon }_{j} \\ 
\sum_{j=1}^{i}j\widetilde{\epsilon }_{j}%
\end{array}%
\right) .
\end{equation*}%
Using Lemma \ref{lemma7} and Assumption \ref{as-detrend} it is easy to see
that 
\begin{eqnarray*}
\sum_{j=1}^{i}\widetilde{\epsilon }_{j} &=&o_{a.s.}\left( i^{1/2}\left( \ln
i\right) ^{1+\varepsilon }\left( 1+\frac{\left( \ln m\right) ^{3+\varepsilon
}}{m}i\right) \right) , \\
\sum_{j=1}^{i}j\widetilde{\epsilon }_{j} &=&o_{a.s.}\left( i^{3/2}\left( \ln
i\right) ^{1+\varepsilon }\left( 1+\frac{\left( \ln m\right) ^{3+\varepsilon
}}{m}i\right) \right) ,
\end{eqnarray*}%
whence it follows that%
\begin{eqnarray}
\widehat{\mu }_{0,i}-\mu _{0,i} &=&o_{a.s.}\left( i^{-1/2}\left( \ln
i\right) ^{1+\varepsilon }\left( 1+\frac{\left( \ln m\right) ^{3+\varepsilon
}}{m}i\right) \right) ,  \label{m-0} \\
\widehat{\mu }_{1,i}-\mu _{1,i} &=&o_{a.s.}\left( i^{-3/2}\left( \ln
i\right) ^{1+\varepsilon }\left( 1+\frac{\left( \ln m\right) ^{3+\varepsilon
}}{m}i\right) \right) .  \label{m-1}
\end{eqnarray}%
By (\ref{error-2})%
\begin{equation*}
\frac{1}{4}\sum_{i=m+1}^{m+k}\left( \widehat{\epsilon }_{i}^{d}\right)
^{2}\leq \sum_{i=m+1}^{m+k}\epsilon _{i}^{2}+\left( \widehat{\beta }%
_{m}^{d}-\beta \right)
^{2}\sum_{i=m+1}^{m+k}x_{i}^{2}+\sum_{i=m+1}^{m+k}\left( \widehat{\mu }%
_{0,i}-\mu _{0,i}\right) ^{2}+\sum_{i=m+1}^{m+k}i^{2}\left( \widehat{\mu }%
_{1,i}-\mu _{1,i}\right) ^{2}.
\end{equation*}%
Using Assumption \ref{as-2}\textit{(i)}, Lemma \ref{lemma7} and the LIL\ for
functionals of Brownian motion, (\ref{m-0}) and (\ref{m-1}), it follows that%
\begin{equation*}
\sum_{i=m+1}^{m+k}\left( \widehat{\epsilon }_{i}^{d}\right)
^{2}=o_{a.s.}\left( r_{m,k}^{\prime }\right) ,
\end{equation*}%
with%
\begin{equation*}
r_{m,k}^{\prime }=\left( m+k\right) \left( \ln \left( m+k\right) \right)
^{2+\varepsilon }+\frac{\left( \ln m\right) ^{6+\varepsilon }\left( \ln
k\right) ^{2+\varepsilon }k^{2}}{m^{2}}+\left( \ln k\right) ^{2+\varepsilon
}\left( 1+k\frac{\left( \ln m\right) ^{3+\varepsilon }}{m}\right) ^{2}.
\end{equation*}%
Hence, (\ref{zero-detrend}) follows from (\ref{g-1}). 
\end{proof}

\newpage
\topmargin1.0cm
\textwidth14.25cm
\textheight20.0cm
\oddsidemargin1cm
\evensidemargin1cm
\section{Tables and Figures \label{tables}}
\FloatBarrier
\begin{table}[H]
\centering
\caption{Empirical rejection frequencies under $H_{0}$}
\label{tab:Table1}
{\tiny 
\begin{tabular}{llllllllll}
&  &  &  &  &  &  &  &  &  \\[-6pt] \hline
&  &  &  &  &  &  &  &  &  \\ 
&  & \multicolumn{2}{c}{$T=100$} &  & \multicolumn{2}{c}{$T=200$} &  & 
\multicolumn{2}{c}{$T=400$} \\[2pt] 
&  & $m=25$ & $m=50$ &  & $m=50$ & $m=100$ &  & $m=100$ & $m=200$ \\%
[2pt] \cline{3-4}\cline{6-7}\cline{9-10}
&  &  &  &  &  &  &  &  &  \\[-6pt] 
\multicolumn{2}{l}{Panel A: no serial dependence: $\rho ^{(e)}=0$} &  &  & 
&  &  &  &  &  \\[2pt] 
\multicolumn{2}{l}{\textit{No endogeneity: $\rho ^{(xe)}=0$}} &  &  &  &  & 
&  &  &  \\[2pt] 
$\rho ^{(x)}=0$ & $WW$--$IM$ & 0.057 & 0.056 && 0.064 & 0.040 && 0.049 & 0.057 \\ 
& $\eta =0$ & 0.135 & 0.004 && 0.024 & 0.000 && 0.005 & 0.000 \\ 
& $\eta =0.45$ & 0.154 & 0.046 && 0.051 & 0.049 && 0.047 & 0.040 \\ 
& $\eta =0.49$ & 0.152 & 0.045 && 0.049 & 0.056 && 0.050 & 0.051 \\ 
& $\eta =0.5$ & 0.149 & 0.043 && 0.048 & 0.053 && 0.049 & 0.045 \\[2pt]
$\rho ^{(x)}=0.5$ & $WW$--$IM$ & 0.092 & 0.080 && 0.080 & 0.054 && 0.054 & 0.062 \\ 
& $\eta =0$ & 0.173 & 0.005 &  & 0.026 & 0.000 &  & 0.005 & 0.000 \\ 
& $\eta =0.45$ & 0.188 & 0.048 &  & 0.056 & 0.049 &  & 0.047 & 0.040 \\ 
& $\eta =0.49$ & 0.176 & 0.047 &  & 0.051 & 0.056 &  & 0.050 & 0.051 \\ 
& $\eta =0.5$ & 0.176 & 0.045 &  & 0.049 & 0.053 &  & 0.049 & 0.045 \\[2pt] 
\multicolumn{2}{l}{\textit{Endogeneity: $\rho ^{(xe)}=0.5$}} &  &  &  &  & 
&  &  &  \\ 
$\rho ^{(x)}=0$ & $WW$--$IM$ & 0.054 & 0.052 && 0.058 & 0.036 && 0.045 & 0.058 \\     
& $\eta =0$ & 0.140 & 0.006 &  & 0.029 & 0.000 &  & 0.012 & 0.000 \\ 
& $\eta =0.45$ & 0.157 & 0.048 &  & 0.056 & 0.049 &  & 0.053 & 0.040 \\ 
& $\eta =0.49$ & 0.152 & 0.047 &  & 0.054 & 0.056 &  & 0.056 & 0.051 \\ 
& $\eta =0.5$ & 0.151 & 0.044 &  & 0.054 & 0.053 &  & 0.055 & 0.045 \\[2pt] 
$\rho ^{(x)}=0.5$ & $WW$--$IM$ & 0.077 & 0.067 && 0.070 & 0.042 && 0.048 & 0.057 \\    
& $\eta =0$ & 0.145 & 0.003 &  & 0.023 & 0.000 &  & 0.008 & 0.000 \\ 
& $\eta =0.45$ & 0.164 & 0.045 &  & 0.051 & 0.049 &  & 0.049 & 0.040 \\ 
& $\eta =0.49$ & 0.153 & 0.044 &  & 0.047 & 0.056 &  & 0.051 & 0.051 \\ 
& $\eta =0.5$ & 0.152 & 0.042 &  & 0.047 & 0.053 &  & 0.050 & 0.045 \\[6pt] 
\multicolumn{2}{l}{Panel B: serial dependence: $\rho ^{(e)}=0.5$} &  &  &  & 
&  &  &  &  \\[2pt] 
\multicolumn{2}{l}{\textit{No endogeneity: $\rho ^{(xe)}=0$}} &  &  &  &  & 
&  &  &  \\[2pt] 
$\rho ^{(x)}=0$ & $WW$--$IM$ & 0.118 & 0.121 && 0.108 & 0.082 && 0.077 & 0.080 \\   
& $\eta =0$ & 0.099 & 0.005 &  & 0.020 & 0.000 &  & 0.005 & 0.000 \\ 
& $\eta =0.45$ & 0.119 & 0.047 &  & 0.049 & 0.049 &  & 0.047 & 0.040 \\ 
& $\eta =0.49$ & 0.114 & 0.046 &  & 0.047 & 0.056 &  & 0.050 & 0.051 \\ 
& $\eta =0.5$ & 0.114 & 0.044 &  & 0.046 & 0.053 &  & 0.049 & 0.045 \\[2pt] 
$\rho ^{(x)}=0.5$ & $WW$--$IM$ & 0.139 & 0.125 && 0.113 & 0.080 && 0.071 & 0.075 \\     
& $\eta =0$ & 0.139 & 0.006 &  & 0.025 & 0.000 &  & 0.006 & 0.000 \\ 
& $\eta =0.45$ & 0.151 & 0.049 &  & 0.054 & 0.049 &  & 0.048 & 0.040 \\ 
& $\eta =0.49$ & 0.146 & 0.047 &  & 0.053 & 0.056 &  & 0.051 & 0.051 \\ 
& $\eta =0.5$ & 0.145 & 0.045 &  & 0.052 & 0.053 &  & 0.049 & 0.045 \\[2pt] 
\multicolumn{2}{l}{\textit{Endogeneity: $\rho ^{(xe)}=0.5$}} &  &  &  &  & 
&  &  &  \\ 
$\rho ^{(x)}=0$  & $WW$--$IM$ & 0.108 & 0.129 && 0.115 & 0.090 && 0.091 & 0.088 \\    
& $\eta =0$ & 0.118 & 0.004 &  & 0.023 & 0.000 && 0.010 & 0.000 \\ 
& $\eta =0.45$ & 0.137 & 0.046 &  & 0.052 & 0.049 && 0.050 & 0.040 \\ 
& $\eta =0.49$ & 0.132 & 0.045 &  & 0.048 & 0.056 && 0.052 & 0.051 \\ 
& $\eta =0.5$ & 0.130 & 0.043 &  & 0.047 & 0.053 && 0.051 & 0.045 \\[2pt] 
$\rho ^{(x)}=0.5$ & $WW$--$IM$ & 0.126 & 0.113 && 0.107 & 0.075 && 0.069 & 0.071 \\  
& $\eta =0$ & 0.134 & 0.003 &  & 0.021 & 0.000 &  & 0.006 & 0.000 \\ 
& $\eta =0.45$ & 0.152 & 0.045 &  & 0.048 & 0.049 &  & 0.048 & 0.040 \\ 
& $\eta =0.49$ & 0.147 & 0.044 &  & 0.046 & 0.056 &  & 0.051 & 0.051 \\ 
& $\eta =0.5$ & 0.146 & 0.042 &  & 0.046 & 0.053 &  & 0.049 & 0.045 \\[6pt]
\multicolumn{2}{l}{Panel C: serial dependence: $\rho ^{(e)}=0.9$} &  &  &  & 
&  &  &  &  \\[2pt]  
\multicolumn{2}{l}{\textit{No endogeneity: $\rho ^{(xe)}=0$}} &  &  &  &  & 
&  &  &  \\[2pt] 
$\rho ^{(x)}=0$ & $WW$--$IM$ & 0.280 & 0.269 && 0.222 & 0.205 && 0.173 & 0.148 \\   
& $\eta =0$ & 0.228 & 0.029 && 0.081 & 0.002 && 0.026 & 0.001  \\ 
& $\eta =0.45$ & 0.246 & 0.069 && 0.111 & 0.051 && 0.067 & 0.041  \\ 
& $\eta =0.49$ & 0.238 & 0.068 && 0.106 & 0.058 && 0.070 & 0.052 \\ 
& $\eta =0.5$ & 0.236 & 0.065 && 0.104 & 0.055 && 0.068 & 0.046 \\[2pt] 
$\rho ^{(x)}=0.5$ & $WW$--$IM$ & 0.309 & 0.298 && 0.233 & 0.219 && 0.181 & 0.153  \\     
& $\eta =0$ & 0.293 & 0.038 && 0.111 & 0.006 && 0.032 & 0.001 \\ 
& $\eta =0.45$ & 0.308 & 0.080 && 0.138 &  0.054 && 0.070 & 0.041 \\ 
& $\eta =0.49$ & 0.302 & 0.077 && 0.134 & 0.060 && 0.073 & 0.052 \\ 
& $\eta =0.5$ & 0.298 & 0.074 && 0.132 & 0.057 && 0.072 & 0.046 \\[2pt]
\multicolumn{2}{l}{\textit{Endogeneity: $\rho ^{(xe)}=0.5$}} &  &  &  &  & 
&  &  &  \\ 
$\rho ^{(x)}=0$  & $WW$--$IM$ & 0.284 & 0.297 && 0.233 & 0.227 && 0.190 & 0.163  \\    
& $\eta =0$ & 0.235 & 0.031 && 0.080 & 0.003 && 0.025 & 0.001 \\ 
& $\eta =0.45$ & 0.248 & 0.071 && 0.105 & 0.052 && 0.066 & 0.041 \\ 
& $\eta =0.49$ & 0.233 & 0.070 && 0.104 & 0.059 && 0.068 & 0.052\\ 
& $\eta =0.5$ & 0.229 & 0.066 && 0.103 & 0.056 && 0.066 & 0.046 \\[2pt]

$\rho ^{(x)}=0.5$ & $WW$--$IM$ & 0.288 & 0.293 && 0.232 & 0.230 && 0.190 & 0.160 \\  
& $\eta =0$ &  0.254 & 0.030 && 0.096 & 0.005 && 0.031 & 0.001 \\ 
& $\eta =0.45$ & 0.267 & 0.069 && 0.120 & 0.054 && 0.070 & 0.041 \\ 
& $\eta =0.49$ & 0.258 & 0.068 && 0.116 & 0.060 && 0.073 & 0.052 \\ 
& $\eta =0.5$ & 0.257 & 0.066 && 0.114 & 0.057 && 0.072 & 0.046\\[2pt] \hline
\end{tabular}
} 
\end{table}

\newpage

\topmargin1.0cm \textwidth14.25cm \textheight22.0cm \oddsidemargin-1.2cm %
\evensidemargin-1.2cm 
\begin{landscape}
\begin{table}[h!]
\centering
\caption{Empirical rejection frequencies under $H_{A,1}$.}
{\tiny 
\begin{tabular}{llccccclccccclccccc} \\[-6pt] \hline \\
&  & \multicolumn{5}{c}{$T=100$} &  & \multicolumn{5}{c%
}{$T=200$} &  & 
\multicolumn{5}{c}{$T=400$} \\[2pt]
&  & \multicolumn{2}{c%
}{$\Delta _{\beta }=0.5$} &  & \multicolumn{2}{c}{$%
\Delta _{\beta }=1$} & 
\multicolumn{1}{c}{} & \multicolumn{2}{c}{$\Delta
_{\beta }=0.5$} &  & 
\multicolumn{2}{c}{$\Delta _{\beta }=1$} & 
\multicolumn{1}{c}{} & 
\multicolumn{2}{c}{$\Delta _{\beta }=0.5$} &  & 
\multicolumn{2}{c}{$\Delta
_{\beta }=1$} \\[2pt]
&  & $m=25$ & $m=50$ &  & $m=25$ & $m=50$ & \multicolumn{1%
}{c}{} & $m=50$ & $%
m=100$ &  & $m=50$ & $m=100$ & \multicolumn{1}{c}{} & $%
m=100$ & $m=200$ &  & 
$m=100$ & $m=200$ \\[2pt]
\cline{3-4}\cline{6-7}\cline{%
9-10}\cline{12-13}\cline{15-16}\cline{18-19} \\[-6pt]
\multicolumn{2}{l}{Panel A: no
serial dependence: $\rho ^{(e)}=0$} &  &  & 
&  &  & \multicolumn{1}{c}{} &
&  &  &  &  & \multicolumn{1}{c}{} &  &  & 
&  &  \\[2pt]
\multicolumn{2}{l%
}{\textit{No endogeneity: $\rho ^{(xe)}=0$}} &  &  &  &  & 
& \multicolumn{1%
}{c}{} &  &  &  &  &  & \multicolumn{1}{c}{} &  &  &  &  & 
\\[2pt]
$\rho
^{(x)}=0$ & $WW$--$IM$ & 0.549 & 0.663 && 0.787 & 0.838 & \multicolumn{1}{c}{} & 0.758 & 0.820 && 0.914 & 0.928 & \multicolumn{1}{c}{} & 0.915 & 0.917 && 0.982 & 0.980 \\
& $\eta =0$ & $0.855$ & $0.699$ &  & $0.990$ & $0.918$ & 
\multicolumn{1}{c}{} & $0.944$ & $0.799$ &  & $0.999$ & $0.964$ & 
\multicolumn{1}{c}{} & $0.985$ & $0.872$ &  & $1.000$ & $0.989$ \\ 
& $\eta=0.45$ & $0.848$ & $0.722$ &  & $0.990$ & $0.926$ & 
\multicolumn{1}{c}{} & $%
0.942$ & $0.804$ &  & $0.999$ & $0.971$ & 
\multicolumn{1}{c}{} & $0.984$ & 
$0.882$ &  & $1.000$ & $0.990$ \\ 
& $\eta =0.49$ & $0.843$ & $0.711$ &  & $%
0.987$ & $0.921$ & 
\multicolumn{1}{c}{} & $0.939$ & $0.806$ &  & $0.999$ & 
$0.966$ & 
\multicolumn{1}{c}{} & $0.981$ & $0.879$ &  & $1.000$ & $0.989$
\\ 
& $\eta =0.5$ & $0.842$ & $0.706$ &  & $0.987$ & $0.918$ & 
\multicolumn{1}{c}{} & $0.939$ & $0.805$ &  & $0.999$ & $0.965$ & 
\multicolumn{1}{c}{} & $0.981$ & $0.877$ &  & $1.000$ & $0.989$ \\[2pt]
$\rho ^{(x)}=0.5$ & $WW$--$IM$ & 0.781 & 0.833 && 0.925 & 0.923 & \multicolumn{1}{c}{} & 0.907 & 0.915 && 0.977 & 0.976 & \multicolumn{1}{c}{} & 0.979 & 0.972 && 0.999 & 0.996 \\
& $\eta =0$ & 0.967 & 0.864 && 0.999 & 0.973 & 
\multicolumn{1}{c}{} & 0.996 & 0.929 && 1.000 & 0.997  & 
\multicolumn{1}{c}{} & 1.000 & 0.982 && 1.000 & 1.000 \\ 
& $\eta=0.45$ & 0.965 & 0.873 && 0.999 & 0.974 & 
\multicolumn{1}{c}{} & 0.996 & 0.931 && 1.000 & 0.997 & 
\multicolumn{1}{c}{} & 1.000 & 0.982 && 1.000 & 1.000\\ 
& $\eta =0.49$ & 0.962 & 0.868 && 0.999 & 0.973 & 
\multicolumn{1}{c}{} & 0.995 & 0.931 && 1.000 & 0.997 & 
\multicolumn{1}{c}{} & 1.000 & 0.982 && 1.000 & 1.000\\ 
& $\eta =0.5$ & 0.962 & 0.868 && 0.999 & 0.973 & 
\multicolumn{1}{c}{} & 0.995 & 0.931 && 1.000 & 0.997  & 
\multicolumn{1}{c}{} & 1.000 & 0.982 && 1.000 & 1.000 \\ 
\multicolumn{2}{l}{\textit{Endogeneity: $\rho ^{(xe)}=0.5$}} &  &  &  &  &

& \multicolumn{1}{c}{} &  &  &  &  &  & \multicolumn{1}{c}{} &  &  &  &  &

\\ 
$\rho ^{(x)}=0$ & $WW$--$IM$ & 0.609 & 0.724 && 0.825 & 0.865 & \multicolumn{1}{c}{} & 0.807 & 0.851 && 0.934 & 0.947 & \multicolumn{1}{c}{} & 0.948 & 0.941 && 0.990 & 0.988 \\
& $\eta =0$ & 0.819 & 0.652 && 0.988 & 0.910 & 
\multicolumn{1}{c}{} & 0.925 & 0.784 && 0.999 & 0.971 & 
\multicolumn{1}{c}{} & 0.988 & 0.867 && 1.000 & 0.990 \\ 
& $\eta=0.45$ & 0.812 & 0.679 && 0.987 & 0.917  & 
\multicolumn{1}{c}{} & 0.923 & 0.794 && 0.998 & 0.975 & 
\multicolumn{1}{c}{} & 0.988 & 0.878 && 1.000 & 0.990\\ 
& $\eta =0.49$ & 0.808 & 0.668 && 0.983 & 0.912 & 
\multicolumn{1}{c}{} & 0.922 & 0.793 && 0.998 & 0.972 & 
\multicolumn{1}{c}{} & 0.988 & 0.874 && 1.000 & 0.990\\ 
& $\eta =0.5$ & 0.808 & 0.663 && 0.983 & 0.911 & 
\multicolumn{1}{c}{} & 0.921 & 0.789 && 0.998 & 0.972  & 
\multicolumn{1}{c}{} & 0.985 & 0.872 && 1.000 & 0.990 \\[2pt]
$%
\rho ^{(x)}=0.5$ & $WW$--$IM$ & 0.808 & 0.853 && 0.942 & 0.937 & \multicolumn{1}{c}{} & 0.930 & 0.937 && 0.982 & 0.982 & \multicolumn{1}{c}{} & 0.985 & 0.979 && 0.999 & 0.996\\
& $\eta =0$ & 0.946 & 0.821 && 0.999 & 0.965 & 
\multicolumn{1}{c}{} & 0.989 & 0.908 && 1.000 & 0.991  & 
\multicolumn{1}{c}{} & 0.999 & 0.957 && 1.000 & 1.000 \\ 
& $\eta=0.45$ & 0.944 & 0.831 && 0.999 & 0.966 & 
\multicolumn{1}{c}{} & 0.989 & 0.911 && 1.000 & 0.991  & 
\multicolumn{1}{c}{} & 0.999 & 0.960 && 1.000 & 1.000  \\ 
& $\eta =0.49$ & 0.939 & 0.827 && 0.999 & 0.965 & 
\multicolumn{1}{c}{} & 0.989 & 0.911 && 1.000 & 0.991& 
\multicolumn{1}{c}{} & 0.999 & 0.958 && 1.000 & 1.000 \\ 
& $\eta =0.5$ & 0.938 & 0.824 && 0.999 & 0.965 & 
\multicolumn{1}{c}{} & 0.988 & 0.908 && 1.000 & 0.991 & 
\multicolumn{1}{c}{} & 0.999 & 0.957 && 1.000 & 1.000 \\[6pt] 
\multicolumn{2}{l}{Panel B: serial dependence: $\rho ^{(e)}=0.5$} &  &  &  &

&  & \multicolumn{1}{c}{} &  &  &  &  &  & \multicolumn{1}{c}{} &  &  &  &

&  \\[2pt]
\multicolumn{2}{l}{\textit{No endogeneity: $\rho ^{(xe)}=0$}}
&  &  &  &  & 
& \multicolumn{1}{c}{} &  &  &  &  &  & \multicolumn{1}{c}{}
&  &  &  &  & 
\\[2pt]
$\rho ^{(x)}=0$ & $WW$--$IM$ & 0.526 & 0.598 && 0.762 & 0.810 & \multicolumn{1}{c}{} & 0.700 & 0.748 && 0.874 & 0.881 & \multicolumn{1}{c}{} & 0.868 & 0.868 && 0.967 & 0.949  \\
& $\eta =0$ & 0.753 & 0.547 && 0.959 & 0.831 & 
\multicolumn{1}{c}{} & 0.840 & 0.653 && 0.987 & 0.906 & 
\multicolumn{1}{c}{} & 0.927 & 0.753 && 1.000 & 0.953\\ 
& $\eta =0.45$ & 0.749 & 0.572 && 0.955 & 0.842  & 
\multicolumn{1}{c}{} & 0.840 & 0.671 && 0.987 & 0.910  & 
\multicolumn{1}{c}{} &  0.923 & 0.765 && 1.000 & 0.956\\ 
& $\eta=0.49$ & 0.741 & 0.564 && 0.952 & 0.837 & 
\multicolumn{1}{c}{} & 0.836 & 0.671 && 0.986 & 0.909 & 
\multicolumn{1}{c}{} & 0.924 & 0.763 && 1.000 & 0.956 \\ 
& $\eta =0.5$ & 0.739 & 0.560 && 0.952 & 0.834 & 
\multicolumn{1}{c}{} & 0.833 & 0.671 && 0.985 & 0.908 & 
\multicolumn{1}{c}{} & 0.922 & 0.759 && 1.000 & 0.953\\[2pt]
$\rho ^{(x)}=0.5$ & $WW$--$IM$ & 0.735 & 0.779 && 0.894 & 0.892 & \multicolumn{1}{c}{} & 0.847 & 0.860 && 0.948 & 0.947 & \multicolumn{1}{c}{} & 0.954 & 0.935 && 0.988 & 0.985 \\
& $\eta =0$ & 0.916 & 0.773 && 0.999 & 0.946 & 
\multicolumn{1}{c}{} & 0.972 & 0.858 && 1.000 & 0.982 &
\multicolumn{1}{c}{} & 0.995 & 0.927 && 1.000 & 0.998 \\ 
& $%
\eta =0.45$ & 0.915 & 0.788 && 0.998 & 0.951 & 
\multicolumn{1}{c}{} & 0.971 & 0.863 && 1.000 & 0.984 & 
\multicolumn{1}{c}{} & 0.994 & 0.931 && 1.000 & 0.998 \\ 
& $\eta =0.49$ & 0.911 & 0.780 && 0.997 & 0.947  & 
\multicolumn{1}{c}{} & 0.969 & 0.865 && 1.000 & 0.983  & 
\multicolumn{1}{c}{} & 0.994 & 0.929 && 1.000 & 0.998 \\ 
& $\eta =0.5$ & 0.910 & 0.778 && 0.997 & 0.947  & 
\multicolumn{1}{c}{} & 0.969 & 0.863 && 1.000 & 0.983  & 
\multicolumn{1}{c}{} & 0.994 & 0.928 && 1.000 & 0.998 \\
\multicolumn{2}{l}{\textit{Endogeneity: $\rho ^{(xe)}=0.5$}} &  &  &  &  &

& \multicolumn{1}{c}{} &  &  &  &  &  & \multicolumn{1}{c}{} &  &  &  &  &
\\ 
$\rho ^{(x)}=0$ & $WW$--$IM$ & 0.599 & 0.676 && 0.818 & 0.854 & \multicolumn{1}{c}{} & 0.766 & 0.813 && 0.910 & 0.918 & \multicolumn{1}{c}{} & 0.902 & 0.902 && 0.980 & 0.969  \\
& $\eta =0$ & 0.739 & 0.566 && 0.971 & 0.859 & 
\multicolumn{1}{c}{} & 0.858 & 0.694 && 0.993 & 0.927 & 
\multicolumn{1}{c}{} & 0.953 & 0.782 && 1.000 & 0.967 \\ 
& $\eta=0.45$ & 0.735 & 0.591 && 0.969 & 0.868  & 
\multicolumn{1}{c}{} & 0.859 & 0.712 && 0.993 & 0.931 & 
\multicolumn{1}{c}{} & 0.954 & 0.794 && 1.000 & 0.977 \\ 
& $\eta =0.49$ & 0.727 & 0.584 && 0.966 & 0.864  & 
\multicolumn{1}{c}{} & 0.853 & 0.710 && 0.992 & 0.930 & 
\multicolumn{1}{c}{} & 0.953 & 0.792 && 1.000 & 0.976 \\ 
& $\eta =0.5$ & 0.724 & 0.579 && 0.965 & 0.862  & 
\multicolumn{1}{c}{} & 0.850 & 0.708 && 0.992 & 0.928 & 
\multicolumn{1}{c}{} & 0.952 & 0.789 && 1.000 & 0.976 \\[2pt]
$\rho ^{(x)}=0.5$ & $WW$--$IM$ & 0.784 & 0.825 && 0.920 & 0.909 & \multicolumn{1}{c}{} & 0.880 & 0.885 && 0.962 & 0.962 & \multicolumn{1}{c}{} & 0.969 & 0.950 && 0.996 & 0.991 \\
& $\eta =0$ & 0.921 & 0.780 && 0.999 & 0.951 & 
\multicolumn{1}{c}{} & 0.978 & 0.875 && 1.000 & 0.987 & 
\multicolumn{1}{c}{} & 0.998 & 0.933 && 1.000 & 0.998  \\ 
& $\eta=0.45$ & 0.918 & 0.790 && 0.999 & 0.953  & 
\multicolumn{1}{c}{} & 0.978 & 0.879 && 1.000 & 0.987 & 
\multicolumn{1}{c}{} & 0.998 & 0.937 && 1.000 & 0.998 \\ 
& $\eta =0.49$ & 0.912 & 0.787 && 0.999 & 0.952 & 
\multicolumn{1}{c}{} & 0.974 & 0.877 && 1.000 & 0.987 & 
\multicolumn{1}{c}{} & 0.998 & 0.936 && 1.000 & 0.998\\ 
& $\eta =0.5$ & 0.910 & 0.784 && 0.999 & 0.952  & 
\multicolumn{1}{c}{} & 0.973 & 0.876 && 1.000 & 0.987 & 
\multicolumn{1}{c}{} & 0.998 & 0.935 && 1.000 & 0.998 \\[6pt]
\multicolumn{2}{l}{Panel C: serial dependence: $\rho ^{(e)}=0.9$} &  &  &  &

&  & \multicolumn{1}{c}{} &  &  &  &  &  & \multicolumn{1}{c}{} &  &  &  &

&  \\[2pt]
\multicolumn{2}{l}{\textit{No endogeneity: $\rho ^{(xe)}=0$}}
&  &  &  &  & 
& \multicolumn{1}{c}{} &  &  &  &  &  & \multicolumn{1}{c}{}
&  &  &  &  & 
\\[2pt]
$\rho ^{(x)}=0$ & $WW$--$IM$ & 0.652 & 0.617 && 0.840 & 0.784 & \multicolumn{1}{c}{} & 0.685 & 0.660 && 0.855 & 0.829 & \multicolumn{1}{c}{} & 0.766 & 0.764 && 0.921 & 0.889\\
& $\eta =0$ & 0.845 & 0.595 && 0.980 & 0.850 & 
\multicolumn{1}{c}{} & 0.847 & 0.646 && 0.980 & 0.884 & 
\multicolumn{1}{c}{} & 0.896 & 0.705 && 0.996 & 0.928\\ 
& $\eta =0.45$ & 0.843 & 0.614 && 0.977 & 0.859 & 
\multicolumn{1}{c}{} & 0.846 & 0.665 && 0.978 & 0.887 & 
\multicolumn{1}{c}{} & 0.900 & 0.718 && 0.996 & 0.934 \\ 
& $\eta=0.49$ & 0.838 & 0.610 && 0.976 & 0.854  & 
\multicolumn{1}{c}{} &  0.843 & 0.667 && 0.978 & 0.889 & 
\multicolumn{1}{c}{} &  0.896 & 0.716 && 0.995 & 0.931 \\ 
& $\eta =0.5$ & 0.837 & 0.606 && 0.976 & 0.852 & 
\multicolumn{1}{c}{} & 0.839 & 0.664 && 0.977 & 0.888 & 
\multicolumn{1}{c}{} & 0.895 & 0.713 && 0.995 & 0.929 \\[2pt]
$\rho ^{(x)}=0.5$ & $WW$--$IM$ & 0.824 & 0.770 && 0.942 & 0.892 & \multicolumn{1}{c}{} & 0.834 & 0.801 && 0.935 & 0.911 & \multicolumn{1}{c}{} & 0.900 & 0.872 && 0.975 & 0.947  \\
& $\eta =0$ & 0.957 & 0.803 && 0.998 & 0.949 & 
\multicolumn{1}{c}{} & 0.966 & 0.842 && 0.998 & 0.975  &
\multicolumn{1}{c}{} & 0.989 & 0.892 && 1.000 & 0.990 \\ 
& $%
\eta =0.45$ & 0.954 & 0.816 && 0.998 & 0.949 & 
\multicolumn{1}{c}{} & 0.966 & 0.848 && 0.998 & 0.976 & 
\multicolumn{1}{c}{} & 0.989 & 0.897 && 1.000 & 0.990 \\ 
& $\eta =0.49$ & 0.952 & 0.811 && 0.998 & 0.949 & 
\multicolumn{1}{c}{} & 0.964 & 0.848 && 0.998 & 0.976 & 
\multicolumn{1}{c}{} & 0.988 & 0.896 && 1.000 & 0.990 \\ 
& $\eta =0.5$ & 0.952 & 0.807 && 0.998 & 0.949 & 
\multicolumn{1}{c}{} & 0.964 & 0.846 && 0.998 & 0.974 & 
\multicolumn{1}{c}{} & 0.988 & 0.894 && 1.000 & 0.990 \\ 
\multicolumn{2}{l}{\textit{Endogeneity: $\rho ^{(xe)}=0.5$}} &  &  &  &  &

& \multicolumn{1}{c}{} &  &  &  &  &  & \multicolumn{1}{c}{} &  &  &  &  &
\\ 
$\rho ^{(x)}=0$ & $WW$--$IM$ & 0.714 & 0.682 && 0.882 & 0.836 & \multicolumn{1}{c}{} & 0.741 & 0.734 && 0.886 & 0.869 & \multicolumn{1}{c}{} & 0.824 & 0.814 && 0.947 & 0.919 \\
& $\eta =0$ & 0.877 & 0.655 && 0.988 & 0.895 & 
\multicolumn{1}{c}{} & 0.904 & 0.709 && 0.990 & 0.919 & 
\multicolumn{1}{c}{} & 0.950 & 0.776 && 0.999 & 0.966 \\ 
& $\eta=0.45$ & 0.870 & 0.676 && 0.987 & 0.900 & 
\multicolumn{1}{c}{} & 0.902 & 0.725 && 0.989 & 0.923 & 
\multicolumn{1}{c}{} &  0.951 & 0.787 && 0.999 & 0.968\\ 
& $\eta =0.49$ & 0.865 & 0.670 && 0.987 & 0.897 & 
\multicolumn{1}{c}{} & 0.896 & 0.727 && 0.989 & 0.922 & 
\multicolumn{1}{c}{} & 0.951 & 0.784 && 0.999 & 0.967\\ 
& $\eta =0.5$ & 0.865 & 0.665 && 0.987 & 0.895 & 
\multicolumn{1}{c}{} & 0.894 & 0.726 && 0.988 & 0.920  & 
\multicolumn{1}{c}{} & 0.945 & 0.781 && 0.999 & 0.966 \\[2pt]
$\rho ^{(x)}=0.5$ & $WW$--$IM$ & 0.861 & 0.810 && 0.954 & 0.909 & \multicolumn{1}{c}{} & 0.872 & 0.845 && 0.957 & 0.933 & \multicolumn{1}{c}{} & 0.932 & 0.894 && 0.987 & 0.968\\
& $\eta =0$ & 0.975 & 0.834 && 0.999 & 0.964 & 
\multicolumn{1}{c}{} & 0.979 & 0.879 && 1.000 & 0.987  & 
\multicolumn{1}{c}{} & 0.997 & 0.927 && 1.000 & 0.997 \\ 
& $\eta=0.45$ & 0.972 & 0.844 && 0.999 & 0.965 & 
\multicolumn{1}{c}{} & 0.980 & 0.883 && 1.000 & 0.987 & 
\multicolumn{1}{c}{} & 0.996 & 0.930 && 1.000 & 0.997 \\ 
& $\eta =0.49$ & 0.969 & 0.839 && 0.999 & 0.965  & 
\multicolumn{1}{c}{} & 0.980 & 0.883 && 1.000 & 0.987  & 
\multicolumn{1}{c}{} & 0.996 & 0.929 && 1.000 & 0.997\\ 
& $\eta =0.5$ & 0.968 & 0.837 && 0.999 & 0.964 & 
\multicolumn{1}{c}{} & 0.980 & 0.881 && 1.000 & 0.987 & 
\multicolumn{1}{c}{} & 0.996 & 0.928 && 1.000 & 0.997 \\[2pt]\hline
\end{tabular}
}
\label{tab:Table2a}
\end{table}
\end{landscape}

\topmargin1.0cm 
\textwidth14.25cm 
\textheight22.0cm 
\oddsidemargin-1.2cm
\evensidemargin-1.2cm 
\begin{landscape}
\begin{table}[h!]
\centering
\caption{Detection delay under $H_{A,1}$.}
{\tiny 
\begin{tabular}{llccccclccccclccccc} \\[-6pt] \hline \\

&  & \multicolumn{5}{c}{$T=100$} &  & \multicolumn{5}{c}{$T=200$} &  & 
\multicolumn{5}{c}{$T=400$} \\[2pt]
&  & \multicolumn{2}{c}{$\Delta _{\beta
}=0.5$} &  & \multicolumn{2}{c}{$%
\Delta _{\beta }=1$} & \multicolumn{1}{c}{%
} & \multicolumn{2}{c}{$\Delta
_{\beta }=0.5$} &  & \multicolumn{2}{c}{$%
\Delta _{\beta }=1$} & 
\multicolumn{1}{c}{} & \multicolumn{2}{c}{$\Delta
_{\beta }=0.5$} &  & 
\multicolumn{2}{c}{$\Delta _{\beta }=1$} \\[2pt] 
&  & $%
m=25$ & $m=50$ &  & $m=25$ & $m=50$ & \multicolumn{1}{c}{} & $m=50$ & $%
m=100$ &  & $m=50$ & $m=100$ & \multicolumn{1}{c}{} & $m=100$ & $m=200$ &
& 
$m=100$ & $m=200$ \\[2pt]
\cline{3-4}\cline{6-7}\cline{9-10}\cline{12-13}%
\cline{15-16}\cline{18-19} \\[-6pt] 
\multicolumn{2}{l}{Panel A: no serial
dependence: $\rho ^{(e)}=0$} &  &  & 
&  &  & \multicolumn{1}{c}{} &  &  & 
&  &  & \multicolumn{1}{c}{} &  &  & 
&  &  \\[2pt]
\multicolumn{2}{l}{%
\textit{No endogeneity: $\rho ^{(xe)}=0$}} &  &  &  &  & 
& \multicolumn{1}{%
c}{} &  &  &  &  &  & \multicolumn{1}{c}{} &  &  &  &  & 
\\[2pt]
$\rho
^{(x)}=0$ & $WW$--$IM$ & 0.524 & 0.184 && 0.405 & 0.137 & \multicolumn{1}{c}{} & 0.406 & 0.136 && 0.300 & 0.102 & \multicolumn{1}{c}{} & 0.301 & 0.102 && 0.208 & 0.074 \\
& $\eta =0$ & 0.244 & 0.127 && 0.154 & 0.088 & 
\multicolumn{1}{c}{} & 0.213 & 0.098 && 0.107 & 0.065 & 
\multicolumn{1}{c}{} &  0.157 & 0.083 && 0.066 & 0.049 \\ 
& $\eta=0.45$ & 0.216 & 0.099 && 0.137 & 0.067 & 
\multicolumn{1}{c}{} & 0.190 & 0.070 && 0.089 & 0.046
 & 
\multicolumn{1}{c}{} & 0.128 & 0.066 && 0.043 & 0.034 \\ 
& $\eta =0.49$ & 0.222 & 0.100 && 0.141& 0.068 & 
\multicolumn{1}{c}{} & 0.193 & 0.067 && 0.092 & 0.042 & 
\multicolumn{1}{c}{} & 0.127 & 0.061 && 0.042 & 0.032
\\ 
& $\eta =0.5$ & 0.224 & 0.101 && 0.143 & 0.069 & 
\multicolumn{1}{c}{} & 0.197 & 0.069 && 0.094 & 0.043 & 
\multicolumn{1}{c}{} & 0.129 & 0.063 && 0.043 & 0.032 \\[2pt]
$%
\rho ^{(x)}=0.5$ & $WW$--$IM$ & 0.392 & 0.136 && 0.292 & 0.099 & \multicolumn{1}{c}{} & 0.307 & 0.103 && 0.212 & 0.074 & \multicolumn{1}{c}{} & 0.214 & 0.077 && 0.138 & 0.052 \\
& $\eta =0$ & 0.172 & 0.092 && 0.090 & 0.059 & 
\multicolumn{1}{c}{} & 0.129 & 0.069 && 0.061 & 0.040 & 
\multicolumn{1}{c}{} & 0.082 & 0.058 && 0.035 & 0.027 \\ 
& $\eta=0.45$ & 0.152 & 0.069 && 0.072 & 0.040 & 
\multicolumn{1}{c}{} & 0.110 & 0.048 && 0.045 & 0.023 & 
\multicolumn{1}{c}{} & 0.057 & 0.042 && 0.013 & 0.013\\ 
& $\eta =0.49$ & 0.157 & 0.070 && 0.077 & 0.041  & 
\multicolumn{1}{c}{} & 0.113 & 0.044 && 0.046 & 0.020 & 
\multicolumn{1}{c}{} & 0.057 & 0.039 && 0.012 & 0.009\\ 
& $\eta =0.5$ & 0.160 & 0.072 && 0.078 & 0.043 & 
\multicolumn{1}{c}{} & 0.115 & 0.046 && 0.048 & 0.021 & 
\multicolumn{1}{c}{} & 0.058 & 0.041 && 0.013 & 0.011 \\ 
\multicolumn{2}{l}{\textit{Endogeneity: $\rho ^{(xe)}=0.5$}} &  &  &  &  &

& \multicolumn{1}{c}{} &  &  &  &  &  & \multicolumn{1}{c}{} &  &  &  &  &\\ 
$\rho ^{(x)}=0$ & $WW$--$IM$ & 0.501 & 0.173 && 0.376 & 0.126 & \multicolumn{1}{c}{} & 0.382 & 0.125 && 0.270 & 0.092 && 0.279 & 0.094 && 0.183 & 0.066\\
& $\eta =0$ & 0.259 & 0.130 && 0.163 & 0.090 & 
\multicolumn{1}{c}{} & 0.222 & 0.100 && 0.108 & 0.068 & 
\multicolumn{1}{c}{} & 0.166 & 0.084 && 0.067 & 0.049 \\ 
& $\eta=0.45$ & 0.229 & 0.101 && 0.144 & 0.068 & 
\multicolumn{1}{c}{} & 0.198 & 0.073 && 0.089 & 0.048  & 
\multicolumn{1}{c}{} & 0.138 & 0.067 && 0.043 & 0.034 \\ 
& $\eta =0.49$ & 0.237 & 0.101 && 0.148 & 0.070 & 
\multicolumn{1}{c}{} & 0.205 & 0.068 && 0.093 & 0.044 & 
\multicolumn{1}{c}{} & 0.140 & 0.061 && 0.043 & 0.032
\\ 
& $\eta =0.5$ & 0.241 & 0.102 && 0.150 & 0.071  & 
\multicolumn{1}{c}{} & 0.207 & 0.069 && 0.094 & 0.046 & 
\multicolumn{1}{c}{} & 0.140 & 0.063 && 0.044 & 0.032 \\[2pt]
$%
\rho ^{(x)}=0.5$ & $WW$--$IM$ & 0.378 & 0.130 && 0.282 & 0.094 & \multicolumn{1}{c}{} & 0.289 & 0.097 && 0.197 & 0.068 & \multicolumn{1}{c}{} & 0.196 & 0.070 && 0.126 & 0.046 \\
& $\eta =0$ & 0.202 & 0.103 && 0.105 & 0.066 &   
\multicolumn{1}{c}{} & 0.151 & 0.078 && 0.070 & 0.046 &
\multicolumn{1}{c}{} & 0.098 & 0.063 && 0.042 & 0.032 \\ 
& $\eta=0.45$ & 0.182 & 0.078 && 0.088 & 0.046  &  
\multicolumn{1}{c}{} & 0.131 & 0.056 && 0.053 & 0.028 & 
\multicolumn{1}{c}{} & 0.073 & 0.047 && 0.019 & 0.018 \\ 
& $\eta =0.49$ & 0.186 & 0.080 && 0.093 & 0.048 & 
\multicolumn{1}{c}{} & 0.136 & 0.052 && 0.055 & 0.025 & 
\multicolumn{1}{c}{} & 0.073 & 0.042 && 0.018 & 0.014
\\ 
& $\eta =0.5$ & 0.187 & 0.081 && 0.094 & 0.049 & 
\multicolumn{1}{c}{} & 0.137 & 0.053 && 0.057 & 0.026  & 
\multicolumn{1}{c}{} & 0.075 & 0.044 && 0.019 & 0.016  \\[6pt] 
\multicolumn{2}{l}{Panel B: serial dependence: $\rho ^{(e)}=0.5$} &  &  &  &

&  & \multicolumn{1}{c}{} &  &  &  &  &  & \multicolumn{1}{c}{} &  &  &  &

&  \\[2pt]
\multicolumn{2}{l}{\textit{No endogeneity: $\rho ^{(xe)}=0$}}
&  &  &  &  & 
& \multicolumn{1}{c}{} &  &  &  &  &  & \multicolumn{1}{c}{}
&  &  &  &  & 
\\[2pt]
$\rho ^{(x)}=0$ & $WW$--$IM$ &  0.498 & 0.180 && 0.402 & 0.144 & \multicolumn{1}{c}{} & 0.438 & 0.148 && 0.328 & 0.112 & \multicolumn{1}{c}{} & 0.348 & 0.119 && 0.247 & 0.087 \\
& $\eta =0$ & 0.284 & 0.140 && 0.196 & 0.100 & 
\multicolumn{1}{c}{} & 0.264 & 0.109 && 0.150 & 0.080 & 
\multicolumn{1}{c}{} & 0.219 & 0.096 && 0.101 & 0.064\\ 
& $\eta =0.45$ & 0.253 & 0.105 && 0.176 & 0.077 & 
\multicolumn{1}{c}{} & 0.237 & 0.077 && 0.131 & 0.057 & 
\multicolumn{1}{c}{} & 0.182 & 0.075 && 0.077 & 0.048\\ 
& $\eta=0.49$ & 0.258 & 0.105 && 0.182 & 0.078 & 
\multicolumn{1}{c}{} & 0.241 & 0.072 && 0.135 & 0.053 & 
\multicolumn{1}{c}{} & 0.186 & 0.068 && 0.076 & 0.044 \\ 
& $\eta =0.5$ & 0.260 & 0.107 && 0.184 & 0.079  & 
\multicolumn{1}{c}{} & 0.243 & 0.074 && 0.136 & 0.055  & 
\multicolumn{1}{c}{} & 0.188 & 0.070 && 0.078 & 0.046 \\[2pt]

$\rho ^{(x)}=0.5$ & $WW$--$IM$ & 0.406 & 0.152 && 0.303 & 0.111 & \multicolumn{1}{c}{} & 0.351 & 0.119 && 0.245 & 0.087 & \multicolumn{1}{c}{} & 0.270 & 0.093 && 0.175 & 0.066 \\
& $\eta =0$ & 0.206 & 0.106 && 0.119 & 0.073 & 
\multicolumn{1}{c}{} & 0.175 & 0.085 && 0.085 & 0.054 &
\multicolumn{1}{c}{} & 0.122 & 0.072 && 0.052 & 0.039 \\ 
& $%
\eta =0.45$ & 0.186 & 0.081 && 0.102 & 0.054 & 
\multicolumn{1}{c}{} & 0.154 & 0.061 && 0.068 & 0.036 & 
\multicolumn{1}{c}{} & 0.095 & 0.055 && 0.029 & 0.025 \\ 
& $\eta =0.49$ & 0.191 & 0.082 && 0.106 & 0.054  & 
\multicolumn{1}{c}{} & 0.157 & 0.057 && 0.070 & 0.032  & 
\multicolumn{1}{c}{} & 0.096 & 0.051 && 0.028 & 0.021 \\ 
& $\eta =0.5$ & 0.192 & 0.083 && 0.107 & 0.056  & 
\multicolumn{1}{c}{} & 0.160 & 0.059 && 0.071 & 0.034  & 
\multicolumn{1}{c}{} & 0.097 & 0.053 && 0.029 & 0.023 \\ 
\multicolumn{2}{l}{\textit{Endogeneity: $\rho ^{(xe)}=0.5$}} &  &  &  &  &

& \multicolumn{1}{c}{} &  &  &  &  &  & \multicolumn{1}{c}{} &  &  &  &  &

\\ 
$\rho ^{(x)}=0$ & $WW$--$IM$ & 0.474 & 0.165 && 0.374 & 0.127 & \multicolumn{1}{c}{} & 0.391 & 0.134 && 0.289 & 0.099 & \multicolumn{1}{c}{} & 0.300 & 0.106 && 0.211 & 0.077  \\
& $\eta =0$ & 0.278 & 0.136 && 0.194 & 0.098 & 
\multicolumn{1}{c}{} & 0.250 & 0.107 && 0.136 & 0.075 & 
\multicolumn{1}{c}{} & 0.206 & 0.090 && 0.088 & 0.061 \\ 
& $\eta=0.45$ & 0.247 & 0.101 && 0.176 & 0.075 & 
\multicolumn{1}{c}{} & 0.226 & 0.077 && 0.117 & 0.054 & 
\multicolumn{1}{c}{} & 0.175 & 0.069 && 0.063 & 0.045 \\ 
& $\eta =0.49$ & 0.254 & 0.102 && 0.182 & 0.076 & 
\multicolumn{1}{c}{} & 0.229 & 0.072 && 0.121 & 0.050 & 
\multicolumn{1}{c}{} &0.176 & 0.064 && 0.063 & 0.040 \\ 
& $\eta =0.5$ & 0.254 & 0.104 && 0.183 & 0.077 & 
\multicolumn{1}{c}{} & 0.230 & 0.073 && 0.234 & 0.051 & 
\multicolumn{1}{c}{} & 0.178 & 0.066 && 0.064 & 0.043 \\[2pt]
$%
\rho ^{(x)}=0.5$ & $WW$--$IM$ & 0.390 & 0.145 && 0.284 & 0.102 & \multicolumn{1}{c}{} & 0.322 & 0.110 && 0.223 & 0.080 & \multicolumn{1}{c}{} & 0.241 & 0.084 && 0.158 & 0.060 \\
& $\eta =0$ & 0.217 & 0.107 && 0.116 & 0.071 & 
\multicolumn{1}{c}{} & 0.175 & 0.086 && 0.081 & 0.053  & 
\multicolumn{1}{c}{} & 0.119 & 0.071 && 0.049 & 0.038\\ 
& $\eta=0.45$ & 0.196 & 0.081 && 0.100 & 0.051  & 
\multicolumn{1}{c}{} & 0.155 & 0.062 && 0.064 & 0.034 & 
\multicolumn{1}{c}{} & 0.092 & 0.054 && 0.027 & 0.023 \\ 
& $\eta =0.49$ & 0.199 & 0.083 && 0.105 & 0.053 & 
\multicolumn{1}{c}{} & 0.157 & 0.058 && 0.066 & 0.031 & 
\multicolumn{1}{c}{} & 0.093 & 0.049 && 0.025 & 0.019\\ 
& $\eta =0.5$ & 0.199 & 0.084 && 0.107 & 0.054 & 
\multicolumn{1}{c}{} & 0.159 & 0.059 && 0.068 & 0.033  & 
\multicolumn{1}{c}{} & 0.095 & 0.052 && 0.027 & 0.021 \\[6pt]
\multicolumn{2}{l}{Panel C: serial dependence: $\rho ^{(e)}=0.9$} &  &  &  &

&  & \multicolumn{1}{c}{} &  &  &  &  &  & \multicolumn{1}{c}{} &  &  &  &

&  \\[2pt]
\multicolumn{2}{l}{\textit{No endogeneity: $\rho ^{(xe)}=0$}}
&  &  &  &  & 
& \multicolumn{1}{c}{} &  &  &  &  &  & \multicolumn{1}{c}{}
&  &  &  &  & 
\\[2pt]
$\rho ^{(x)}=0$ & $WW$--$IM$ & 0.376 & 0.143 && 0.315 & 0.119 & \multicolumn{1}{c}{} & 0.412 & 0.148 && 0.316 & 0.117 & \multicolumn{1}{c}{} & 0.382 & 0.140 & \multicolumn{1}{c}{} & 0.285 & 0.106   \\
& $\eta =0$ & 0.191 & 0.123 && 0.122 & 0.090 & 
\multicolumn{1}{c}{} & 0.222 & 0.112 && 0.128 & 0.078  & 
\multicolumn{1}{c}{} & 0.213 & 0.101 && 0.106 & 0.068
\\ 
& $\eta =0.45$ & 0.167 & 0.089 && 0.103 & 0.067 & 
\multicolumn{1}{c}{} & 0.193 & 0.079 && 0.106 & 0.055 & 
\multicolumn{1}{c}{} & 0.180 & 0.101 && 0.080 & 0.052\\ 
& $\eta=0.49$ & 0.174 & 0.091 && 0.109 & 0.068 & 
\multicolumn{1}{c}{} & 0.199 & 0.074 && 0.110 & 0.051  & 
\multicolumn{1}{c}{} & 0.178 & 0.071 && 0.079 & 0.047 \\ 
& $\eta =0.5$ & 0.175 & 0.093 && 0.111 & 0.070  & 
\multicolumn{1}{c}{} & 0.199 & 0.076 && 0.112 & 0.053 & 
\multicolumn{1}{c}{} & 0.180 & 0.074 && 0.081 & 0.049 \\[2pt]
$\rho ^{(x)}=0.5$ & $WW$--$IM$ & 0.320 & 0.122 && 0.237 & 0.093 & \multicolumn{1}{c}{} & 0.331 & 0.119 && 0.236 & 0.090 & \multicolumn{1}{c}{} & 0.301 & 0.112 && 0.208 & 0.082 \\
& $\eta =0$ & 0.124 & 0.094 && 0.060 & 0.065 & 
\multicolumn{1}{c}{} & 0.143 & 0.085 && 0.068 & 0.055 &
\multicolumn{1}{c}{} & 0.125 & 0.076 && 0.053 & 0.044 \\ 
& $%
\eta =0.45$ & 0.103 & 0.070 && 0.043 & 0.044  & 
\multicolumn{1}{c%
}{} & 0.123 & 0.060 && 0.050 & 0.036  & 
\multicolumn{1}{c}{} & 0.098 & 0.058 && 0.030 & 0.029 \\ 
& $\eta =0.49$ & 0.108 & 0.071 && 0.047 & 0.046  & 
\multicolumn{1}{c}{} & 0.126 & 0.057 && 0.052 & 0.033  & 
\multicolumn{1}{c}{} & 0.098 & 0.054 && 0.029 & 0.025 \\ 
& $\eta =0.5$ & 0.110 & 0.072 && 0.049 & 0.048  & 
\multicolumn{1}{c}{} & 0.129 & 0.058 && 0.054 & 0.034 & 
\multicolumn{1}{c}{} & 0.100 & 0.056 && 0.030 & 0.027 \\ 
\multicolumn{2}{l}{\textit{Endogeneity: $\rho ^{(xe)}=0.5$}} &  &  &  &  &

& \multicolumn{1}{c}{} &  &  &  &  &  & \multicolumn{1}{c}{} &  &  &  &  &

\\ 
$\rho ^{(x)}=0$ & $WW$--$IM$ & 0.353 & 0.127 && 0.283 & 0.106 & \multicolumn{1}{c}{} & 0.364 & 0.133 && 0.275 & 0.103 & \multicolumn{1}{c}{} & 0.341 & 0.126 && 0.248 & 0.094  \\
& $\eta =0$ & 0.193 & 0.116 && 0.109 & 0.084 & 
\multicolumn{1}{c}{} & 0.212 & 0.101 && 0.109 & 0.070  & 
\multicolumn{1}{c}{} & 0.191 & 0.092 && 0.084 & 0.062 \\ 
& $\eta=0.45$ & 0.166 & 0.087 && 0.091 & 0.063 & 
\multicolumn{1}{c}{} & 0.187 & 0.071 && 0.089 & 0.049 & 
\multicolumn{1}{c}{} &  0.160 & 0.072 && 0.060 & 0.047\\ 
& $\eta =0.49$ & 0.171 & 0.088 && 0.097 & 0.064 & 
\multicolumn{1}{c}{} & 0.189 & 0.068 && 0.093 & 0.045 & 
\multicolumn{1}{c}{} & 0.162 & 0.065 && 0.059 & 0.042\\ 
& $\eta =0.5$ & 0.174 & 0.089 && 0.098 & 0.065 & 
\multicolumn{1}{c}{} & 0.191 & 0.070 && 0.094 & 0.046 & 
\multicolumn{1}{c}{} & 0.160 & 0.068 && 0.060 & 0.044 \\[2pt]
$%
\rho ^{(x)}=0.5$ & $WW$--$IM$ & 0.308 & 0.115 && 0.219 & 0.084 & \multicolumn{1}{c}{} & 0.303 & 0.110 &&  0.213 & 0.081 & \multicolumn{1}{c}{} & 0.268 & 0.099 && 0.184 & 0.073\\
& $\eta =0$ & 0.132 & 0.088 && 0.060 & 0.058 & 
\multicolumn{1}{c}{} & 0.129 & 0.078 && 0.060 & 0.048 & 
\multicolumn{1}{c}{} & 0.105 & 0.067 && 0.044 & 0.037\\ 
& $\eta=0.45$ & 0.113 & 0.065 && 0.043 & 0.039 & 
\multicolumn{1}{c}{} & 0.110 & 0.054 && 0.043 & 0.030 & 
\multicolumn{1}{c}{} & 0.078 & 0.050 && 0.021 & 0.023 \\ 
& $\eta =0.49$ & 0.117 & 0.065 && 0.047 & 0.041 & 
\multicolumn{1}{c}{} & 0.114 & 0.051 && 0.045 & 0.026 & 
\multicolumn{1}{c}{} & 0.079 & 0.046 && 0.020 & 0.021\\ 
& $\eta =0.5$ & 0.119 & 0.067 && 0.048 & 0.042 & 
\multicolumn{1}{c}{} & 0.117 & 0.052 && 0.046 & 0.028 & 
\multicolumn{1}{c}{} & 0.081 & 0.048 && 0.021 & 0.021\\[2pt]\hline
\end{tabular}
}
\label{tab:Table2b}
\end{table}
\end{landscape}

\topmargin1.0cm
\textwidth14.25cm
\textheight20.0cm
\oddsidemargin1cm
\evensidemargin1cm

\begin{table}[H]
\centering
\caption{Empirical rejection frequencies under $H_{A,2}$.}
{\tiny
\begin{tabular}{llllllllll}
&  &  &  &  &  &  &  &  & 
\\[-6pt] \hline \\
&  & \multicolumn{2}{c}{$T=100$} &  & \multicolumn{2}{c}{$%
T=200$} &  & 
\multicolumn{2}{c}{$T=400$} \\[2pt]
&  & $m=25$ & $m=50$ & 
& $m=50$ & $m=100$ &  & $m=100$ & $m=200$ \\%
[2pt]\cline{3-4}\cline{6-7}%
\cline{9-10}
&  &  &  &  &  &  &  &  &  \\[-6pt] 
\multicolumn{2}{l}{Panel
A: no serial dependence: $\rho ^{(e)}=0$} &  &  & 
&  &  &  &  &  \\[2pt]
\multicolumn{2}{l}{\textit{No endogeneity: $\rho ^{(xe)}=0$}} &  &  &  &  &

&  &  &  \\[2pt]
$\rho ^{(x)}=0$  & $WW$--$IM$ & 0.556 & 0.445 && 0.787 & 0.699 && 0.925 & 0.907 \\
& $\eta =0$ & $0.946$ & $0.564$ &  & $0.995$ & $0.774$ &  & $0.999$ & $0.948$ \\ 
& $\eta =0.45$ & $0.940$ & $0.589$ &  & $0.994$ & $0.793$ &  & $0.999$ & $0.950$ \\ 
& $\eta =0.49$ & $0.933$ & $0.581$ &  & $0.992$ & $0.785$ &  & $0.999$ & $0.949$ \\ 
& $\eta =0.5$ & $0.931$ & $0.574$ &  & $0.992$ & $0.780$ &  & $0.999$ & $0.947$ \\[2pt]
$\rho ^{(x)}=0.5$ & $WW$--$IM$ & 0.637 & 0.510 && 0.820 & 0.728 && 0.936 & 0.918 \\
& $\eta =0$ & 0.948 & 0.569 && 0.994 & 0.773 && 0.999 & 0.947 \\ 
& $\eta =0.45$ & 0.941 & 0.591 && 0.993 & 0.790 && 0.999 & 0.950 \\ 
& $\eta =0.49$ & 0.934 & 0.584 && 0.993 & 0.784 && 0.999 & 0.949  \\ 
& $\eta =0.5$ & 0.934 & 0.580 && 0.993 & 0.780 && 0.999 & 0.947 \\ 
\multicolumn{2}{l}{\textit{Endogeneity: $\rho ^{(xe)}=0.5$}}
&  &  &  &  & 
&  &  &  \\ 
$\rho ^{(x)}=0$ & $WW$--$IM$ & 0.535 & 0.430 && 0.784 & 0.697 && 0.923 & 0.906 \\
& $\eta =0$ & 0.875 & 0.452 && 0.971 & 0.651 && 0.998 & 0.875 \\ 
& $\eta =0.45$ & 0.873 & 0.483 && 0.970 & 0.674 && 0.996 & 0.888 \\ 
& $\eta=0.49$ & 0.869 & 0.476 && 0.969 & 0.666 && 0.996 & 0.876 \\ 
& $\eta =0.5$ & 0.865 & 0.469 && 0.968 & 0.660 && 0.996 & 0.876 \\[2pt]
$\rho ^{(x)}=0.5$ & $WW$--$IM$ & 0.591 & 0.460 && 0.791 & 0.700 && 0.927 & 0.910  \\
& $\eta =0$ & 0.833 & 0.378 && 0.957 & 0.573 && 0.994 & 0.792 \\ 
& $\eta =0.45$ & 0.832 & 0.407 && 0.946 & 0.601 && 0.993 & 0.807\\ 
& $\eta =0.49$ & 0.826 & 0.400 && 0.940 & 0.592 && 0.993 & 0.799 \\ 
& $\eta =0.5$ & 0.824 & 0.393 && 0.937 & 0.589 && 0.993 & 0.794 \\[6pt] 
\multicolumn{2}{l}{Panel B: serial dependence: $\rho
^{(e)}=0.5$} &  &  &  & 
&  &  &  &  \\[2pt]
\multicolumn{2}{l}{\textit{No
endogeneity: $\rho ^{(xe)}=0$}} &  &  &  &  & 
&  &  &  \\[2pt]
$\rho^{(x)}=0$ & $WW$--$IM$ & 0.519 & 0.371 && 0.675 & 0.555 && 0.851 & 0.782\\
& $\eta =0$ & 0.793 & 0.321 && 0.911 & 0.480 && 0.980 & 0.696 \\ 
& $\eta =0.45$ & 0.788 & 0.351 &&  0.908 & 0.511 && 0.978 & 0.713  \\ 
& $\eta =0.49$ & 0.780 &  0.343 && 0.902 & 0.504 && 0.977 & 0.710  \\ 
& $\eta =0.5$ & 0.775 & 0.338 && 0.901 & 0.497 && 0.976 & 0.699  \\[2pt]
$\rho^{(x)}=0.5$ & $WW$--$IM$ & 0.548 & 0.370 && 0.669 & 0.548 && 0.845 &  0.778 \\
& $\eta =0$ & 0.808 & 0.324 && 0.914 & 0.482 && 0.980 & 0.695  \\ 
& $\eta =0.45$ & 0.802 & 0.358 && 0.911 & 0.510 && 0.978 & 0.713 \\ 
& $\eta =0.49$ & 0.792 & 0.350 && 0.905 & 0.505 && 0.978 & 0.710 \\ 
& $\eta =0.5$ & 0.790 & 0.341 && 0.902 & 0.502 && 0.977 & 0.699 \\  
\multicolumn{2%
}{l}{\textit{Endogeneity: $\rho ^{(xe)}=0.5$}} &  &  &  &  & 
&  &  &  \\ 
$\rho ^{(x)}=0$ & $WW$--$IM$ & 0.503 & 0.389 && 0.692 & 0.573 && 0.861 & 0.797 \\
& $\eta =0$ & 0.742 & 0.258 && 0.883 & 0.408 && 0.970 & 0.624  \\ 
& $\eta =0.45$ & 0.740 & 0.288 && 0.877 & 0.447 && 0.967 & 0.646  \\ 
& $\eta =0.49$ & 0.730 & 0.282 && 0.875 & 0.437 && 0.964 & 0.643 \\ 
& $\eta =0.5$ & 0.727 & 0.280 && 0.871 & 0.432 && 0.963 & 0.631 \\[2pt]
$\rho^{(x)}=0.5$ & $WW$--$IM$ & 0.524 & 0.348 && 0.654 & 0.533 && 0.839 & 0.773  \\
& $\eta =0$ & 0.709 & 0.222 && 0.852 & 0.377 && 0.957 & 0.566  \\ 
& $\eta =0.45$ & 0.710 & 0.259 && 0.850 & 0.411 && 0.954 & 0.594  \\ 
& $\eta =0.49$ & 0.703 & 0.252 && 0.841 & 0.407 && 0.950 & 0.588 \\ 
& $\eta =0.5$ & 0.700 & 0.248 && 0.841 & 0.403 && 0.950 & 0.576 \\[6pt]
\multicolumn{2}{l}{Panel C: serial dependence: $\rho
^{(e)}=0.9$} &  &  &  & 
&  &  &  &  \\[2pt]
\multicolumn{2}{l}{\textit{No
endogeneity: $\rho ^{(xe)}=0$}} &  &  &  &  & 
&  &  &  \\[2pt]
$\rho^{(x)}=0$ & $WW$--$IM$ & 0.478 & 0.339 && 0.479 & 0.292 && 0.534 &  0.404\\
& $\eta =0$ & 0.602 & 0.113 && 0.609 & 0.097 && 0.678 & 0.186 \\ 
& $\eta =0.45$ & 0.607 & 0.151 && 0.604 & 0.144 && 0.685 & 0.217  \\ 
& $\eta =0.49$ & 0.598 & 0.145 && 0.589 & 0.147 && 0.678 & 0.223 \\ 
& $\eta =0.5$ & 0.595 & 0.143 && 0.582 & 0.143 && 0.675 & 0.217 \\[2pt]
$\rho^{(x)}=0.5$ & $WW$--$IM$ & 0.504 & 0.375 && 0.491 & 0.310 && 0.540 &    0.400  \\
& $\eta =0$ & 0.645 & 0.127 && 0.623 & 0.103 && 0.685 & 0.187 \\ 
& $\eta =0.45$ & 0.640 & 0.166 && 0.619 & 0.151 && 0.691 & 0.223  \\ 
& $\eta =0.49$ & 0.623 & 0.161 && 0.600 & 0.152 && 0.690 & 0.227 \\ 
& $\eta =0.5$ & 0.621 & 0.157 && 0.592 & 0.146 && 0.685 & 0.219\\ 
\multicolumn{2%
}{l}{\textit{Endogeneity: $\rho ^{(xe)}=0.5$}} &  &  &  &  & 
&  &  &   \\[2pt]
$\rho^{(x)}=0$ & $WW$--$IM$ & 0.488 & 0.364 && 0.491 & 0.320 && 0.552 & 0.420\\
& $\eta =0$ &  0.597 & 0.105 && 0.583 & 0.085 && 0.666 & 0.182 \\ 
& $\eta =0.45$ & 0.601 & 0.143 && 0.584 & 0.135 && 0.674 & 0.216  \\ 
& $\eta =0.49$ & 0.589 & 0.141 && 0.570 & 0.138 && 0.672 & 0.221  \\ 
& $\eta =0.5$ & 0.588 & 0.137 && 0.560 & 0.132 && 0.665 & 0.212  \\[2pt]
$\rho ^{(x)}=0.5$ & $WW$--$IM$ & 0.485 & 0.368 && 0.493 & 0.319 && 0.548 & 0.410 \\
& $\eta =0$ & 0.600 & 0.108 && 0.585 & 0.088 && 0.660 & 0.172 \\ 
& $\eta =0.45$ & 0.599 & 0.145 && 0.586 & 0.135 && 0.666 & 0.205  \\ 
& $\eta =0.49$ & 0.593 & 0.142 && 0.575 & 0.139 && 0.656 & 0.211 \\ 
& $\eta =0.5$ & 0.592 & 0.139 && 0.573 & 0.135 && 0.651 & 0.203 \\[2pt]\hline
\end{%
tabular}
}
\label{tab:Table3a}
\end{table}

\begin{table}[H]
\centering
\caption{%
Detection delay under $H_{A,2}$.}
\tiny
\begin{tabular}{llllllllll}

&  &  &  &  &  &  &  &  &  \\[-6pt] \hline \\
&
& \multicolumn{2}{c}{$T=100$} &  & \multicolumn{2}{c}{$T=200$} &  & 
\multicolumn{2}{c}{$T=400$} \\[2pt]
&  & $m=25$ & $%
m=50$ &  & $m=50$ & $%
m=100$ & 
& $m=100$ & $m=200$ \\%
[2pt]\cline{3-4}%
\cline{6-7}\cline{9-10}
&  &  &  &  &  &  &  &  &  \\[-6pt] 
\multicolumn{2%
}{l}{Panel A: no serial dependence: $\rho ^{(e)}=0$} &  &  & 
&  &  &  &  &
\\[2pt]
\multicolumn{2}{l}{\textit{No endogeneity: $\rho ^{(xe)}=0$}} &  & 
&  &  & 
&  &  &  \\[2pt]
$\rho ^{(x)}=0$ & $WW$--$IM$ & 0.644 & 0.240 && 0.540 & 0.211 && 0.407 & 0.165 \\
& $\eta =0$ & $0.375$ & $0.215$ &  & $0.304$ & $0.193$ &  & $0.215$ & $0.159$ \\ 
& $\eta =0.45$ & $0.349$ & $0.175$ &  & $0.283$ & $0.163$ &  & $0.186$ & $0.138$ \\ 
& $\eta=0.49$ & $0.356$ & $0.178$ &  & $0.287$ & $0.157$ &  & $0.186$ & $0.133$ \\ 
& $\eta =0.5$ & $0.357$ & $0.179$ &  & $0.291$ & $0.159$ &  & $0.188$ &  $0.137$ \\[2pt]

$\rho ^{(x)}=0.5$ & $WW$--$IM$ & 0.592 & 0.229 && 0.516 & 0.204 && 0.392 & 0.160 \\
& $\eta =0$ & 0.360 & 0.214 && 0.302 & 0.193 && 0.214 & 0.159 \\ 
& $\eta =0.45$ & 0.332 & 0.175 && 0.281 & 0.162 && 0.186 & 0.138  \\ 
& $\eta =0.49$ & 0.339 & 0.177 && 0.287 & 0.157 && 0.186 & 0.133 \\ 
& $\eta =0.5$ & 0.343 & 0.179 && 0.291 & 0.159 && 0.188 & 0.137  \\ 
\multicolumn{2}{l}{\textit{Endogeneity: $\rho ^{(xe)}=0.5$}}
&  &  &  &  & 
&  &  &  \\ 
$\rho ^{(x)}=0$ & $WW$--$IM$ &  0.656 & 0.241 && 0.546 & 0.212 && 0.409 & 0.165 \\
& $\eta =0$ & 0.406 & 0.233 && 0.356 & 0.210 && 0.262 & 0.182 \\ 
& $\eta =0.45$ & 0.378 & 0.185 && 0.334 & 0.172 && 0.231 & 0.160\\ 
& $\eta=0.49$ & 0.388 & 0.186 && 0.341 & 0.166 && 0.231 & 0.153 \\ 
& $\eta =0.5$ & 0.389 & 0.188 && 0.344 & 0.169 && 0.234 & 0.157 \\[2pt]
$\rho ^{(x)}=0.5$ & $WW$--$IM$ & 0.620 & 0.237 && 0.533 & 0.209 && 0.404 & 0.163\\
& $\eta =0$ & 0.418 & 0.243 && 0.399 & 0.222 && 0.300 & 0.194  \\ 
& $\eta =0.45$ & 0.393 & 0.185 && 0.370 & 0.179 && 0.268 & 0.169 \\ 
& $\eta =0.49$ & 0.402 & 0.186 && 0.374 & 0.171 && 0.268 & 0.161 \\ 
& $\eta =0.5$ & 0.404 & 0.188 && 0.376 & 0.175 && 0.272 & 0.165  \\[6pt] 
\multicolumn{2}{l}{Panel B: serial dependence: $\rho
^{(e)}=0.5$} &  &  &  & 
&  &  &  &  \\[2pt]
\multicolumn{2}{l}{\textit{No
endogeneity: $\rho ^{(xe)}=0$}} &  &  &  &  & 
&  &  &  \\[2pt]
$\rho^{(x)}=0$ & $WW$--$IM$ & 0.617 & 0.220 && 0.572 & 0.222 && 0.481 & 0.190 \\
& $\eta =0$ & 0.449 & 0.245 && 0.426 & 0.228 && 0.337 & 0.202 \\ 
& $\eta =0.45$ & 0.416 & 0.177 && 0.400 & 0.177 && 0.304 & 0.174  \\ 
& $\eta =0.49$ & 0.423 & 0.178 && 0.404 & 0.169 && 0.303 & 0.165 \\ 
& $\eta =0.5$ & 0.422 & 0.181 && 0.408 & 0.171 &&  0.306 & 0.168  \\[2pt]
$\rho^{(x)}=0.5$ & $WW$--$IM$ & 0.582 & 0.212 && 0.568 & 0.222 && 0.482 &   0.190  \\
& $\eta =0$ &  0.433 & 0.243 && 0.424 & 0.228 && 0.336 & 0.202  \\ 
& $\eta =0.45$ & 0.400 & 0.178 && 0.399 & 0.177 && 0.302 & 0.173\\ 
& $\eta =0.49$ & 0.406 & 0.179 && 0.403 & 0.169 && 0.302 & 0.165 \\ 
& $\eta =0.5$ & 0.407 & 0.180 && 0.405 & 0.172 && 0.305 & 0.168 \\ 
\multicolumn{2%
}{l}{\textit{Endogeneity: $\rho ^{(xe)}=0.5$}} &  &  &  &  & 
&  &  &  \\ 
$\rho ^{(x)}=0$ & $WW$--$IM$ & 0.617 & 0.214 && 0.562 & 0.215 && 0.465 & 0.186  \\
& $\eta =0$ & 0.466 & 0.247 && 0.453 & 0.235 && 0.366 & 0.209 \\ 
& $\eta =0.45$ & 0.430 & 0.164 && 0.425 & 0.177 && 0.330 & 0.177 \\ 
& $\eta =0.49$ & 0.439 & 0.165 && 0.433 & 0.165 && 0.328 & 0.168 \\ 
& $\eta =0.5$ & 0.440 & 0.169 && 0.434 & 0.168 && 0.332 & 0.171  \\[2pt]
$\rho^{(x)}=0.5$ & $WW$--$IM$ & 0.600 & 0.219 && 0.577 & 0.223 && 0.489 & 0.191    \\
& $\eta =0$ & 0.445 & 0.249 && 0.470 & 0.242 && 0.391 & 0.213   \\ 
& $\eta =0.45$ & 0.413 & 0.159 && 0.441 & 0.176 && 0.354 & 0.179  \\ 
& $\eta =0.49$ & 0.422 & 0.159 && 0.445 & 0.166 && 0.352 & 0.167  \\ 
& $\eta =0.5$ & 0.423 & 0.163 && 0.450 & 0.170 && 0.357 & 0.171  \\[6pt]
\multicolumn{2}{l}{Panel C: serial dependence: $\rho
^{(e)}=0.9$} &  &  &  & 
&  &  &  &  \\[2pt]
\multicolumn{2}{l}{\textit{No
endogeneity: $\rho ^{(xe)}=0$}} &  &  &  &  & 
&  &  &  \\[2pt]
$\rho^{(x)}=0$ & $WW$--$IM$ & 0.461 & 0.121 && 0.525 & 0.156 && 0.566 &  0.204 \\
& $\eta =0$ & 0.357 & 0.196 && 0.495 & 0.256 && 0.519 & 0.246  \\ 
& $\eta =0.45$ & 0.318 & 0.055 && 0.443 & 0.071 && 0.460 & 0.145  \\ 
& $\eta =0.49$ & 0.323 & 0.050 && 0.443 & 0.042 && 0.456 & 0.118 \\ 
& $\eta =0.5$ & 0.324 & 0.056 && 0.445 & 0.047 && 0.460 & 0.129 \\[2pt]
$\rho^{(x)}=0.5$ & $WW$--$IM$ & 0.430 & 0.120 && 0.513 & 0.152 && 0.554 & 0.200 \\
& $\eta =0$ & 0.328 & 0.188 && 0.471 & 0.251 && 0.514 & 0.245 \\ 
& $\eta =0.45$ & 0.280 & 0.061 && 0.423 & 0.078 && 0.456 & 0.149\\ 
& $\eta =0.49$ & 0.280 & 0.060 && 0.420 & 0.048 && 0.457 & 0.121\\ 
& $\eta =0.5$ & 0.282 & 0.062 && 0.419 & 0.049 && 0.460 & 0.130\\ 
\multicolumn{2%
}{l}{\textit{Endogeneity: $\rho ^{(xe)}=0.5$}} &  &  &  &  & 
&  &  &  \\ 
$\rho ^{(x)}=0$ & $WW$--$IM$ & 0.455 & 0.107 && 0.494 & 0.144 && 0.547 & 0.197\\
& $\eta =0$ & 0.378 & 0.199 && 0.501 & 0.256 && 0.529 & 0.251 \\ 
& $\eta =0.45$ & 0.337 & 0.049 && 0.451 & 0.060 && 0.470 & 0.150 \\ 
& $\eta =0.49$ & 0.341 & 0.051 && 0.452 & 0.029 && 0.470 & 0.122 \\ 
& $\eta =0.5$ & 0.346 & 0.055 && 0.450 & 0.030 && 0.472 & 0.130\\[2pt]
$\rho^{(x)}=0.5$ & $WW$--$IM$ & 0.458 & 0.122 && 0.518 & 0.148 && 0.551 & 0.197 \\
& $\eta =0$ & 0.350 & 0.197 && 0.497 & 0.259 && 0.532 & 0.252 \\ 
& $\eta =0.45$ & 0.304 & 0.048 && 0.447 & 0.061 && 0.471 & 0.144 \\ 
& $\eta =0.49$ & 0.313 & 0.050 && 0.452 & 0.033 && 0.465 & 0.116 \\ 
& $\eta =0.5$ & 0.317 & 0.055 && 0.457 & 0.039 && 0.469 & 0.125 \\[2pt]\hline
\end{%
tabular}
\label{tab:Table3b}
\end{table}

\newpage

\topmargin1.0cm
\textwidth14.25cm
\textheight20.0cm
\oddsidemargin1cm
\evensidemargin1cm

\begin{figure}[]
\caption{Histograms of estimated break dates for $WW$--$IM$, $\eta=0$ and $\eta=0.45$ test procedures under $H_{A,1}$}
\label{histHA1}
\centering
\begin{subfigure}{\textwidth}
\centering
\includegraphics[width=0.85\textwidth]{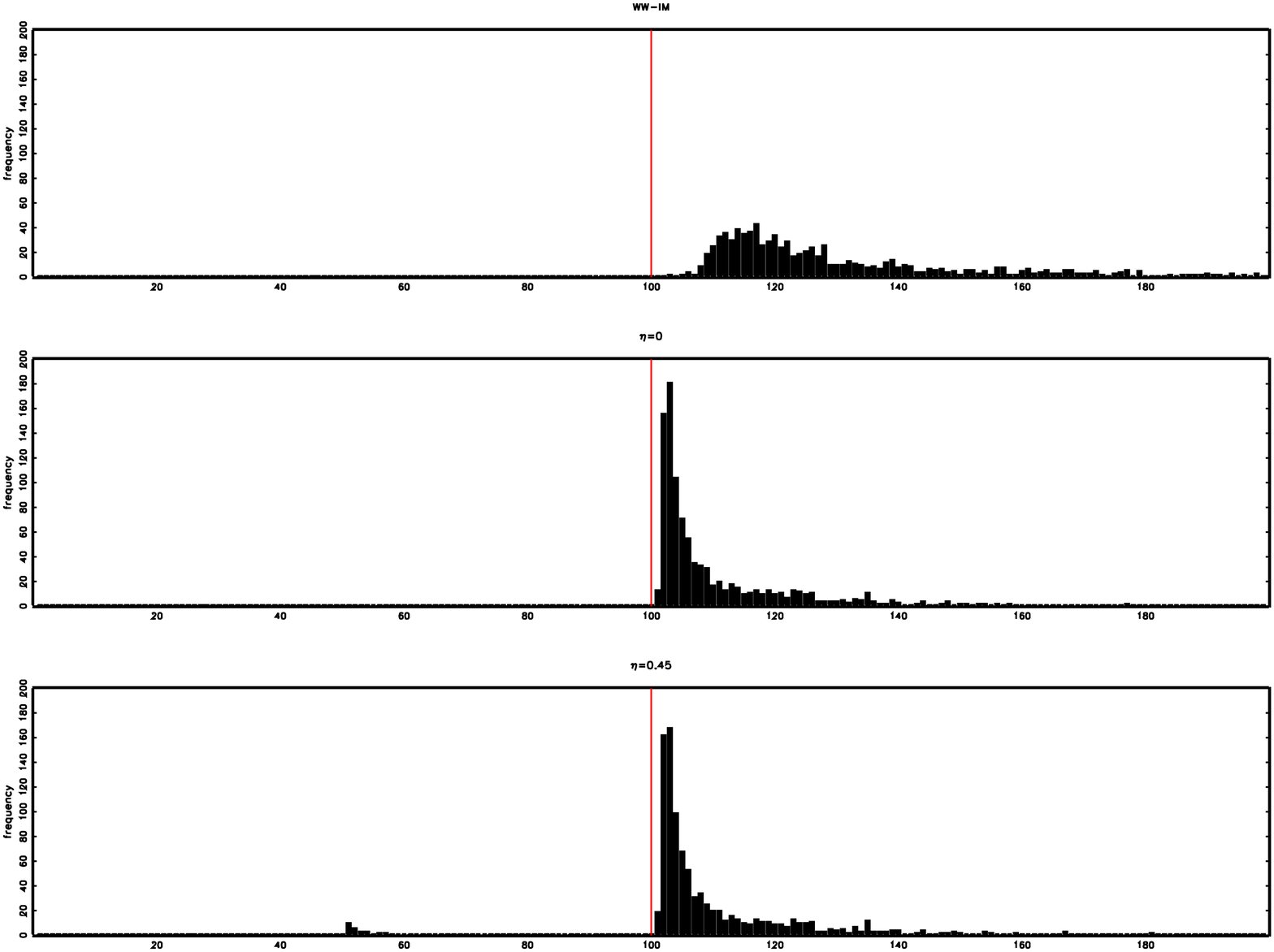}
\caption{$T=200$, $m=\frac{T}{4}$, $\Delta_{\beta}=1$}
\label{HA1_200}
\end{subfigure}
\par
\vspace{3.00mm}
\par
\begin{subfigure}{\textwidth}
\centering
\includegraphics[width=0.85\textwidth]{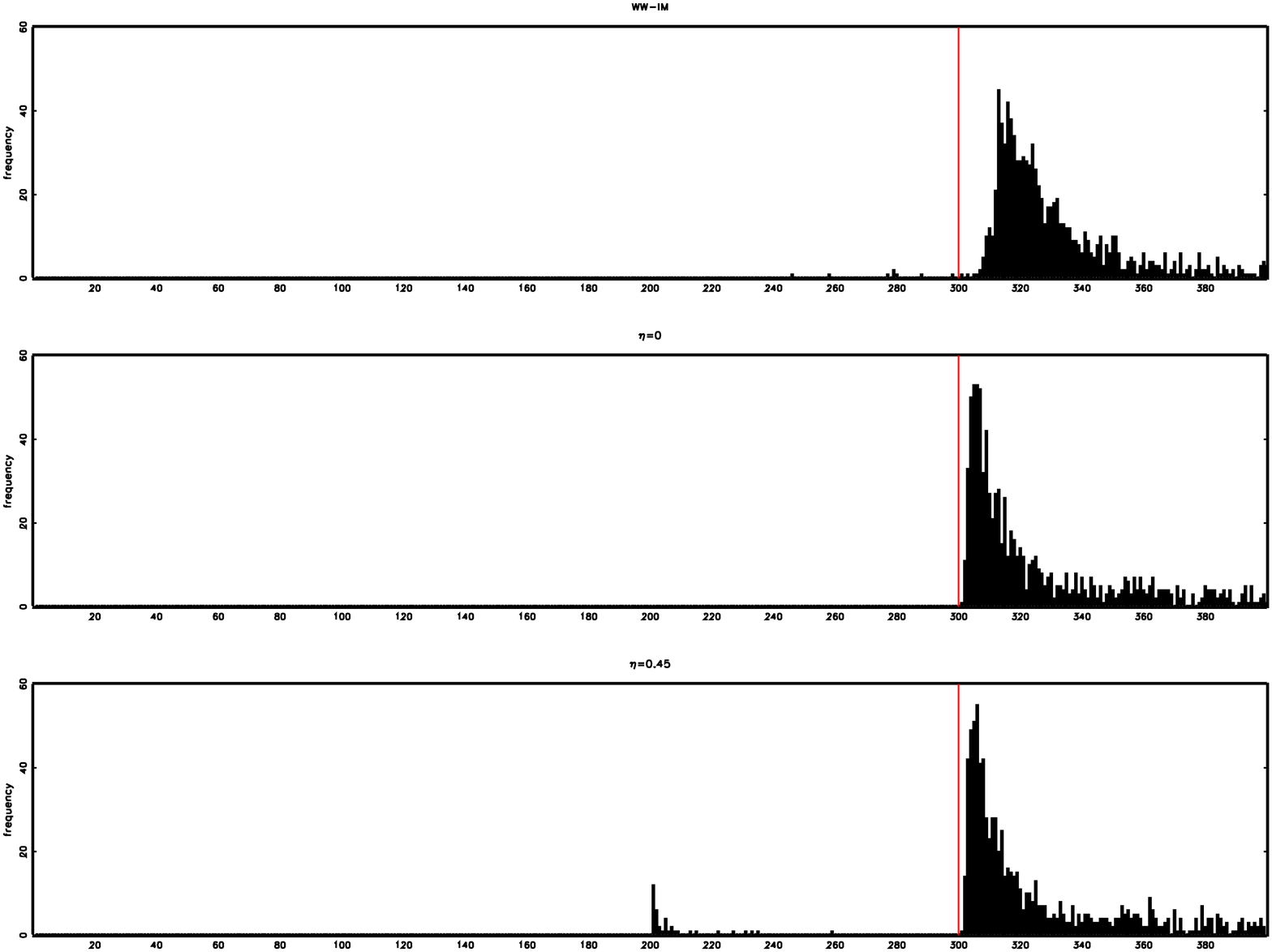}
\caption{$T=400$, $m=\frac{T}{2}$, $\Delta_{\beta}=0.5$}
\label{HA1_400}
\end{subfigure}
\captionsetup{justification=centering}	
\caption*{\textcolor{red}{---} $k^{*}$}
\end{figure}

\begin{figure}[]
\caption{Histograms of estimated break dates for $WW$--$IM$, $\eta=0$ and $\eta=0.45$ test procedures under $H_{A,2}$}
\label{histHA2}
\centering
\begin{subfigure}{\textwidth}
\centering
\includegraphics[width=0.85\textwidth]{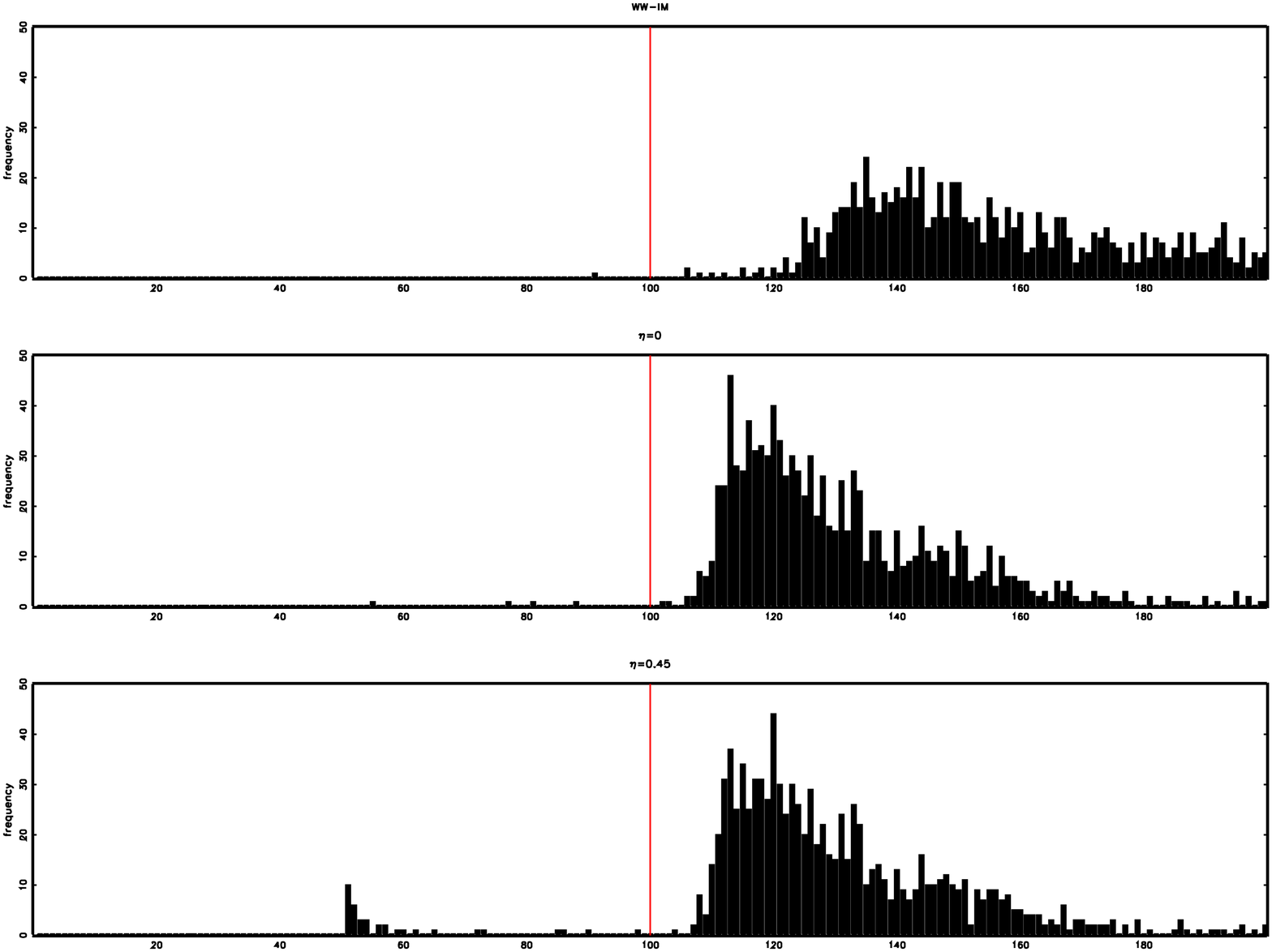}
\caption{$T=200$, $m=\frac{T}{4}$}
\label{HA2_200}
\end{subfigure}
\par
\vspace{3.00mm}
\par
\begin{subfigure}{\textwidth}
\centering
\includegraphics[width=0.85\textwidth]{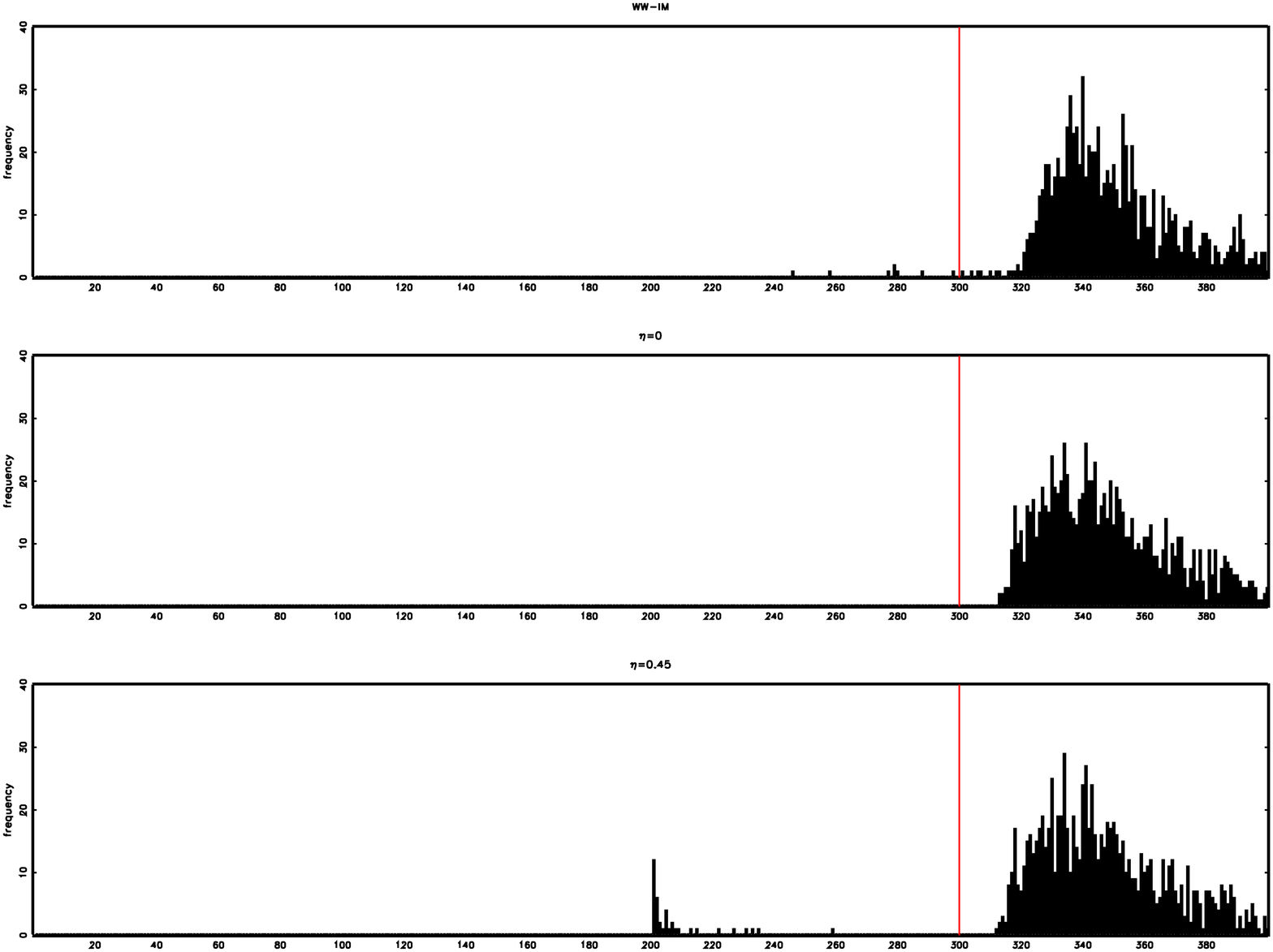}
\caption{$T=400$, $m=\frac{T}{2}$}
\label{HA2_400}
\end{subfigure}
\captionsetup{justification=centering}	
\caption*{\textcolor{red}{---} $k^{*}$}
\end{figure}

\begin{figure}[]
\caption{Residuals from price-to-rent and inverted demand models of housing with break date estimates}%
\label{empexgraphs}
\centering
\begin{subfigure}{.49\linewidth}
	\centering
		\includegraphics[width=\textwidth]{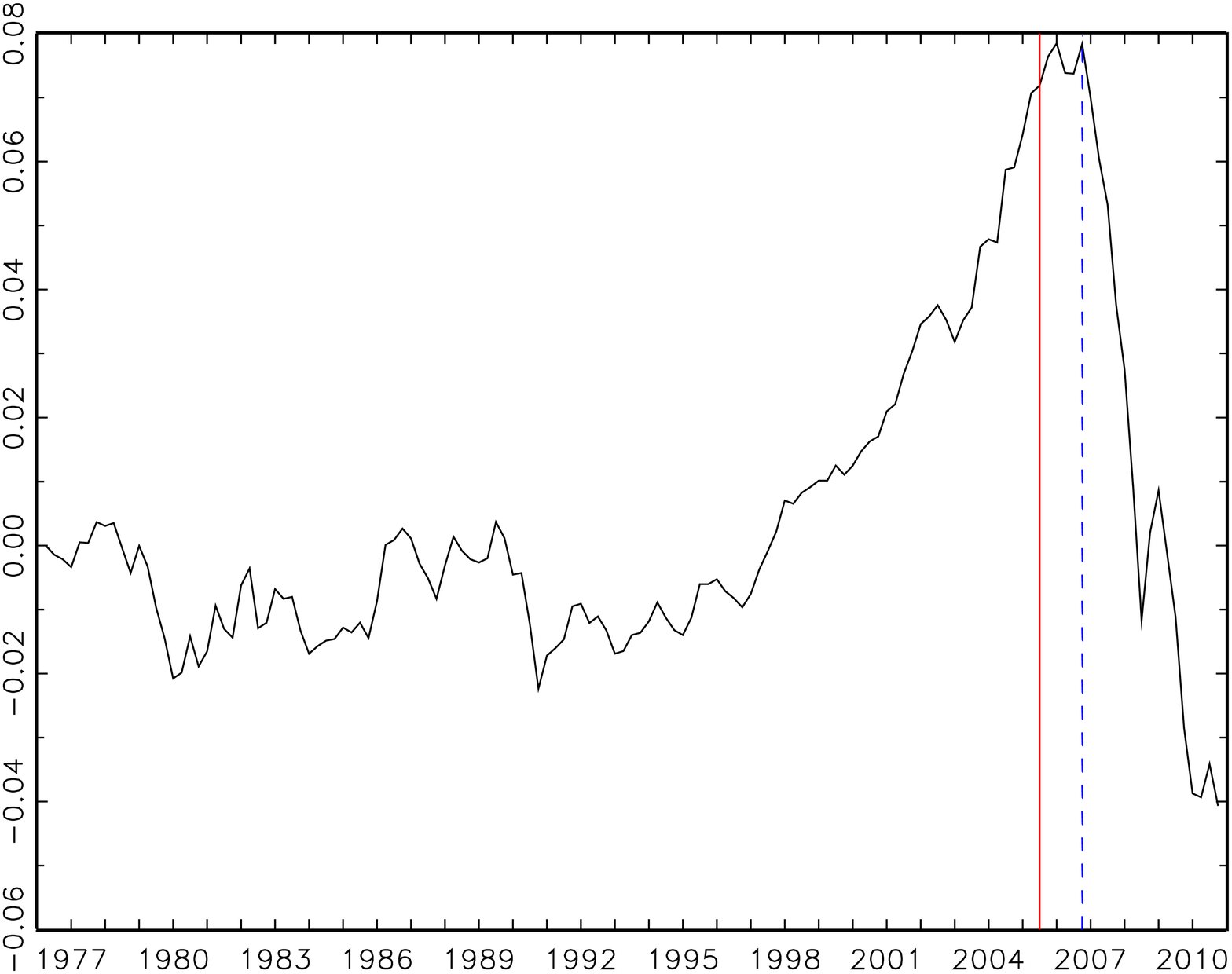} 
		\caption{Price-to-rent model}
		\label{ptr}
	\end{subfigure} 
\begin{subfigure}{.49\linewidth}
	\centering	
		\includegraphics[width=\textwidth]{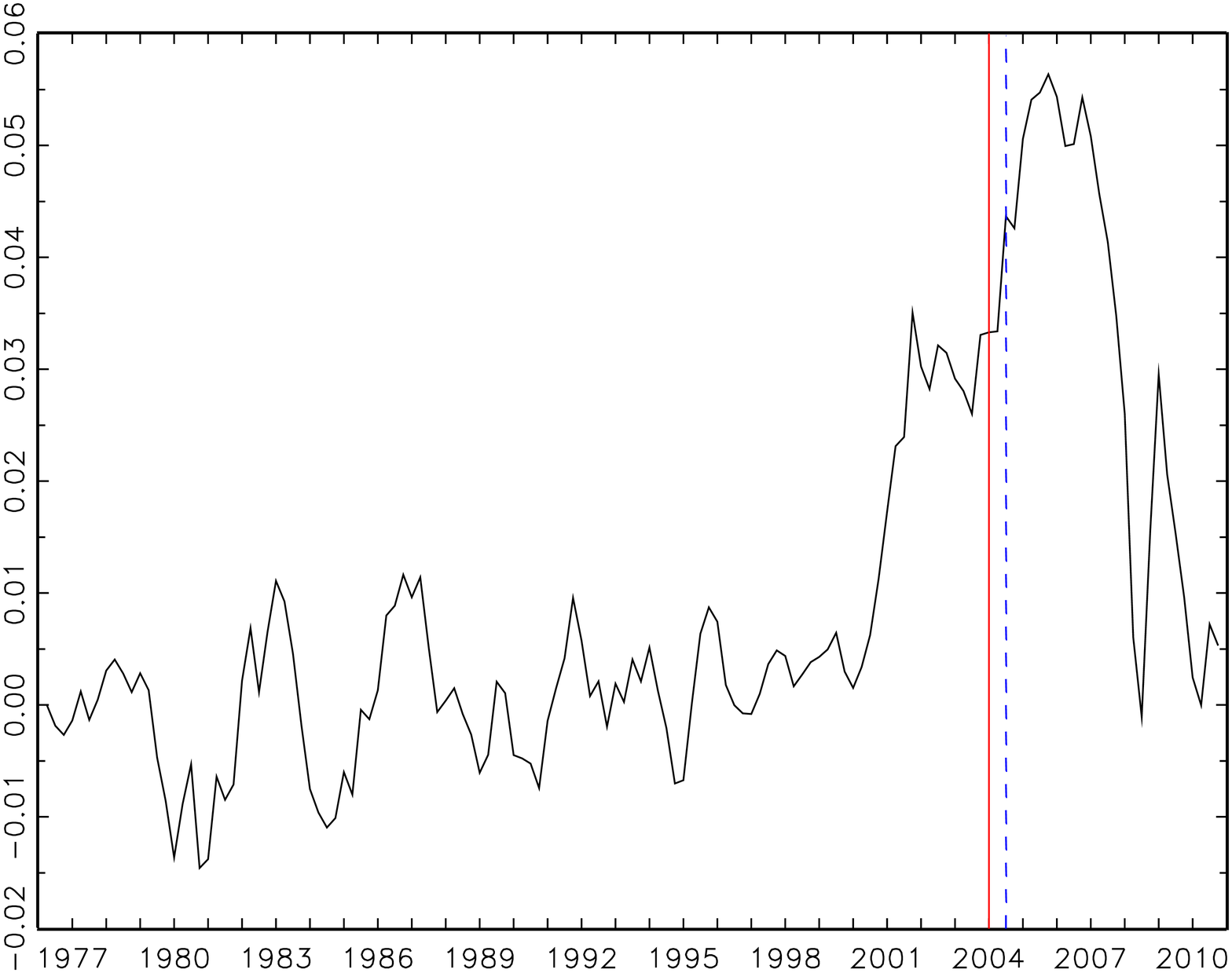}
		\caption{Inverted demand model}
		\label{id}
		\end{subfigure}
\captionsetup{justification=centering}	
\caption*{\textcolor{red}{---} $\widehat{k}_{m}$, \textcolor{blue}{- -} $WW$--$IM$ break date estimate\\
}
\end{figure}

\end{document}